\documentclass[11pt]{article}
\usepackage[margin=1in]{geometry}
\usepackage[utf8]{inputenc}
\usepackage[english]{babel}
\usepackage{caption}
\usepackage{subcaption}
\usepackage{array} 

\usepackage[dvipsnames]{xcolor}

\usepackage{amsmath, amssymb, amsthm,mathtools,dsfont}
\usepackage{bbm}
\PassOptionsToPackage{numbers,square,sort&compress}{natbib}
\usepackage{natbib}
\usepackage{blkarray}
\usepackage{cancel}
\usepackage{enumitem}
\usepackage{euscript}
\usepackage{float}
\usepackage{bm}
\usepackage{graphicx}
\usepackage{mathrsfs}
\usepackage{microtype}
\usepackage{ragged2e}
\usepackage{sectsty}
\usepackage{thmtools}
\usepackage{tikz}
\usepackage[Symbolsmallscale]{upgreek}

\usepackage{microtype}
\usepackage{graphicx}
\usepackage[pdfborder={0 0 0}]{hyperref}
\hypersetup{
  urlcolor = black,
  pdfauthor = {},
  pdfkeywords = {},
  pdftitle = {},
  pdfsubject = {},
  pdfpagemode = UseNone
}
\definecolor{darkgreen}{rgb}{0,0.5,0} %
\definecolor{darkblue}{rgb}{1,0,0} 
\definecolor{darkblue}{rgb}{1,0,0} 

\input{mathsymbols}

\usepackage{fancyhdr}

\theoremstyle{plain}

\begin{document}
\title{Bid--Ask Martingale Optimal Transport}

\newcommand\blfootnote[1]{%
  \begingroup
  \renewcommand\thefootnote{}\footnote{#1}%
  \addtocounter{footnote}{-1}%
  \endgroup
}

\author{
Bryan Liang%
 \thanks{
  Quantitative Research, Office of the CTO, Bloomberg. Email: bliang17@bloomberg.net. } \hspace{1em}
 Marcel Nutz%
  \thanks{
  Columbia University, Departments of Statistics and Mathematics.  Research supported by NSF Grants DMS-2106056, DMS-2407074. Email: mnutz@columbia.edu.} \hspace{1em}
  Shunan Sheng%
  \thanks{
  Columbia University, Department of Statistics. Email: ss6574@columbia.edu. Research initiated during an internship at Bloomberg.} \hspace{1em}
Valentin Tissot-Daguette%
  \thanks{
  Quantitative Research, Office of the CTO, Bloomberg. Email: vtissotdague@bloomberg.net}  
}
\date{\today}
\maketitle

\begin{abstract}
Martingale Optimal Transport (MOT) provides a framework for robust pricing and hedging of illiquid derivatives. Classical MOT enforces exact calibration of model marginals to the mid-prices of vanilla options. Motivated by the industry practice of fitting bid and ask marginals to vanilla prices, we introduce a relaxation of MOT in which model-implied volatilities are only required to lie within observed bid--ask spreads; equivalently, model marginals lie between the bid and ask marginals in convex order. The resulting Bid--Ask MOT (BAMOT) yields realistic price bounds for illiquid derivatives and, via strong duality, can be interpreted as the superhedging price when short and long positions in vanilla options are priced at the bid and ask, respectively. We further establish convergence of BAMOT to classical MOT as bid--ask spreads vanish, and quantify the convergence rate using a novel distance intrinsically linked to bid--ask spreads. Finally, we support our findings with several synthetic and real-data examples.
\end{abstract}

\vspace{.9em}

{\small
\noindent\textbf{Keywords} Martingale Optimal Transport, Financial Derivatives, Robust Hedging, Bid--Ask Friction
\\
\noindent\textbf{AMS 2020 Subject Classification} 
91G20,  	%
62P05,  %
60G42 %
\\
\noindent\textbf{JEL Classification} G11, G13, D52\\
\noindent\textbf{Acknowledgments} We would like to thank  Bruno Dupire, Martin Forde,  Sergey Nadtochiy, Walter Schachermayer, Mete Soner, and %
 Josef Teichmann %
 for helpful discussions on this topic. Shunan Sheng is supported by a Bloomberg Quantitative Finance Ph.D. Fellowship.
}

\section{Introduction}

Martingale Optimal Transport (MOT) \cite{beiglbock2013model,galichonTouziHL,HLBook, BeiglbockJuillet2016MOT, Nutz2017Duality} offers a powerful framework to price and hedge illiquid derivatives. Consider a claim %
$H$ contingent on the evolution of a liquidly traded asset $(X_t)_{0\leq t\leq T}$ up to some fixed maturity $T>0$. 
Suppose also that the static hedging instruments consist of all $T$-vanilla options (i.e., all put and call options with maturity $T$ and any strike $K\geq0$), in addition to dynamic trading in~$(X_t)_{0\leq t\leq T}$. 
Absent market frictions, one classically extracts a unique risk-neutral distribution $\mu$ consistent with the Black--Scholes implied volatility (IV) skew $\sigma(K,T)$. Indeed, assuming zero carry, we have by \cite{breeden1978prices} that 
\begin{equation} \label{eq:BL}
    \frac{\mu(\rd K)}{\dd K} = \partial_{K}^2c(K,T), \quad c(K,T) := c_{\text{BS}}(K,T,\sigma(K,T)),
\end{equation}
where $c_{\text{BS}}(K,T,\sigma)$ is the price of the $(K,T)$ call option in the Black--Scholes model with  volatility~$\sigma$. MOT then considers the set of all arbitrage-free models (martingale measures) for $(X_t)_{0\leq t\leq T}$ such that the marginal law $\mu^{\Q}_T$ of $X_T$ under $\Q$ coincides with $\mu$, leading to the  primal (or measure) problem  
\begin{equation}\label{eq:primalMOT}
    \sup_{\Q\in \cQ(\mu)} \E_{\Q}[H], \qquad  \cQ(\mu) = \big\{\Q \mid \text{martingale measure, } \mu^{\Q}_T = \mu \big\}.
\end{equation}
A similar logic applies when the static hedging instruments are available at multiple maturities $T_1, \dots ,T_N$, leading to marginal constraints $\mu^{\Q}_{T_i} = \mu_i$ for $i=1,\dots,N$. To wit, MOT requires admissible models to exactly match the market’s implied volatility (IV) skews, thus imposing hard constraints on the marginal distributions. The dual of MOT, discussed in more detail below, is a semi-static superhedging problem,  which assumes that vanilla options can be bought and sold at the prices implied by those IV skews, with no bid--ask friction.

\subsection{Bid--Ask Friction}
The present work develops an extension of MOT that accounts for the presence of bid--ask spreads for vanilla prices in real markets. Indeed, in practice, vanilla options are often represented in terms of  bid and ask IV skews $\sigma^{\rm b}(K,T) \le  \sigma^{\rm a}(K,T)$. 
Then a martingale measure $\Q$ is consistent with the options market if and only if the implied volatilities $\sigma^{\Q}(K,T)$ that it generates (see \cref{fig:IVIntro}) satisfy 
\begin{equation}\label{eq:IVSkews-1}
    \sigma^{\rm b}(K,T) \le \sigma^{\Q}(K,T) \le  \sigma^{\rm a}(K,T), \quad K\ge 0.
\end{equation}
Of course, order books only contain quotes for finitely many strikes at any given moment. Data vendors provide  bid/ask IV skews by  interpolating/extrapolating the available quotes. As detailed in \cref{sec:bid-ask-marginals}, a common approach for this task is to fit \emph{bid and ask marginal  distributions} $\mu^{\rm b}, \mu^{\rm a}$ within a parametric family to the available quotes. This motivates the primal formulation of our problem, which replaces the single (mid-price) marginal~$\mu$ of the MOT problem~\eqref{eq:primalMOT} by a set of possible marginals determined via $\mu^{\rm b}$ and $\mu^{\rm a}$ as follows. 
\begin{figure}[!ht]
    \centering
    \caption{Bid, model, and ask implied volatility skews for S\&P~500 Index (SPX) options expiring on 03-21-2025, as of 02-27-2025.}
    \includegraphics[width=0.55\linewidth]{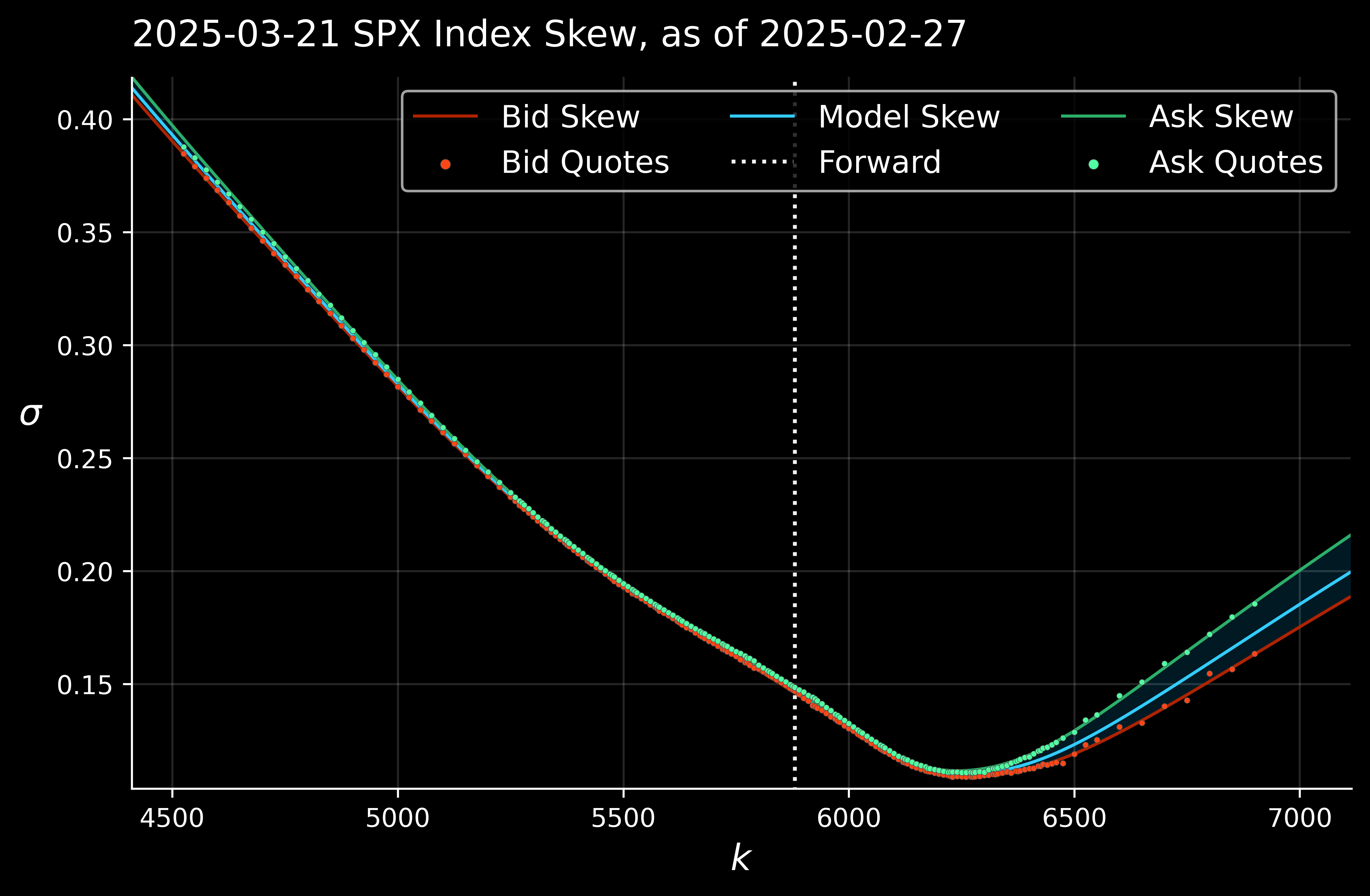}
    \label{fig:IVIntro}
\end{figure}

\subsection{Primal Formulation}\label{sec:primal-intro}

To be consistent with the bid and ask IVs, the possible marginals $\mu_T^{\Q} = \text{Law}^{\Q}(X_T)$ for the asset price $X_T$ must satisfy certain convex order constraints. Indeed, \eqref{eq:IVSkews-1} implies
\begin{equation*}
    \E_{\mu^{\rm b}}[(X-K)^+] = c_{\text{BS}}(K,T,  \sigma^{\rm b}(K,T))  \le  c_{\text{BS}}(K,T, \sigma^{\Q}(K,T)) = \E_{\mu^{\Q}_T}[(X-K)^+] %
\end{equation*}
for all $K\ge 0$ since $c_{\text{BS}}(K,T,\sigma)$ is nondecreasing in $\sigma$. This means that $\mu^{\rm b}$ and $\mu_T^{\Q}$ are in \emph{convex order,} denoted $\mu^{\rm b}\cleq\mu_T^{\Q}$, by the property of convex order recalled in~\eqref{eq:callCVX} below. Similarly, $\mu_T^{\Q} \cleq \mu^{\rm a}$.
This leads us to formulate the following extension of the MOT problem \eqref{eq:primalMOT}, which we call the \emph{Bid--Ask Martingale Optimal Transport} (BAMOT) problem:
\begin{equation}\label{eq:BAMOTIntro}
      \sup_{\Q\in \cQ(\mu^{\rm b}, \mu^{\rm a})} \E_{\Q}[H], \quad   \cQ(\mu^{\rm b}, \mu^{\rm a})  = \big\{\Q \mid \text{martingale measure, } \mu^{\rm b}\cleq \mu_T^{\Q} \cleq \mu^{\rm a}\big\}.
\end{equation}
More generally, we will consider a version of this problem with multiple maturities $T_1, \dots, T_N$. In contrast to its  classical counterpart, BAMOT captures model uncertainty about the underlying asset $(X_t)_{0\leq t\leq T}$ arising not only from its price dynamics, but also from its marginal distributions: while MOT prescribes the marginal~$\mu$ exactly, the bid and ask marginals merely imply inequality constraints for the marginal distribution of the asset. In particular, BAMOT is nontrivial even for claims that depend only on~$X_T$, such as a digital option, whereas MOT simply prices them by taking expectation under~$\mu$. In general, the value of \eqref{eq:BAMOTIntro} is no less than the value of \eqref{eq:primalMOT} (for any $\mu^{\rm b}\cleq \mu \cleq \mu^{\rm a}$), reflecting that the seller of the claim faces additional model uncertainty. %

\subsection{Dual Formulation}\label{sec:DualIntro} 

Next, we consider the effect of bid--ask spreads from the hedging perspective. 
In the classical MOT setting with a single marginal $\mu$, the dual (or portfolio) problem %
aims at minimizing the superhedging cost of the claim $H$, given by
\begin{equation*}
   \inf \{ \mu(\psi) \, \mid \, \psi \in \Psi(H) \}, \quad  %
    \Psi(H) = \Big\{ {\psi} \in L^1(\mu) \,\mid \, \exists \ {\Delta} \text{ such that  }  \textnormal{P\&L}_{\psi,\Delta}^H\ge 0  \Big\}.
\end{equation*}
Here $\Delta$ is a dynamic hedging strategy trading in the asset~$(X_t)_{0\leq t\leq T}$ and 
$$\textnormal{P\&L}_{\psi,\Delta}^H(X) = \psi(X_T) + (\Delta \bullet X)_T - H$$
is the profit and loss resulting from three terms: the static hedge $\psi$, which is constructed from vanilla options at cost $\mu(\psi):=\int \psi \dd\mu$, dynamic (self-financing) trading according to $\Delta$ leading to the stochastic integral $(\Delta \bullet X)_T$, and selling the claim $H$. For the static hedge $\psi$, it is sufficient (cf.~\cite{HLBook}) to look for profiles of the form 
\begin{equation}\label{eq:cashPlusCalls}
\psi(x) = \gamma^{\$} + \int_{\R_+}(x-K)^{+}\lambda(\rd K)
\end{equation}
where $\gamma^{\$}$ is a cash amount and $\lambda$ is a signed measure describing the weights of a call options portfolio (as eventual positions in the forward contract can be absorbed by the delta hedge). Granted that Fubini's theorem applies, 
the cost of $\psi$ reads 
$\mu(\psi) = \gamma^{\$} + \int_{\R_+}c(K,T)\lambda(\rd K)$, where $c(K,T) = \E_{\mu}[(X-K)^+]$.

In our problem formulation, we keep the assumption that trading in~$(X_t)_{0\leq t\leq T}$ is frictionless but highlight 
the static profile $\psi$, which is now exposed to the bid--ask frictions in the options market. We naturally assume that static hedging is limited to put and call options (and the forward). In fact, puts would be redundant given calls and the forward, hence we may suppose again that $\psi$ consists of a cash amount and a portfolio of call options as in~\eqref{eq:cashPlusCalls}.
In the presence of bid--ask spreads, 
we need to separate the contracts that were bought from those that were sold. To this end, consider the Jordan decomposition $\lambda = \lambda^{+} - \lambda^{-}$ of the portfolio weights into nonnegative measures $\lambda^{\pm}$. We can then decompose the static profile as
$$\psi = \psi^{\rm a} - \psi^{\rm b}, \quad \psi^{\rm a}(x) = \gamma^{\$} +  \int_{\R_+}(x-K)^{+}\lambda^+(\rd K), \quad \psi^{\rm b}(x) = \int_{\R_+}(x-K)^{+}\lambda^{-}(\rd K).$$
Since $\lambda^{\pm}$ are nonnegative, both $\psi^{\rm a}, \psi^{\rm b}$ are \emph{convex} profiles,  the first being bought at the ask and the second sold at the bid. 
Writing $c^{\rm b}(K,T)$ and $c^{\rm a}(K,T)$ for the bid and ask call option prices, the cost of $\psi$ becomes
$$ \gamma^{\$} + \int_{\R_+}c^{\rm a}(K,T)\lambda^{+}(\rd K) - \int_{\R_+}c^{\rm b}(K,T)\lambda^{-}(\rd K). $$
\Cref{fig:bidaskHeatmap} displays the bid--ask spread for put and call options on the S\&P~500 Index (SPX) across strikes and maturities as of the close of 02-07-2025. As can be seen, the spreads are non-negligible, even for liquid underlyings such as SPX. With bid and ask marginals $\mu^{\rm b} \cleq\mu^{\rm a}$ as in \cref{sec:primal-intro}, we can write the cost of the static profile succinctly as $\mu^{\rm a} (\psi^{\rm a})- \mu^{\rm b} (\psi^{\rm b})$. 
 \begin{figure}[!tbh]
    \centering
    \caption{Bid--ask spread ($\$$) of S\&P~500 Index (SPX) options as of 02-07-2025. }
    \includegraphics[width=0.65\linewidth]{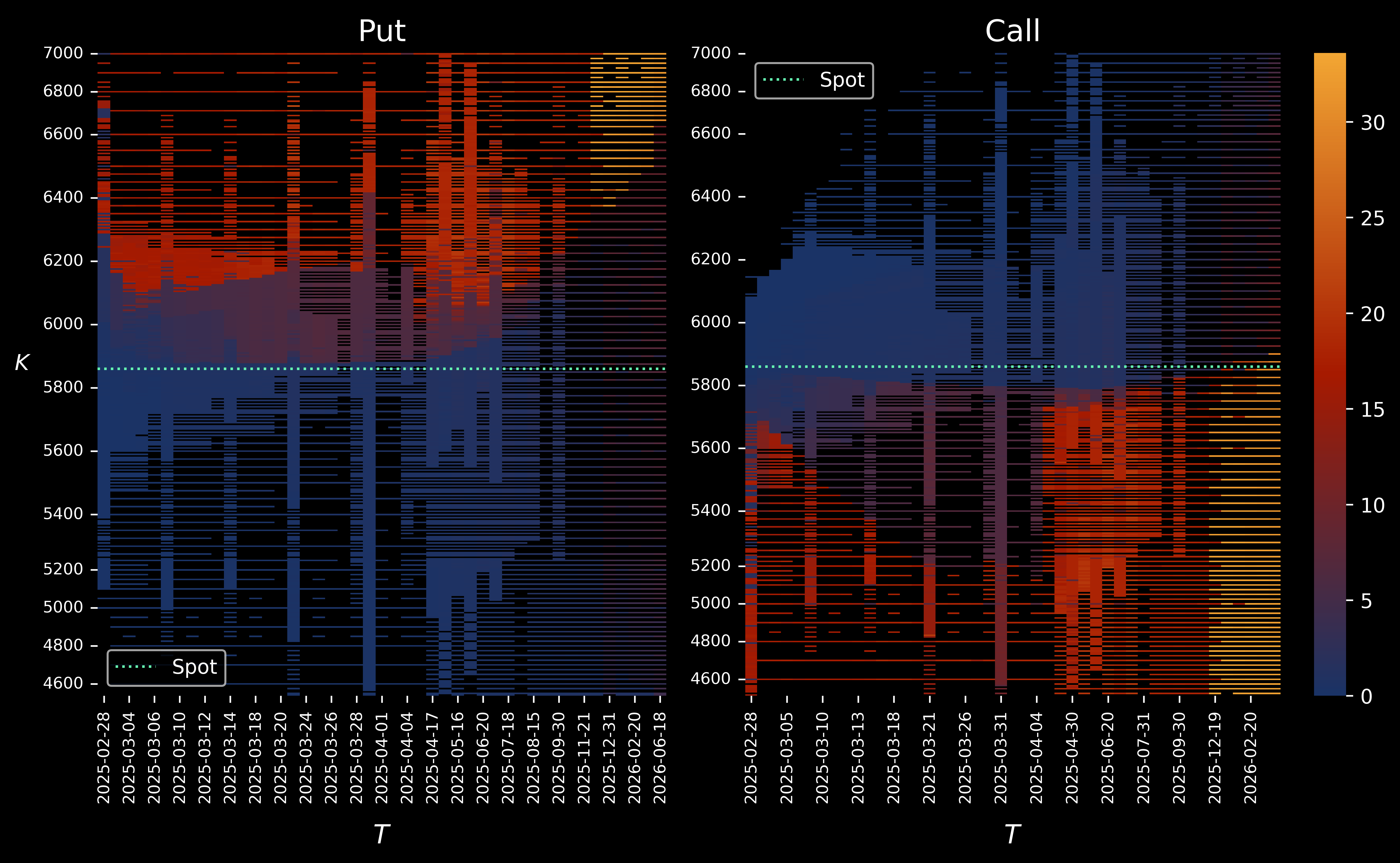}
    
    \label{fig:bidaskHeatmap}
\end{figure}

More generally, we formulate our dual problem over \emph{pairs} $\psi^{\rm b,a}=(\psi^{\rm a},\psi^{\rm b})$ of convex functions, with $\psi^{\rm a}$ corresponding to call options bought (plus a cash position) and $\psi^{\rm b}$ corresponding to call options sold.\footnote{We do not insist that $\psi^{\rm a}-\psi^{\rm b}$ be a Jordan decomposition. However, a Jordan decomposition has the minimal cost $\mu^{\rm a} (\psi^{\rm a})- \mu^{\rm b} (\psi^{\rm b})$ among all decompositions into convex functions.} %
Writing 
\begin{equation*}%
    \text{cost}(\psi^{\rm b,a}) = \mu^{\rm a} (\psi^{\rm a})- \mu^{\rm b} (\psi^{\rm b}) \quad\text{for}\quad \psi^{\rm b,a}=(\psi^{\rm a},\psi^{\rm b}),
\end{equation*}
we arrive at the dual formulation 
\begin{equation}\label{eq:dual-intro}
    \inf_{\psi^{\rm b,a} \in \Psi^{\rm b,a}(H)}  \text{cost}(\psi^{\rm b,a}), 
\end{equation}
where
$$\Psi^{\rm b,a}(H) = \big\{ \psi^{\rm b,a}=(\psi^{\rm a},\psi^{\rm b}) \text{ convex} \,\mid \, \exists \ {\Delta} \text{ such that } \textnormal{P\&L}_{\psi^{\rm a}-\psi^{\rm b},\Delta}^H\ge 0\big\}.$$
Our strong duality result (\cref{thm:strong-duality}) will prove that the optimal superhedging cost~\eqref{eq:dual-intro} coincides with the value of the primal problem~\eqref{eq:BAMOTIntro}. The presence of bid--ask spreads increases the value of the dual problem as $\text{cost}(\psi^{\rm b,a}) \ge \mu(\psi^{\rm a} - \psi^{\rm b})$ for any $\mu^{\rm b}\cleq \mu \cleq \mu^{\rm a}$. We will see in \cref{sec:numerics} that the value may differ substantially from the price under the mid-marginal $\mu=(\mu^{\rm a}+\mu^{\rm b})/2$ given real market data.

In contrast to the dual in classical MOT, \eqref{eq:dual-intro} is nontrivial even for vanilla claims $H=h(X_T)$. %
Even if $h=\psi^0-\psi^1$ is itself a difference of convex functions, e.g., via %
Jordan decomposition, there may be convex pairs $(\psi^{\rm a},\psi^{\rm b})=:\psi^{\rm b,a}$ such that $\psi^{\rm a}-\psi^{\rm b} \geq h$ and $\text{cost}(\psi^{\rm b,a}) < \mu^{\rm a}(\psi^0) - \mu^{\rm b}(\psi^1)$. That is, the \emph{Jordan decomposition can be suboptimal} for the dual problem. A concrete example is shown in \cref{fig:BS_numerics}, where the risk-reversal strategy $h(x)=(x-1.05)^+-(0.95-x)^+$ is optimally superhedged by a linear combination of \emph{three} call options.

\subsection{Contributions}

On the conceptual side, we introduce 
a novel generalization of martingale optimal transport that is consistent with bid--ask spreads for vanilla options. While directly using order book data of bid and ask quotes would lead to a general linear programming problem, a key observation is that the existing market practice of fitting bid and ask marginal distributions opens the door to a tractable formulation in the realm of optimal transport.

On the theoretical side, in \cref{sec:BAMOT}, we first formalize the corresponding primal (transport) problem and the dual (superhedging) problem. We also describe more precisely the relation between discrete bid and ask quotes in order books and the corresponding idealized marginal distributions. In \cref{sec:duality}, we then establish strong duality: the BAMOT primal value~\eqref{eq:BAMOTIntro} coincides with the dual superhedging cost~\eqref{eq:dual-intro} for any upper semicontinuous payoff $h(X_{T_1},\dots,X_{T_N})$ with at most linear growth. Methodologically, we first show strong duality for the single-maturity case by a Hahn--Banach argument (\cref{thm:band-duality}). This case corresponds to superhedging a payoff $h(X_T)$ by a difference $\psi^{\rm a} - \psi^{\rm b}$ of convex functions of $X_T$ and is genuinely novel compared to MOT. We then show strong duality for the general, multi-maturity case (\cref{thm:strong-duality}) by combining the single-period result with MOT duality via a minimax argument.

\Cref{sec:convergence} continues the theoretical analysis by studying the relationship between BAMOT and the classical MOT problem without bid--ask spreads. \Cref{prop:consistency} establishes consistency, in the sense that the BAMOT value converges to the MOT limit as the bid--ask spread vanishes. By duality, the optimal superhedging costs then also converge. In the important special case of a single maturity, we further provide a quantitative analysis. To that end, we introduce a novel metric, the \emph{bid--ask distance}, which aligns with the spreads of quoted vanilla options and captures the convex ordering between marginals. We then show a linear rate of convergence of the BAMOT price for payoffs, which are differences of convex functions (\cref{prop:MidEstimate}), and a square-root rate for upper semicontinuous payoffs (\cref{prop:USC_sqrt_rate}). Both rates are shown to be numerically sharp in examples, namely, for a risk-reversal strategy and an at-the-money digital option (\cref{sec:convDigital}).

On the practical side, we provide analytical and computational examples illustrating the formulation and its implications. In \cref{sec:digital_options}, we derive a closed-form solution for digital call options in a one-sided market and validate the impact of bid--ask frictions using real market data on the S\&P~500 Index (SPX). Our findings reveal that the common practice of pricing digital options using the mid-price marginal can substantially underestimate (superhedging) prices relative to BAMOT in the presence of bid--ask frictions. In \cref{sec:forward-start}, we compare BAMOT super- and subhedging bounds for a forward-start payoff with those obtained under classical MOT. Finally, in \cref{sec:convDigital}, we document the convergence behavior as bid--ask spreads shrink for a risk-reversal strategy and an at-the-money digital call option.

\subsection{Related Literature}
Addressing market friction arising from bid--ask spreads has a long history in robust finance (see, e.g., \citep{BayraktarZhang2016, BouchardNutz2015, CheriditoKupperTangi2017, DupireBundle, JOUINIKALLAL} and the references therein). Transaction cost analysis was pioneered by \cite{JOUINIKALLAL}, which introduced a sublinear pricing operator %
to capture the bid--ask spread of trading a contingent claim. 
It is shown that the superhedging cost of the claim
agrees with the maximum contract value among all martingale measures consistent with the bid and ask prices of the hedging instruments. 
Closer to our robust finance framework, \cite{DolinskySoner2014} utilized sublinear operators to describe the cost of static hedging and established general duality results that also include proportional transaction costs for dynamic hedging. 
In the present work, we assume that costs for trading in the underlying are negligible relative to the bid--ask spreads of the vanilla options, as is typically satisfied in practice for moderate trading frequency. Moreover, we focus on a particular model close to industry practices, rather than aiming for general results.
A related duality was recently established by \cite{DupireBundle} in the context of generalized Limit Order Books (LOB) involving bundles of securities. 

Over the past several years, researchers have increasingly leveraged MOT to address market frictions. However, many works focus on capturing bid--ask spreads in less liquid exotic options~\cite{FahimHuangJui, HenryLabordere2013}. In particular, these studies assume that vanilla options can still be bought and sold at a unique price, leading to exact calibration of the marginal distributions. Other approaches within the MOT framework specifically accommodate the uncertainty in inferring marginal distributions from market spreads. For instance, \cite{HouObloj2018} introduced \textit{$\eta$-market models}, which utilize a uniform tolerance parameter $\eta$ for all vanilla options, effectively resulting in a constant bid--ask spread across all strikes. Alternatively, \cite{zhou2021} directly optimized over a set of martingale measures whose marginals reside within a pre-defined $\varepsilon$-tolerance set. Nonetheless, selecting appropriate hyperparameters for $\eta$ and $\varepsilon$ is often challenging in practice. By contrast, the bid and ask marginals imposed in the present work are motivated by industry practice and can be directly calibrated to the bid and ask prices observed in market data. We mention that \cite{Neufeld2022} took steps toward addressing these concerns, albeit from a more empirical perspective: the authors proposed a neural network-based approach to solve the dual problem, incorporating an objective function that explicitly accounts for the discrepancy between bid and ask prices.

In a different direction, not directly related to optimal transport, the modeling of bid--ask spreads using separate marginals has been explored through ``lower" and ``upper" measures by \cite{MadanSchoutens, madan2023modeling}. In these studies, positions are grouped by directional sentiment: bullish views (long calls, short puts) are valued under the upper measure, while bearish views (long puts, short calls) utilize the lower measure. Consequently, the ask prices for calls and puts originate from different distributions. While empirically robust, this framework does not provide the unified marginal constraints required for MOT, where bid and ask prices are recovered from a single consistent pair of distributions.

Finally, our results in \cref{sec:convergence} about convergence of BAMOT for shrinking bid--ask spreads are related to the literature on the stability of MOT, that is, its behavior as a function of the given marginals. %
Here~\cite{Juillet2016} derived the stability of the left-curtain coupling, thereby establishing stability results for MOT. Independently, \cite{Wiesel2023,BeiglbockJourdainMargheritiPammer2023} obtained continuity of the value functions and optimal couplings by exploiting the adapted Wasserstein topology. Subsequent works~\cite{NeufeldSester2021,BeiglbockJourdainMargheritiPammer2022} further confirmed this topology as the natural framework for studying stability in MOT. On the computational side, \cite{GuoObloj} investigated stability in order to provide guarantees for numerical methods when the marginals are approximated discretely. %

\subsection{Notation}

Let $\cC_L(\R^N_+)$ be the Banach space  of 
 continuous  functions $f: \R^N_+\to\R$ with linear growth, endowed with the norm 
 $$\lVert f \rVert_L := \sup_{x\in \R_+^N}\frac{|f(x)|}{1 + \|x\|}.$$ 
 Introduce also the larger set $\textnormal{USC}_L(\R_+^N)$ of upper semicontinuous (u.s.c.) functions  with linear growth, and the smaller set $\text{CVX}_L(\R_+^N)$ of convex functions with linear growth. When $N= 1$, we drop the dependence on the ambient space and simply write $\cC_L, \text{CVX}_L, \textnormal{USC}_L$.

Let $\cP_1(\R^N_+)$ be the set of probability measures on $\R^N_+$ with finite first moment, and abbreviate again $\cP_1 = \cP_1(\R_+)$. The 1-Wasserstein distance $\cW_1(\mu,\nu)$ is defined via 
$
 \cW_1(\mu,\nu) = \inf \int \|x-y\| \PP(\rd x,\rd y),
$
where the infimum is taken over all couplings $\PP$ of $(\mu, \nu)$, i.e., all $\PP \in \cP(\R_+^N\times \R_+^N)$ with marginals $(\mu,\nu)$. Integrals are denoted, interchangeably, by  
 $$\mu(f) = \E_{\mu}[f]= \int f \dd\mu. $$%
We say that $\mu,\nu \in \cP_1$ are in \textit{convex order} ($\mu\cleq \nu$) if
$\mu(\psi) \le \nu(\psi)$ for all $\psi\in \text{CVX}_L$. In fact, it is sufficient to verify the latter inequality for call payoffs, provided that the measures have identical barycenters    \begin{equation}\label{eq:callCVX}
    \mu \cleq \nu \qquad  \Longleftrightarrow \qquad \E_\mu[X] = \E_\nu[X] \quad\text{and}\quad \E_\mu[(X-K)^+]   \leq  \E_\nu[(X-K)^+] \quad \forall \ K\ge 0.
    \end{equation}
See, e.g., \cite{Shaked2007} for further background on  convex order, and in particular \cite[Theorem~3.A.1]{Shaked2007} for the above equivalence.

\paragraph{Organization}
The remainder of the paper is organized as follows. \Cref{sec:BAMOT} presents a practitioners' construction of bid--ask marginals and formulates the primal and dual BAMOT problems. \Cref{sec:duality} establishes strong duality. \Cref{sec:convergence} proves consistency and derives the convergence rate in the single-maturity case $N=1$. \Cref{sec:numerics} presents illustrative examples and numerical results. We conclude by discussing possible directions for future work in \cref{sec:future-work}.  
\cref{sec:existenceBidAsk} discusses the existence of bid and ask measures. %
Technical proofs and other complementary materials are presented in  \cref{app:proof,app:LP}.

\section{Bid--Ask MOT}\label{sec:BAMOT}

This section introduces the primal (measure) problem and the dual (portfolio) problem of BAMOT in detail. Both problems depend crucially on the assumption of bid and ask marginals that induce the prices of liquidly traded call and put options. Working with such marginals is market practice, offering  a flexible method to calibrate bid--ask implied volatility skews, while  ruling out  static arbitrage. \cref{fig:IVIntro} replicates a typical market interface for implied volatility skews, such as the OVDV function on the Bloomberg Terminal. %
On the other hand, a snapshot of real-world order book data may contain quotes that cannot be exactly reproduced by two distributions. The next subsections explain how practitioners typically calibrate bid and ask marginals to quotes; see \cite{brigoMercurio} and the OVDV documentation available on the Bloomberg Terminal. 

\subsection{Bid and Ask Marginals}\label{sec:bid-ask-marginals}

Next, we describe a common procedure used by practitioners to construct bid and ask marginals by calibrating to bid and ask quotes of out-of-the-money (OTM) put and call options.\footnote{OTM options are chosen as they are usually more liquid and therefore lead to more reliable and consistent quotes.} Fix a maturity $T$, let $F_T, x_0$ denote, respectively, the forward and spot price of the underlying, let $\{K_m\}_{m\in[M]}$ be the set of available strikes, and let $\{\text{OTM}^{\rm a}(K_m,T)\}_{m\in[M]}$ and $\{\vartheta^{\rm a}(K_m,T)\}_{m\in[M]}$ denote, respectively, the ask prices and Vegas of the corresponding OTM options (put if $K_m< F_T$ and call otherwise). %

Practitioners then select a parametric family of densities $\{\phi^\theta\mid \theta\in\Theta\}$, where $\Theta$ is chosen so that the forward price is matched (i.e., $\int x \phi^\theta(x)\dd x = F_T$ for all $\theta\in\Theta$), and solve the  optimization
\begin{align*}
   \theta^{\rm a} \in \arg\min_{\theta \in \Theta} \sum_{m\in[M]}
   \bigg|\frac{\text{OTM}^{\rm a}(K_m,T) - \text{OTM}^{\theta}(K_m,T)}{\vartheta^{\rm a}(K_m,T)}\bigg|^2,
\end{align*}
where $\text{OTM}^{\theta}(K,T)$ denotes the OTM option price under $\phi^\theta$. Here pricing errors are weighted by Vegas to approximately convert price discrepancies into errors in implied volatilities~(IVs). The resulting distribution $\mu^{\rm a}(\rd x)=\phi^{\theta^{\rm a}}(x)\dd x$ is referred to as the calibrated \emph{ask marginal}. One then perturbs the fitted parameters $\theta^{\rm a}$ to obtain bid parameters $\theta^{\rm b}$ that fit bid quotes, and sets $\mu^{\rm b}(\rd x)=\phi^{\theta^{\rm b}}(x)\dd x$ as the corresponding \emph{bid marginal}. In particular, one uses a perturbation such that $\mu^{\rm b}\cleq \mu^{\rm a}$ to rule out \emph{static arbitrage}.

A common choice of parametric family is mixtures of log-normal densities \cite{brigoMercurio}. %
Suppose the ask marginal is calibrated as a log-normal mixture with $J$ components, i.e.,
\[
\phi^{\rm a}(x)=\sum_{j=1}^J w_j\,\phi_{\rm BS}(x;z_j,\sigma_j^{\rm a}),
\qquad \sum_{j=1}^J w_j=1,
\]
where $\phi_{\rm BS}(\cdot;z,\sigma)$ denotes the density of a log-normal distribution with mean and volatility parameters $(z,\sigma)\in\R_+^2$. The bid marginal is then constructed by scaling down the volatility parameters, while keeping the remaining parameters unchanged:
\[
\phi^{\rm b}(x)=\sum_{j=1}^J w_j\,\phi_{\rm BS}(x;z_j,\sigma_j^{\rm b}),
\]
for some calibrated $\sigma_j^{\rm b}\leq \sigma_j^{\rm a}$, $j\in[J]$.

Given multiple maturities $T_1 <  T_2 <  \cdots < T_N$, we apply the above procedure at each maturity and obtain bid and ask marginals $\mu_i^{\rm b},\mu_i^{\rm a}$ such that $\mu_i^{\rm b}\cleq \mu_i^{\rm a}$ for each $i\in[N]:=\{1,\dots, N\}$.

For simplicity, assume zero interest rate and no carry cost.\footnote{With positive interest rates and/or dividends, the marginals $\mu_i^{\rm b},\mu_i^{\rm a}$ should be calibrated from \emph{discounted, dividend-adjusted quotes}.} In this case, $F_{T_i}=x_0$ for all $i\in[N]$. The absence of \emph{calendar arbitrage} across maturities suggests the cross-maturity condition $\mu_i^{\rm b}\cleq \mu_j^{\rm a}$ for any $1\leq i\leq j\leq N$. Indeed, if there exists $K\geq 0$ such that $\E_{\mu_i^{\rm b}}[(X-K)^+]>\E_{\mu_j^{\rm a}}[(X-K)^+]$, then one could sell the $(K,T_i)$ call at a price  $\E_{\mu_i^{\rm b}}[(X-K)^+]$ and buy the $(K,T_j)$ call at a price   $\E_{\mu_j^{\rm a}}[(X-K)^+]$; together with dynamic trading in the underlying, this yields an arbitrage.

The above discussion motivates the following assumption, which we adopt throughout the paper.

\begin{assumption}\label{asm:bidAskConvexOrder}
There exist bid and ask marginals $\mu^{\rm b}=(\mu^{\rm b}_1,\ldots,\mu^{\rm b}_N)$ and $\mu^{\rm a}=(\mu^{\rm a}_1,\ldots,\mu^{\rm a}_N)$ with finite first moment, which determine the bid and ask prices of all call and put options with the respective maturities and satisfy
\begin{equation}\label{eq:bidAskConvexOrder}
\mu_i^{\rm b}\cleq \mu_j^{\rm a}, \qquad \forall\, 1\leq i \leq j \leq N .
\end{equation}
\end{assumption}

\begin{remark}
\begin{enumerate}[label=(\alph*)]
\item 
Condition~\eqref{eq:bidAskConvexOrder} rules out butterfly arbitrage;  for any $i\in[N]$, $K_1<K_2<K_3$ and $K_2=\lambda K_1+(1-\lambda)K_3$ with $\lambda\in(0,1)$, 
\begin{equation}\label{eq:butterfly}
\lambda\, \E_{\mu_i^{\rm a}}[(X-K_1)^+] - \E_{\mu_i^{\rm b}}[(X-K_2)^+] + (1-\lambda)\, \E_{\mu_i^{\rm a}}[(X-K_3)^+] \geq 0.
\end{equation}
Indeed, $\E_{\mu_i^{\rm b}}[(X-K_2)^+] \leq \E_{\mu_i^{\rm a}}[(X-K_2)^+]$ by~\eqref{eq:bidAskConvexOrder}, so the left-hand side of~\eqref{eq:butterfly} is bounded from below by
$\E_{\mu_i^{\rm a}}[B(x;K_1,K_2,K_3)]$, with the butterfly spread  
$$B(x;K_1,K_2,K_3)=\lambda (x-K_1)^+-(x-K_2)^++(1-\lambda)(x-K_3)^+.$$ Since $B(\cdot;K_1,K_2,K_3)\geq 0$, its price under $\mu_i^{\rm a}$ (and hence under any admissible bid--ask calibration) must be nonnegative.

\item By a suitable choice of the bid and ask marginals, our framework covers situations in which the payoff depends on maturities for which vanilla options are not liquidly traded. For example, if $N=2$ and the payoff is $H=h(X_{T_1},X_{T_2})$, while vanillas are available at $T_2$ but not at $T_1$, one may take $\mu^{\rm b}_1:=\delta_{x_0}$ and $\mu^{\rm a}_1:=\mu^{\rm a}_2$, thereby leaving the $T_1$-marginal effectively unconstrained, while still enforcing the cross-maturity condition~\eqref{eq:bidAskConvexOrder}. Such a situation naturally arises when the product references an intermediate fixing date, e.g., forward-start, whereas market quotes are only liquid at coarser maturities.
\end{enumerate}
\end{remark}

\subsection{Primal (Measure) Problem}
Fix maturities $T_1<  T_2 <  \cdots  <  T_N$ with corresponding bid marginals $\mu^{\rm b}=(\mu^{\rm b}_1,\dots,\mu^{\rm b}_N)$ and ask marginals $\mu^{\rm a}=(\mu^{\rm a}_1,\dots,\mu^{\rm a}_N)$. We write $X_1, \dots, X_N$ for the canonical process representing the spot price, where we identify $X_i=X_{T_i}$ for notational simplicity. By convention, the initial price $X_0:=x_0\in\R_+$ is deterministic. Thus, a \emph{martingale measure} is a probability measure $\Q$ on $\R_+^N$ such that $(X_i)_{0\leq i\leq N}$ is a $\Q$-martingale. The set of \emph{calibrated} martingale measures is
\begin{equation}\label{eq:martingale-couplings}
    \cQ(\mu^{\rm b}, \mu^{\rm a}):= \left\{\Q \mid \text{martingale measure}, \,\, \mu_i^{\rm b}\cleq \mu_i^{\Q}\cleq \mu_i^{\rm a},\,\, i \in [N]\right\},
\end{equation}
where $\mu_i^{\Q}$ denotes the $i$-th marginal distribution of  $\Q$. 
Fix a contingent claim $H = h(X_1, \ldots, X_N)$ for some   payoff function $h\in \text{USC}_L(\R^N_+)$. Then the \emph{primal} (or \emph{measure}) problem is defined as
\begin{equation}\label{eq:BAMOT-primal}
            P(h):=\sup_{\Q \in \cQ(\mu^{\rm b}, \mu^{\rm a})} \E_\Q[h(X_1, \ldots, X_N)].
\end{equation}
We observe that 
\[
    \text{Assumption } \eqref{eq:bidAskConvexOrder} \quad \Longleftrightarrow \quad \cQ(\mu^{\rm b}, \mu^{\rm a})\neq \emptyset.
\]
Indeed, define $\underline{\mu}_i^{\rm a}=\min_{j\geq i}\mu_j^{\rm a}$, where the infimum is with respect to the convex order. Then $\underline{\mu}_i^{\rm a}\cleq \underline{\mu}_j^{\rm a}$ for all $i \leq j$ and $\mu_i^{\rm b}\cleq \underline{\mu}_i^{\rm a}\cleq \mu_i^{\rm a}$ for all $i$. By Strassen's theorem, the former implies the existence of a martingale measure $\Q$ with marginals $(\underline{\mu}_1^{\rm a},\dots,\underline{\mu}_N^{\rm a})$, and $\Q\in \cQ(\mu^{\rm b}, \mu^{\rm a})$ by the latter. Conversely, if $\Q\in \cQ(\mu^{\rm b}, \mu^{\rm a})$, let $(\mu_1,\dots,\mu_N)$ be its vector of marginals. The martingale property implies $\mu_i\cleq \mu_j$ for all $i \leq j$ and then $\Q\in \cQ(\mu^{\rm b}, \mu^{\rm a})$ implies~\eqref{eq:bidAskConvexOrder}.     

\begin{remark}\label{rem:peacockEnvelopes}
    Consider again the lower envelope $\underline{\mu}_i^{\rm a}:=\min_{j\geq i}\mu_j^{\rm a}$ used in the above argument, and similarly the upper envelope $\overline{\mu}_i^{\rm b}:=\max_{j\leq i}\mu_j^{\rm b}$. We see that~\eqref{eq:bidAskConvexOrder}, and in turn $\cQ(\mu^{\rm b}, \mu^{\rm a})\neq\emptyset$, are equivalent to $\overline{\mu}_i^{\rm b}\cleq \underline{\mu}_i^{\rm a}$ for all $i$. That is, the prices of convex payoffs with respect to the lower and upper envelopes do not cross, as illustrated in \cref{fig:termStructure} for an illiquid, in-the-money call on the S\&P~500 Index (SPX).   %
It turns out that the 
    bid and ask flows of marginals can be made increasing with respect to convex order by enhancing the quotes \emph{across maturities}, see Appendix~\ref{app:quoteAcrossMaturities}. In this case, the flows coincide with the peacock envelopes $\overline{\mu}^{\rm b}, \underline{\mu}^{\rm a}$. We  favor the minimal assumption \eqref{eq:bidAskConvexOrder} here  for greater generality.
\end{remark}

 \begin{figure}[H]
    \centering
    \caption{ Term structure of  S\&P~500 Index  (SPX) call options struck at $K=5550$ as of 2025-02-27,  and corresponding price envelopes. }%
    \includegraphics[width=3.2in, height = 2.2in]{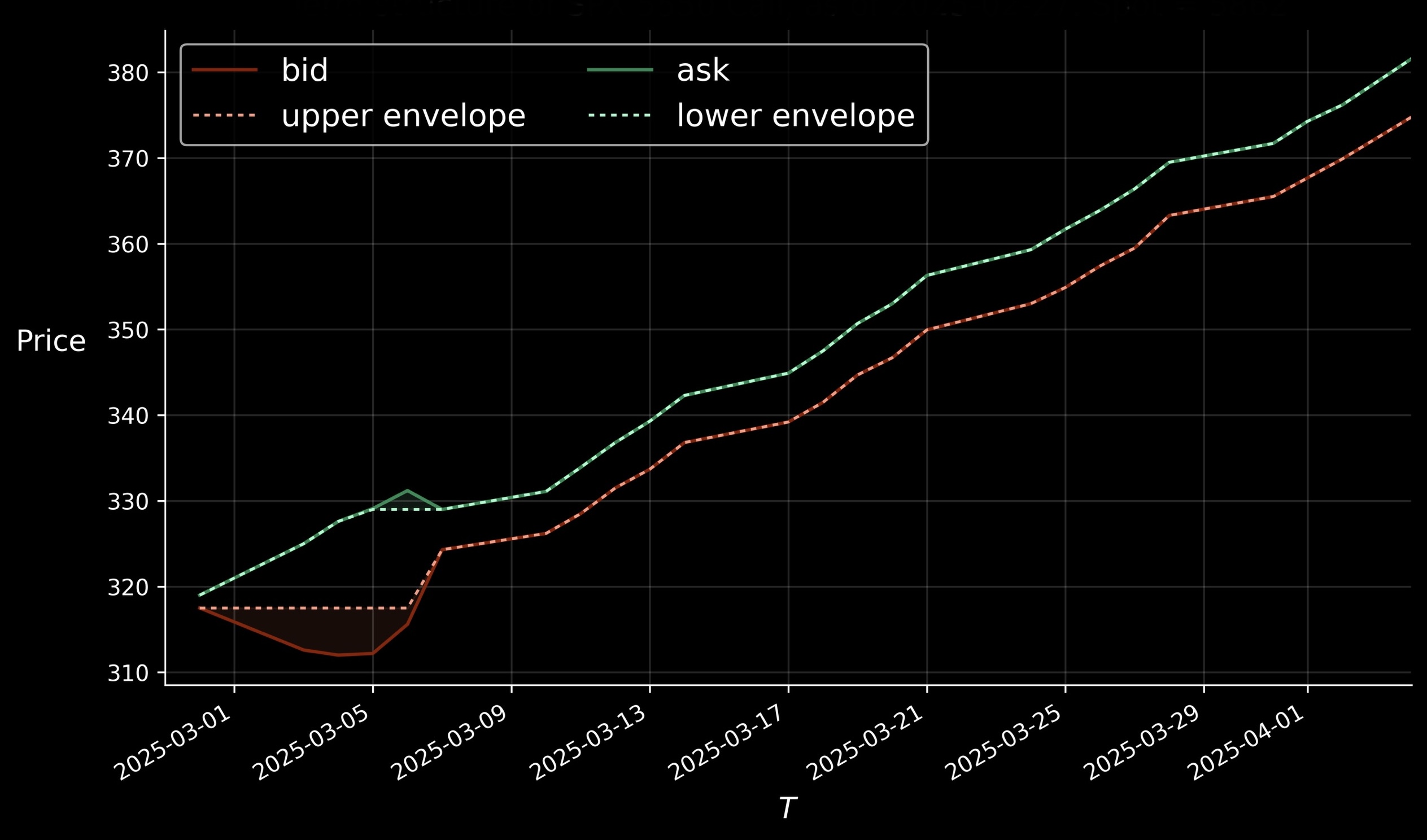}
    
    \label{fig:termStructure}
\end{figure}

\subsection{Dual (Portfolio) Problem}
 
 Consider a payoff function $h\in \text{USC}_L(\R_+^N)$, meaning that the claim is $h(X_1,\ldots, X_N)$, and static hedging instruments $\psi = (\psi_1,\ldots, \psi_N)$ with payoffs $\psi_i(X_i)$. Moreover, let $\Delta = (\Delta_1,\ldots, \Delta_{N-1})$ be a progressively measurable trading strategy (delta hedge), that is, $\Delta_i$ is a measurable function of $X_1,\ldots, X_i$. The profit and loss associated with $\psi,\Delta, h$ is defined as 
\begin{equation*}\label{eq:PnL}
    \textnormal{P\&L}_{\psi,\Delta}^h(x_1,\ldots,x_N) = \sum_{i=1}^N\psi_i(x_i) + \sum_{i=1}^{N-1} \Delta_i(x_1,\ldots, x_i)(x_{i+1}-x_i) - h(x_1,\ldots,x_N).
\end{equation*}
As customary in MOT \cite{beiglbock2013model,HLBook}, %
it is sufficient to look for static profiles of the form 
$$\psi_i(x) = \gamma^{\$}_i + \int_{\R_+}(x-K)^{+}\lambda_i(\rd K),$$
noting that eventual positions in the forward contract can be absorbed by the delta hedge. 
In our dual problem, the delta hedges are arbitrary, whereas the static positions are expressed as the difference $\psi_i = \psi^{\rm{a}}_i - \psi^{\rm{b}}_i$ of convex functions. Here $\psi^{\rm{a}}_i$ corresponds to a combination of call options with maturity $T_i$ that are bought, while $\psi^{\rm{b}}_i$ corresponds to call options which are sold. (Put options are redundant given calls and the forward contract, hence can be ignored.)
For brevity, we combine these functions into a vector 
$$\psi^{\rm b,a}=(\psi^{\rm a},\psi^{\rm b})=(\psi^{\rm a}_1, \dots, \psi^{\rm a}_N, \psi^{\rm b}_1, \dots, \psi^{\rm b}_N)\in \text{CVX}_L^{2N}$$
whose total cost is 
\begin{equation*}
\text{cost}(\psi^{\rm b,a}) =  \sum_{i=1}^N  \mu^{\rm a}_i(\psi^{\rm a}_i) - \mu^{\rm{b}}_i(\psi^{\rm b}_i).
\end{equation*}
We denote the set of all admissible superhedging strategies by
 \begin{equation}\label{eq:admissibleDual}
     \Psi^{\rm b,a}(h) = \Big\{  (\psi^{\rm a},\psi^{\rm b})\in \text{CVX}_L^{2N} \, \mid \, \exists \ {\Delta} \text{ s.t.\  }   \textnormal{P\&L}_{\psi^{\rm a} - \psi^{\rm b},\Delta}^h\ge 0  \Big\};
 \end{equation} 
here and below, it is tacitly understood that $\Delta$ is progressively measurable. The \emph{dual} (or \emph{portfolio}) problem is then defined as
\begin{equation}\label{eq:BAMOT-dual}
D(h):=\inf_{\psi^{\rm b,a} \in \Psi^{\rm b,a}(h)} \text{cost}(\psi^{\rm b,a}). %
\end{equation} 
The interpretation is straightforward: sell one unit of $h$; at each maturity $T_i$, buy a convex profile $\psi_i^{\rm a}$ priced at the ask, sell another profile $\psi_i^{\rm b}$ at the bid,  %
and hold  $\Delta_i$ shares  of the underlying asset to hedge against potential losses and minimize cost.

\begin{remark}\label{rem:calibration-uncertainty}
The primal problem in~\eqref{eq:BAMOT-primal} is conditional on a fixed
pair of calibrated bid and ask marginals. Finite market quotes need not determine this pair uniquely: different
arbitrage-free interpolation or fitting procedures, and even different
solutions within a fixed parametric family, may yield different
admissible pairs. Given a
fixed class $\mathfrak C$ of admissible calibrated pairs satisfying~\cref{asm:bidAskConvexOrder}, one could instead consider
\[
\sup_{(\mu^{\rm b},\mu^{\rm a})\in\mathfrak C}
\ \sup_{\Q\in\cQ(\mu^{\rm b},\mu^{\rm a})}
\E_\Q[h(X_1,\ldots,X_N)].
\]
The inner optimization is the BAMOT problem conditional on a calibrated
envelope, whereas the outer optimization additionally accounts for
calibration uncertainty. The corresponding dual problem then becomes 
\[
\inf_{\psi^{\rm b,a} \in \Psi^{\rm b,a}(h)} \text{cost}_{\mathfrak C}(\psi^{\rm b,a})\quad{\rm where}\quad \text{cost}_{\mathfrak C}(\psi^{\rm b,a}) = \sup_{(\mu^{\rm b},\mu^{\rm a})\in\mathfrak C} \sum_{i=1}^N  \mu^{\rm a}_i(\psi^{\rm a}_i) - \mu^{\rm{b}}_i(\psi^{\rm b}_i).
\]
Establishing strong duality in this case therefore requires additional structural
assumptions on $\mathfrak C$, which we leave for future work.
\end{remark}

\section{Duality}\label{sec:duality}

The goal of this section is to show the strong duality $P(h)=D(h)$, where $P(h)$ and $D(h)$ denote the values of the primal \eqref{eq:BAMOT-primal} and dual \eqref{eq:BAMOT-dual} problems, respectively. We first state the weak duality $P(h)\le D(h)$, which is straightforward. 

\begin{lemma}[Weak Duality]\label{lemma:weak-duality}
We have $P(h)\le D(h)$ for all $h\in \textnormal{USC}_L(\R^N_+)$.
\end{lemma}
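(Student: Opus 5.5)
The plan is to fix an arbitrary $\Q\in\cQ(\mu^{\rm b},\mu^{\rm a})$ and an arbitrary admissible strategy $\psi^{\rm b,a}=(\psi^{\rm a},\psi^{\rm b})\in\Psi^{\rm b,a}(h)$ with some delta hedge $\Delta$, and to show $\E_\Q[h(X_1,\dots,X_N)]\le \text{cost}(\psi^{\rm b,a})$. Taking the supremum over $\Q$ and the infimum over $\psi^{\rm b,a}$ then yields $P(h)\le D(h)$. If $\cQ(\mu^{\rm b},\mu^{\rm a})$ were empty the claim would be trivial, but by Assumption~\ref{asm:bidAskConvexOrder} it is not. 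If $\Psi^{\rm b,a}(h)=\emptyset$, then $D(h)=+\infty$ and again there is nothing to prove.

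First, integrate the superhedging inequality $\textnormal{P\&L}^h_{\psi^{\rm a}-\psi^{\rm b},\Delta}\ge 0$ against $\Q$. This gives
\[
\E_\Q[h]\le \sum_{i=1}^N\big(\mu_i^{\Q}(\psi^{\rm a}_i)-\mu_i^{\Q}(\psi^{\rm b}_i)\big)+\sum_{i=1}^{N-1}\E_\Q\big[\Delta_i(X_1,\dots,X_i)(X_{i+1}-X_i)\big].
\]
All static terms are finite, because the $\psi_i^{\rm a},\psi_i^{\rm b}$ have linear growth and $\mu_i^\Q$ has finite first moment. The positive part of $h$ is integrable for the same reason, so $\E_\Q[h]$ is well defined in $[-\infty,\infty)$.

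Next, use convex order. Since $\psi_i^{\rm a}$ is convex with linear growth and $\mu_i^\Q\cleq\mu_i^{\rm a}$, we have $\mu_i^\Q(\psi_i^{\rm a})\le\mu_i^{\rm a}(\psi_i^{\rm a})$. Since $\mu_i^{\rm b}\cleq\mu_i^\Q$, we have $\mu_i^\Q(\psi_i^{\rm b})\ge \mu_i^{\rm b}(\psi_i^{\rm b})$. Hence the static part is bounded by $\text{cost}(\psi^{\rm b,a})$.

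The main obstacle is the dynamic term, because $\Delta_i$ is only measurable and need not be bounded, so $\Delta_i(X_{i+1}-X_i)$ may fail to be $\Q$-integrable. I would handle this with the standard MOT device of showing that the generalized conditional expectation is zero. Rearrange the inequality as
\[
\sum_{i=1}^{N-1}\Delta_i(X_{i+1}-X_i)\ \ge\ h-\sum_i(\psi_i^{\rm a}-\psi_i^{\rm b}).
\]
The right-hand side has a $\Q$-integrable negative part when $h$ is bounded below by a linear-growth function. Then proceed backwards in time. Conditioning on $X_1,\dots,X_{N-1}$, the last term $\Delta_{N-1}(X_N-X_{N-1})$ has conditional expectation zero in the generalized sense (its negative part is conditionally integrable thanks to the lower bound), by the martingale property. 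This allows one to drop it and iterate, in the spirit of the argument in \cite{beiglbock2013model}.

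A simpler route, which I would try first, is to truncate. Replace $\Delta_i$ by $\Delta_i\mathbf 1_{\{|\Delta_i|\le n\}}$; on $\{|\Delta_i|>n\}$ this changes the P\&L, so it does not directly work. Instead, I would use the local-martingale fact that a discrete-time martingale transform bounded below by an integrable variable is a true martingale, so its expectation is zero. If $h$ is unbounded below, first replace $h$ by $h\vee(-n(1+\|x\|))$ on the primal side via monotone convergence, using that $\E_\Q[h]$ is the decreasing limit of these truncations. This gives $\E_\Q[h]\le\text{cost}(\psi^{\rm b,a})$ and concludes the proof.
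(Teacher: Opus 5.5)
Your proposal is correct and follows essentially the paper's route. You integrate the superhedging inequality under $\Q$, bound the static part by convex order, and remove the dynamic term because a discrete-time martingale transform whose terminal value has $\Q$-integrable negative part is a true martingale. The paper cites Jacod--Shiryaev for exactly this fact, and your backward-induction sketch reproves it.

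The one thing to fix is your final truncation paragraph. It is unnecessary: $h\in\textnormal{USC}_L$ has two-sided linear growth, so $h\ge-\|h\|_L(1+\|x\|)$ already gives the integrable lower bound. It is also invalid as written, because $(\psi^{\rm a}-\psi^{\rm b},\Delta)$ need not superhedge the larger payoff $h\vee(-n(1+\|x\|))$. If such a case ever mattered, you would instead note that either $\E_\Q[h]=-\infty$ or $h\in L^1(\Q)$.
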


\begin{proof}
Let $\Q \in  \cQ(\mu^{\rm b}, \mu^{\rm a})$ and $\psi^{\rm b,a} = (\psi^{\rm a},\psi^{\rm b}) \in \Psi^{\rm b,a}(h)$, and let $\Delta$ be a corresponding dynamic strategy such that $\textnormal{P\&L}_{\psi^{\rm a} - \psi^{\rm b},\Delta}^h\ge 0 $. %
    Since $\Q$ is a martingale measure with $\mu_i^{\rm b}\cleq \mu_i^{\Q}\cleq \mu_i^{\rm a}$, and  $\psi_i^{\rm a}, \psi_i^{\rm b}$ are convex with linear growth, we deduce
    \begin{align*}
          \E_\Q[h(X_1,\ldots,X_N)] &\leq \E_\Q\bigg[\sum_{i=1}^N \psi_i^{\rm a}(X_i) - \psi_i^{\rm b}(X_i)\bigg] + \E_\Q\bigg[\sum_{i=1}^{N-1}\Delta_i(X_1,\ldots, X_i)(X_{i+1}-X_i)\bigg] \\ 
          &= \sum_{i=1}^N \mu_i^{\Q}(\psi_i^{\rm a}) - \mu_i^{\Q}(\psi_i^{\rm b}) 
          \leq \sum_{i=1}^N \mu_i^{\rm a}(\psi_i^{\rm a}) - \mu_i^{\rm b}(\psi_i^{\rm b})=\text{cost}(\psi^{\rm b,a}).
    \end{align*}
    Here we have used that the discrete stochastic integral has vanishing expectation. Indeed, the negative part of the terminal value of this local martingale is $\Q$-integrable due to the superhedging property and the linear growth of $h$, and that implies that it is a true martingale \cite[Theorem~2]{JacodShiryaev.98}. The result follows by 
    taking supremum over $\Q$ and infimum over $\psi^{\rm b,a} \in \Psi^{\rm b,a}(h)$. %
\end{proof}

The nontrivial direction of the strong duality will be established in two steps. We first focus on the single-maturity case $N=1$, where we prove strong duality from first principles via a Hahn--Banach argument. We then show the result for the general case by combining the strong duality of classical MOT with the single-maturity case.

\subsection{Single Maturity}

Let $N=1$ and consider the single maturity $T_1>0$ as well as corresponding bid and ask marginals $\mu^{\rm{b}}, \mu^{\rm{a}} \in \cP_1$ satisfying $\mu^{\rm{b}} \cleq \mu^{\rm{a}}$.
We slightly abuse notation by writing $\mu^{\rm s}$ for $\mu^{\rm s}_1$, etc., and identifying martingale measures with their marginal $\mu$ at $T_1$ (rather than using the full martingale measure $\delta_{x_0}\otimes \mu$).
Then the primal problem \eqref{eq:BAMOT-primal} boils down to %
\begin{equation}\label{eq:primal1M}
P(h) = \sup_{\mu \in  \cQ(\mu^{\rm{b}}, \mu^{\rm{a}})}\mu(h), \qquad 
    \cQ(\mu^{\rm{b}}, \mu^{\rm{a}}) = \{\mu \in \cP_1 \, \mid \, \mu^{\rm{b}} \cleq \mu \cleq  \mu^{\rm{a}} \}, 
\end{equation}
where $h\in \textnormal{USC}_L $ is the payoff function of a $T_1$-claim $h(X_1)$. 
On the other hand, the dual problem~\eqref{eq:BAMOT-dual} becomes 
\begin{align*}
  D(h) &= \inf_{\psi^{\rm b,a} \in \Psi^{\rm b,a}(h)} \text{cost}(\psi^{\rm b,a}), \qquad 
 \text{cost}(\psi^{\rm b,a}) = \mu^{\rm{a}}(\psi^{\rm a}) - \mu^{\rm{b}}(\psi^{\rm b}), \label{eq:DualOneMat}\\[1em]
  \Psi^{\rm b,a}(h)  &=   \{ \psi^{\rm b,a}  = (\psi^{\rm a},\psi^{\rm b})\in \text{CVX}_L^2 \,\mid \, \psi^{\rm a} - \psi^{\rm b} \geq h\}. \nonumber
\end{align*}
That is, $D(h)$ is the cost of the cheapest static superhedging portfolio of $h$, given by the difference of convex functions. Observe that $P(h) = \mu^{\rm a}(h) = D(h)$ if $h$ is convex, and $P(h) = \mu^{\rm b}(h) = D(h)$ when $h$ is concave. The next result extends strong duality to general payoffs. 

\begin{theorem}[Strong Duality,  $N=1$]\label{thm:band-duality}
For all $h \in \textnormal{USC}_{L}$, we have
$$P(h) = \sup_{\mu \in \cQ(\mu^{\rm b}, \mu^{\rm a})} \mu(h) = 
\inf_{\psi \in \Psi^{\rm{b,a}}(h)} \textnormal{cost}(\psi) =  D(h).
$$
Moreover, there exists a primal optimizer $\mu^{\star}\in \cQ(\mu^{\rm b}, \mu^{\rm a})$. 
\end{theorem}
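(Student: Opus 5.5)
The plan is to establish strong duality first for $h\in\cC_L$ via a Hahn--Banach separation in $\cC_L$, and then extend it to $h\in\textnormal{USC}_L$ by monotone approximation. Existence of a primal optimizer will come from compactness. Weak duality $P(h)\le D(h)$ is already given by \cref{lemma:weak-duality}, so only $D(h)\le P(h)$ needs proof.

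\emph{Step 1 (compactness).} The set $\cQ(\mu^{\rm b},\mu^{\rm a})$ is nonempty, since it contains $\mu^{\rm a}$. It is also $\cW_1$-compact. Tightness and uniform integrability of first moments follow from $\mu\cleq\mu^{\rm a}$, because $\mu((x-K)^+)\le\mu^{\rm a}((x-K)^+)\to0$ as $K\to\infty$ and $\R_+$ is bounded below. The convex-order constraints are preserved under $\cW_1$-limits, since calls are continuous with linear growth. For $h\in\textnormal{USC}_L$, the map $\mu\mapsto\mu(h)$ is $\cW_1$-u.s.c. To see this, write $h\le C(1+x)$, apply the portmanteau theorem to the function $C(1+x)-h$, which is l.s.c. and bounded below, and use convergence of first moments. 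Hence the supremum in $P(h)$ is attained, which gives $\mu^\star$.

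\emph{Step 2 (continuous $h$).} Consider the convex cone
\[
\mathcal K=\{\psi^{\rm a}-\psi^{\rm b}-g \mid \psi^{\rm a},\psi^{\rm b}\in\text{CVX}_L,\ g\in\cC_L,\ g\ge0\}\subset\cC_L
\]
and the functional $\ell(\psi^{\rm a}-\psi^{\rm b}-g)=\mu^{\rm a}(\psi^{\rm a})-\mu^{\rm b}(\psi^{\rm b})-\mu^{\rm a}(g)$; alternatively, one can define the sublinear functional $p(f)=\inf\{\text{cost}(\psi^{\rm b,a}) \mid \psi^{\rm a}-\psi^{\rm b}\ge f\}=D(f)$ on $\cC_L$. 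I would follow the second route. The functional $p$ is sublinear, because $\Psi^{\rm b,a}$ is closed under addition and positive scaling. It is finite and dominated by $\|f\|_L\,\mu^{\rm a}(1+x)$, since $\psi^{\rm a}=\|f\|_L(1+x)$, $\psi^{\rm b}=0$ is admissible. Fix $h\in\cC_L$. By Hahn--Banach there is a linear $L$ on $\cC_L$ with $L(h)=p(h)$ and $L\le p$. Positivity of $L$ follows from $L(-g)\le p(-g)\le0$ for $g\ge0$. Moreover, for convex $\psi$ we have $L(\psi)\le p(\psi)\le\mu^{\rm a}(\psi)$ and $-L(\psi)=L(-\psi)\le p(-\psi)\le-\mu^{\rm b}(\psi)$, using $(0,\psi)\in\Psi^{\rm b,a}(-\psi)$. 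Taking $\psi=\pm1$ gives $L(1)=1$, and taking $\psi=\pm x$ fixes the barycenter.

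\emph{Step 3 (representing $L$ as a measure).} This is the main obstacle. A positive linear functional on $\cC_L$ need not be given by a measure, since it could have mass at infinity. My first attempt would be to use the bound $L(f)\le\mu^{\rm a}(\psi)$ for any convex $\psi\ge|f|$. Taking $\psi=(x-K)^+\cdot c$ together with cutoffs shows that $L$ puts no mass at infinity. Concretely, for $f$ supported in $[K,\infty)$ with $|f|\le 1+x$, I would bound $|f|\le (1+2K)^{-1}\ldots$; more cleanly, $|f|\le C(x-K/2)^+$ on the support, which yields $|L(f)|\le C\mu^{\rm a}((x-K/2)^+)\to0$. With tightness at infinity in hand, a Daniell--Stone or Riesz argument on the one-point compactification of $\R_+$, applied to $f/(1+x)$, gives a measure $\mu\in\cP_1$ with $L(f)=\mu(f)$. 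The sandwich $\mu^{\rm b}(\psi)\le\mu(\psi)\le\mu^{\rm a}(\psi)$ for convex $\psi$ then yields $\mu\in\cQ(\mu^{\rm b},\mu^{\rm a})$ and $D(h)=p(h)=\mu(h)\le P(h)$.

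\emph{Step 4 (u.s.c.\ $h$).} Write $h=\inf_n h_n$ with $h_n\in\cC_L$ decreasing and uniformly bounded by $C(1+x)$, for instance via the Lipschitz inf-convolutions $h_n(x)=\sup_y\{h(y)-n|x-y|\}$, which are applied to $h-C(1+x)$ so as to work with a function bounded above. Then $D(h)\le D(h_n)=P(h_n)$. Choose optimizers $\mu_n$ for $P(h_n)$ by Step 1, and extract a $\cW_1$-limit $\mu$ using compactness. For fixed $m$ and $n\ge m$ we have $\mu_n(h_n)\le\mu_n(h_m)$, so $\limsup_n P(h_n)\le\mu(h_m)$ by continuity. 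Letting $m\to\infty$ with monotone convergence gives $\limsup_n P(h_n)\le\mu(h)\le P(h)$, and therefore $D(h)\le P(h)$.
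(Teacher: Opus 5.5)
Your proposal is correct and follows the paper's proof essentially step for step: sublinearity of $D$ on $\cC_L$, a Hahn--Banach extension dominated by $D$ that is positive and sandwiched between $\mu^{\rm b}$ and $\mu^{\rm a}$ on convex functions, tightness via $|f|\le C(x-K/2)^+$ followed by a Riesz-type representation, and then sup-convolution approximation together with $\cW_1$-compactness for u.s.c.\ $h$. The differences are cosmetic (you prove $P(h_n)\downarrow P(h)$ directly instead of citing Denis--Hu--Peng). You should also state explicitly that $D(f)>-\infty$, which Hahn--Banach needs, and which follows from weak duality with $\mu^{\rm a}\in\cQ(\mu^{\rm b},\mu^{\rm a})$.
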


\begin{proof}%
    The weak duality $P(h)\le D(h)$ was shown in \cref{lemma:weak-duality}. To show the reverse inequality, we first assume that $h\in \cC_L(\R_+)$. Our aim is to construct a measure $\mu^{\star} \in   \cQ(\mu^{\rm{b}}, \mu^{\rm{a}}) $ such that $\mu^{\star}(h) = D(h)$. Since we already know that $\mu^{\star}(h) \le P(h)\le D(h)$, this will imply that $P(h)= D(h)$ and $\mu^{\star}$  is a maximizer. \\[-0.75em] %

\underline{\textit{Step 1.}}  
We show that $D: \cC_L(\R_+)\to \R$ is well-defined and a sublinear functional. %
Let $f\in\cC_L(\R_+)$. The convex function $\psi^{\rm a}(x):=\|f\|_L(1+|x|)$
dominates $f$, and hence $(\psi^{\rm a},0)\in\Psi^{\rm b,a}(f)$. Therefore, $D(f)\leq \mu^{\rm a}(\psi^{\rm a})<\infty$.
On the other hand, fixing any $\mu_0\in\cQ(\mu^{\rm b},\mu^{\rm a})$, weak duality yields
\[
-\infty<\mu_0(f)\leq P(f)\leq D(f).
\]
Thus $D(f)\in\R$.

We now show that $D$ is sublinear. Clearly  $D(\lambda f) = \lambda D(f)$ for all $\lambda > 0$ and $f\in \cC_L(\R_+)$. 
Note also that  $D(\lambda) = \lambda$ for all $\lambda \in \R$, using, e.g.,  $\psi \equiv (\lambda,0) \in \Psi^{\rm{b,a}}(\lambda)$. Moreover, $D$ is subadditive, 
$$D(f + f') \le D(f) + D(f') \quad \forall \ f,f' \in \cC_L(\R_+).$$ %
Indeed,  $\Psi^{\rm{b,a}}(f) + \Psi^{\rm{b,a}}(f') \subseteq \Psi^{\rm{b,a}}(f+f')$, which gives
$$D(f+f') = \inf_{\psi \in \Psi^{\rm{b,a}}(f+f')} \text{cost}(\psi) \le \inf_{\psi \in \Psi^{\rm{b,a}}(f), \,  \psi'\in\Psi^{\rm{b,a}}(f')} \text{cost}(\psi + \psi') = D(f)  +  D(f')$$
since  $\text{cost}(\cdot)$ is linear.  This completes the proof that $D$ is sublinear.  \\[-0.75em] %

\underline{\textit{Step 2.}} Consider the one-dimensional linear subspace $\mathcal{M}:=\{\lambda h : \lambda \in \R\} \subseteq \cC_L(\R_+)$ and introduce the linear functional  $\ell(\lambda h) := \lambda D(h)$ on~$\mathcal{M}$. 
Then $\ell(\lambda h) = D(\lambda h)$ $\forall \lambda > 0$ due to the positive homogeneity of $D$. Moreover,  $\ell(-\lambda h) \le D(-\lambda h)$ for all $\lambda \ge 0$, as follows from 
$$\ell(\lambda h) + \ell(-\lambda h) = \ell(0) = 0 = D(0) = D(\lambda h - \lambda h) \le D(\lambda h) + D(- \lambda h). $$
Hence $\ell(f) \le D(f)$ for all  $f \in \mathcal{M}$, that is, $\ell$ is dominated by $D$ on $\mathcal{M}$. By the Hahn--Banach Dominated Extension Theorem, $\ell$ can be extended to a linear functional (again denoted $\ell$) on $\cC_L(\R_+)$ such that   $\ell(f) \le D(f)$  for all $f\in \cC_L(\R_+)$. %
Moreover, $\ell$ is a positive functional in the sense that $\ell(f) \ge 0$ whenever $f \ge 0$. Indeed, if $f\ge 0$ but $\ell(f)<0$, then as $(0,0)\in \Psi^{\rm{b,a}}(-f)$, we obtain the contradiction
$$0 < -\ell(f) = \ell(-f) \le D(-f) \le \text{cost}(0) = 0.$$
Finally, same as the argument in Step~1, the convex function 
 $\psi^{\rm a}(x) :=  \lVert f \rVert_{L}(1 + |x|)$ dominates $f$. In other words,   $(\psi^{\rm a}, 0) \in \Psi^{\rm{b,a}}(f)$, leading to 
$$\ell(f) \le D(f) \le \mu^{\rm{a}}(\psi^{\rm a}) =  (1+c)\ \lVert f \rVert_{L} ,$$%
where the constant $c =   \int |x|\mu^{\rm{a}}(\rd x)$ is finite due to $\mu^{\rm{a}} \in \cP_1$. In summary, $\ell$ is a bounded, linear, positive functional on $\cC_L(\R_+)$.  \\[-0.75em] 

\underline{\textit{Step 3.}} Let us verify the following tightness condition: for any $\vae >0$, there exists $R>0$ such that 
\[
|\ell(f)|\leq \vae \|f\|_L,\quad \forall f\in \cC_L(\R_+) \quad {\rm s.t.}\, f \equiv 0 \,\,{\rm on} \,\,[0,R].
\]
Let $f \in \cC_L(\R_+)$ satisfy $f \equiv 0$ on $[0,R]$, where $R\geq2$. Then $|f(x)|
\leq 3\|f\|_L(x-R/2)^+ =: \|f\|_L g_R(x)$. Moreover, as $\mu^{\rm a}$ has finite first moment, dominated convergence shows that $\mu^{\rm a}(g_R)\to 0$ for $R\to \infty$. By the positivity of $\ell$, we deduce that 
\[
\ell(f) \leq  \ell(\|f\|_L\,g_R) \leq \|f\|_L\, D(g_R) =  \|f\|_L\,\mu^{\rm a}(g_R)\to 0,\quad R\to \infty.
\]
The same argument applies to $-\ell(f)=\ell(-f)$, completing the proof of tightness.\\[-0.75em] 

\underline{\textit{Step 4.}} 
As $(\cC_L,\lVert \cdot \rVert_{L})$ is isomorphic to $(\cC_b,\lVert \cdot \rVert_{\infty})$ via $f \mapsto \frac{f}{1+|\cdot|}$ and $\ell$ is bounded linear (Step~2) satisfying the tightness condition (Step~3), the Riesz Representation Theorem \cite[Theorem~7.10.6]{Bogachev2007} implies the existence of a signed Radon measure $\mu^{\star}$ on $\R_+$ (with finite first moment), which represents $\ell$ via $\ell(f)=\mu^{\star}(f)$. Positivity of $\ell$ implies that $\mu^{\star}$ is a nonnegative measure. Moreover, $\ell(1) = 1$ due to  $\ell(\pm 1) \le D(\pm 1)  = \pm 1$, so that $\mu^{\star}$ is a probability measure. Let $f \in \text{CVX}_L$, then    $(f,0)\in \Psi^{\rm{b,a}}(f)$ and 
$$\mu^{\star}(f) = \ell(f) \le D(f) = \mu^{\rm{a}}(f).$$
Similarly, $(0,f)\in \Psi^{\rm{b,a}}(-f)$, leading to $\mu^{\star}(f) \ge  \mu^{\rm{b}}(f)$. Thus $\mu^{\rm{b}}(f)\leq \mu^{\star}(f) \leq \mu^{\rm{a}}(f)$ for all $f \in \text{CVX}_L$, showing that $\mu^{\star} \in \cQ(\mu^{\rm{b}},\mu^{\rm{a}})$. This completes the proof that $D(h) = \ell(h)=\mu^{\star}(h) \le P(h)$ for the given claim $h\in \cC_L(\R_+)$.  \\[-0.75em] 

\underline{\textit{Step 5.}} It remains to extend the result to $h \in \textnormal{USC}_{L}$. Consider the sup convolution
\begin{equation*}%
    h^{\varepsilon}(x) =  \sup_{y\in \R_+} \left( h(y)  - \frac{1}{\varepsilon} |y-x|\right), \quad \varepsilon > 0.
\end{equation*}
Then $ h^{\varepsilon}$ is finite for all $\varepsilon < \overline{\varepsilon} :=\lVert h \rVert_L^{-1}$. 
For any such $\varepsilon$, recall that $h^{\varepsilon}$ is  Lipschitz continuous with constant $1/\varepsilon$, as follows from
$$ h^{\varepsilon}(x) - h^{\varepsilon}(x') \le \frac{1}{\varepsilon} \sup_{y\in \R_+} \big [ |y-x| - |y-x'| \big ] \le \frac{1}{\varepsilon}  |x-x'|, \quad x,x' \in \R_+.$$
Hence $h^{\varepsilon} \in \cC_L(\R_+)$ and  from the previous steps,  $P(h^{\varepsilon}) = D(h^{\varepsilon})$. Moreover, $h^{\varepsilon} \downarrow h$ as $\varepsilon\downarrow 0$. By the $\cW_1$-compactness of the set $\cQ(\mu^{\rm b}, \mu^{\rm a})$ in the definition~\eqref{eq:primal1M}, this implies that  $P(h^{\varepsilon})\downarrow P(h)$ (as in, e.g., \cite[Theorem~31]{DenisHuPeng2011}). On the other hand, $D(h^{\varepsilon}) \ge D(h)$ for all $\varepsilon >0$ by monotonicity, so that $P(h) = \lim_{\varepsilon \to 0 }P(h^{\varepsilon}) \ge D(h)$. This completes the proof that $P(h) = D(h)$. The existence of a maximizer $\mu^{\star}\in \cQ(\mu^{\rm b}, \mu^{\rm a})$ follows from the $\cW_1$-compactness of $\cQ(\mu^{\rm b}, \mu^{\rm a})$ and $h\in \textnormal{USC}_{L}$.
\end{proof}

\begin{remark}
    Dolinsky and Soner \cite{DolinskySoner2014} treat market friction in the static hedging instruments through an abstract pricing operator $\mathscr{P}(\cdot)$ describing their prices. This operator is used in both the dual and the primal formulation. In the primal, the marginal constraint becomes $\mu(f) \le \mathscr{P}(f)$ for all measurable functions $f:\R_+ \to \R$ satisfying a certain growth condition. By contrast, our primal formulation~\eqref{eq:primal1M} with the supremum over $\mu \in \cQ(\mu^{\rm b},\mu^{\rm a})$ makes the constraint explicit in terms of bid and ask marginals obtained from market data. While \cref{thm:band-duality} could also be derived from the abstract duality of \cite{DolinskySoner2014} with some additional work, we preferred to provide an elementary proof.
\end{remark}

\subsection{Multiple Maturities} 

We now turn to strong duality for path-dependent claims involving $N\geq1$ maturities. Fix marginals $\mu^{\rm b}, \mu^{\rm a}\in\cP_1^N$ satisfying \cref{asm:bidAskConvexOrder}. We recall the set
$$    \cQ(\mu^{\rm b}, \mu^{\rm a}) = \left\{\Q \mid \text{martingale measure}, \,\, \mu_i^{\rm b}\cleq \mu_i^{\Q}\cleq \mu_i^{\rm a},\,\, i \in [N]\right\}$$
 of calibrated martingale measures and consider a payoff of the form $h(X_1,\ldots, X_N)$.

\begin{theorem}[Strong Duality]\label{thm:strong-duality} 
For all $h\in \textnormal{USC}_L(\R^N_+)$, we have   
$$P(h) = \sup_{\Q \in \cQ(\mu^{\rm b}, \mu^{\rm a})} \E_\Q[h(X_1, \ldots, X_N)] = 
\inf_{\psi \in \Psi^{\rm{b,a}}(h)} \textnormal{cost}(\psi) =  D(h)
$$
and there exists an optimizer $\Q^{\star} \in  \cQ(\mu^{\rm b}, \mu^{\rm a})$ to the primal problem. %
\end{theorem}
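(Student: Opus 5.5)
Weak duality $P(h)\le D(h)$ is \cref{lemma:weak-duality}, so only $D(h)\le P(h)$ and existence of an optimizer remain. The plan is to write $P(h)$ as a sup over marginal vectors of a classical MOT value and then exchange sup and inf. Let
\[
\mathcal{M}:=\{\mu=(\mu_1,\dots,\mu_N)\in\cP_1^N \mid \mu_1\cleq\cdots\cleq\mu_N,\ \mu_i^{\rm b}\cleq\mu_i\cleq\mu_i^{\rm a}\},
\]
which is nonempty by \cref{asm:bidAskConvexOrder}. Since a martingale measure has increasing marginals, we have $P(h)=\sup_{\mu\in\mathcal{M}}\mathrm{MOT}(h;\mu)$, where $\mathrm{MOT}(h;\mu)$ is the classical MOT value. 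Classical MOT duality for u.s.c.\ payoffs of linear growth \cite{beiglbock2013model} gives
\[
\mathrm{MOT}(h;\mu)=\inf\Big\{\sum_i\mu_i(\psi_i)\Big\},
\]
the infimum being over $\psi_i\in\cC_L$ and $\Delta$ with $\sum_i\psi_i(x_i)+\sum_i\Delta_i(x_{i+1}-x_i)\ge h$.

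The main step is the minimax exchange $\sup_{\mu\in\mathcal M}\inf_{\psi}\sum_i\mu_i(\psi_i)=\inf_\psi\sup_{\mu\in\mathcal M}\sum_i\mu_i(\psi_i)$. The functional is linear in $\mu$ and in $\psi$. The set $\mathcal{M}$ is convex, and it is $\cW_1$-compact: each $\mu_i$ is dominated by $\mu_i^{\rm a}$ in convex order, which gives uniform integrability of first moments, and the convex-order constraints are closed. The map $\mu\mapsto\sum_i\mu_i(\psi_i)$ is $\cW_1$-continuous for $\psi_i\in\cC_L$. The admissible set of $(\psi,\Delta)$ is convex. Hence a Sion/Fan-type minimax theorem applies with compactness on the $\mu$ side. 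Compactness of $\mathcal M$ combined with u.s.c.\ of $\mu\mapsto \mathrm{MOT}(h;\mu)$ also gives the primal optimizer $\Q^\star$, through stability of the MOT value and tightness of martingale measures with marginals in a compact set. Alternatively, existence follows directly because $\cQ(\mu^{\rm b},\mu^{\rm a})$ is $\cW_1$-compact and $h$ is u.s.c.\ with linear growth.

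After the exchange we obtain
\[
P(h)=\inf_{(\psi,\Delta)}\ \sum_{i=1}^N\sup_{\mu_i^{\rm b}\cleq\mu_i\cleq\mu_i^{\rm a}}\mu_i(\psi_i)
\]
once the chain constraint $\mu_1\cleq\cdots\cleq\mu_N$ has been dropped. Dropping it is the delicate point. Including the chain constraint only makes the sup smaller, so the inequality we get, $\ge$, runs the wrong way: dropping the constraint enlarges the set, which is the correct direction for proving $D\le P$ only if we can show the unconstrained expression is still $\le P(h)$. To handle this, I would instead enlarge the primal to the product set $\prod_i\{\mu_i^{\rm b}\cleq\mu_i\cleq\mu_i^{\rm a}\}$ and note that $\mathrm{MOT}(h;\mu)=-\infty$ (empty set, dual value $-\infty$) whenever the chain fails. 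Classical MOT duality still holds in that case, because the dual is unbounded below when marginals are not in convex order, and the function $\mu\mapsto\mathrm{MOT}(h;\mu)$ is concave on the product set. The minimax theorem then applies on the compact convex product set. For each $i$, the inner sup equals $\inf\{\mu_i^{\rm a}(\psi^{\rm a}_i)-\mu_i^{\rm b}(\psi^{\rm b}_i)\mid \psi_i^{\rm a}-\psi_i^{\rm b}\ge\psi_i\}=D_1(\psi_i)$ by \cref{thm:band-duality} applied to $\psi_i\in\cC_L$.

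Finally, choose near-optimal convex pairs $(\psi_i^{\rm a},\psi_i^{\rm b})$ with $\psi^{\rm a}_i-\psi^{\rm b}_i\ge\psi_i$. Together with the same $\Delta$, they superhedge $h$, so $D(h)\le\sum_i D_1(\psi_i)+\varepsilon$, and taking the infimum gives $D(h)\le P(h)$. I expect the main obstacle to be justifying the minimax step, namely the exact topology, the convexity of the admissible dual class, and the lower semicontinuity needed. A fallback route is to first prove the result for $h\in\cC_L(\R^N_+)$ (or bounded Lipschitz $h$) and then pass to u.s.c.\ $h$ via the sup-convolution and compactness argument of Step~5 of \cref{thm:band-duality}.
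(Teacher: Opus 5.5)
Your proof is correct. Once you move from the chain-constrained set $\mathcal{M}$ to the product set $\prod_i\{\mu_i^{\rm b}\cleq\mu_i\cleq\mu_i^{\rm a}\}$, extending classical MOT duality as $-\infty=-\infty$ off the convex-order chain via calendar arbitrage, it is essentially the paper's argument: Sion's minimax theorem on that compact convex product set, the inner supremum splitting across maturities, \cref{thm:band-duality} applied to each $\psi_i\in\cC_L$, and existence of $\Q^\star$ from $\cW_1$-compactness of $\cQ(\mu^{\rm b},\mu^{\rm a})$ together with upper semicontinuity of $h$. The chain-constrained detour and the sup-convolution fallback are unnecessary, because the minimax is applied to the bilinear pairing $(\mu,\psi)\mapsto\sum_i\mu_i(\psi_i)$ and the classical MOT duality you invoke already covers u.s.c.\ payoffs.
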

\begin{proof}%
Set $\cQ_i:= \cQ(\mu_i^{\rm b}, \mu_i^{\rm a})$ and $\cQ_{\otimes}:= \prod_{i=1}^N\cQ_i$. %
Then $\cQ_{\otimes}$ is convex and $\cW_1$-compact by \cref{lemma:convex-order-topology}. 
Next, observe that 
    \begin{align}\label{eq:multiDualityStep1}
    P(h) = \sup_{\mu \in \cQ_{\otimes}} \sup_{\Q \in\cQ(\mu)} \E_\Q[h], \quad\text{where}\quad     \cQ(\mu) = \big\{\Q \mid \text{martingale measure}, \,\, \mu_i^{\Q}= \mu_i,\,\, i \in [N]\big\}.
    \end{align}
    Consider $\mu \in \cQ_{\otimes}$ satisfying $\mu_i\cleq \mu_j$ for $1\leq i<j\leq N$. 
    We then apply the strong duality of classical MOT \cite[Corollary~1.2]{beiglbock2013model} to obtain 
    \begin{align}\label{eq:classicalDualityAppl}
       \sup_{\Q \in\cQ(\mu)} \E_\Q[h] =\inf_{\psi \in \Psi(h)} \sum_{i=1}^N \mu_i(\psi_i),
    \end{align}
where  $\Psi(h) = \{ \psi \in \cC_L^N  \, \mid \, \exists \ {\Delta} \text{ s.t.\ }   \textnormal{P\&L}_{\psi,\Delta}^h\ge 0  \}$. 
For $\mu \in \cQ_{\otimes}$ with $\mu_i\ncleq \mu_j$ for some $i<j$, we have $\cQ(\mu) = \varnothing$ and hence the supremum on the left-hand side equals $-\infty$. On the other hand, such $\mu$ allows for calendar arbitrage, and using also the linear growth of $h$, we see that the right-hand side also equals $-\infty$. In summary, \eqref{eq:classicalDualityAppl} holds for any  $\mu \in \cQ_{\otimes}$, and combining with~\eqref{eq:multiDualityStep1} yields
\begin{align*}
    P(h) = \sup_{\mu \in \cQ_{\otimes}} \sup_{\Q \in\cQ(\mu)} \E_\Q[h] = \sup_{\mu \in \cQ_{\otimes}}\inf_{\psi \in \Psi(h)} \sum_{i=1}^N \mu_i(\psi_i)\,.
\end{align*}
Next, we apply the minimax theorem. Consider $\cX= \cQ_{\otimes}$ equipped with the $\cW_1$-topology and $\cY = \Psi(h)$ equipped with the product topology induced by $\|\cdot\|_{L}$. Then $\cX, \cY$ are convex and  $\cX$ is compact. Moreover, $\cX\times \cY \ni (\mu, \psi)\mapsto \sum_{i=1}^N \mu_i(\psi_i)$ is linear and continuous with respect to the product topology. Therefore, the minimax theorem \cite[Corollary~3.3]{Sion.58} yields   
     \begin{equation}\label{eq:primal-inf-sup}
             \begin{aligned}
        P(h) =\inf_{\psi \in \Psi(h)}\sup_{\mu \in \cQ_{\otimes}} \sum_{i=1}^N \mu_i(\psi_i)= \inf_{\psi \in \Psi(h)}\sum_{i=1}^N \sup_{\mu_i \in \cQ_i}  \mu_i(\psi_i).
    \end{aligned}
     \end{equation}
    For each $i$, applying \cref{thm:band-duality} to the claim $f=\psi_i\in \cC_{L}$ yields 
    \begin{equation*}\label{eq:convex-band}
           \sup_{\mu_i\in \cQ_i} \mu_i(\psi_i) = \inf_{\psi_{i}^{\rm b,a}\in \Psi^{\rm b,a}(\psi_i)}\mu_i^{\rm a}(\psi_i^{\rm a}) -\mu_i^{\rm b}(\psi_i^{\rm b}).
    \end{equation*}
    Inserting this in~\eqref{eq:primal-inf-sup} yields
    \begin{align*}
        P(h)& = \inf_{\psi \in \Psi(h)}\sum_{i=1}^N \inf_{\psi_{i}^{\rm b,a}\in \Psi^{\rm b,a}(\psi_i)}\mu_i^{\rm a}(\psi_i^{\rm a}) -\mu_i^{\rm b}(\psi_i^{\rm b})\\ 
        &\geq \inf_{\psi^{\rm b,a}\in \Psi^{\rm b,a}(h)} \sum_{i=1}^N \mu_i^{\rm a}(\psi_i^{\rm a}) -\mu_i^{\rm b}(\psi_i^{\rm b}) = D(h),
    \end{align*}
    with $\Psi^{\rm b,a}(h)$ given in \eqref{eq:admissibleDual}, and now
    combining with weak duality of~\cref{lemma:weak-duality} concludes the proof of $P(h)=D(h)$. The existence of a primal optimizer again follows by compactness and semicontinuity.
\end{proof}

The following generalization is relevant for the quote-enhancement procedure in
Appendix~\ref{sec:existenceBidAsk}: as observed in \cref{rem:bid}, the
enhanced bid call-price curve need not be convex in the strike, and its
associated Breeden--Litzenberger measure is then generally signed, although
the induced pricing functional remains nonnegative on all nonnegative
convex payoffs.

\begin{remark}[Signed bid measures]\label{rem:signed-bid-measures}
The duality and attainment results of this section remain valid if some or
all of the bid measures $\mu_i^{\rm b}$ are \emph{signed} measures in the following sense. Suppose
that each $\mu_i^{\rm b}$ is a finite signed Radon measure satisfying
\[
    \int_{\R_+}(1+x)\,\lvert\mu_i^{\rm b}\rvert(\rd x)<\infty,
    \qquad
    \mu_i^{\rm b}(\R_+)=1,
    \qquad
    \int_{\R_+}x\,\mu_i^{\rm b}(\rd x)=x_0,
\]
and extend $\cleq$ by declaring, for $\nu\in\cP_1$,
\[
    \mu_i^{\rm b}\cleq\nu
    \quad\Longleftrightarrow\quad
    \mu_i^{\rm b}(\psi)\leq\nu(\psi)
    \quad\text{for all }\psi\in\textnormal{CVX}_L.
\]
For probability measures $\nu$ with barycenter $x_0$, this condition is,  similar to \eqref{eq:callCVX}, 
equivalently characterized by the corresponding inequalities for all call
payoffs. For $N>1$, \cref{asm:bidAskConvexOrder} should be replaced by the feasibility condition
\[
    \cQ(\mu^{\rm b},\mu^{\rm a})\neq\varnothing
    \quad\Longleftrightarrow\quad
    \mu_i^{\rm b}\cleq\underline{\mu}_i^{\rm a}
    \ \text{for all }i\in[N],
    \qquad
    \underline{\mu}_i^{\rm a}:=\min_{j\geq i}\mu_j^{\rm a}.
\]
The pairwise condition~\eqref{eq:bidAskConvexOrder} remains necessary, but
need not be sufficient.

We emphasize that in this extended formulation, the marginals of the calibrated martingale measures remain probability measures, like the martingale measures themselves. The signed measures act only as lower bounds for the pricing functionals.

Keeping the dual cost unchanged, the conclusions of
\cref{lemma:weak-duality,thm:band-duality,thm:strong-duality} continue to
hold in this extension. Indeed, weak duality uses only the inequalities against convex
functions. In the proof of \cref{thm:band-duality}, positivity is established
for the Hahn--Banach extension (and hence for the optimizer) without using
$\mu^{\rm b}\geq0$; all boundedness and tightness estimates use only
$\mu^{\rm a}$. Moreover, the sets
\(
    \{\nu\in\cP_1:
        \mu_i^{\rm b}\cleq\nu\cleq\mu_i^{\rm a}\}
\)
remain convex and $\cW_1$-compact, so the minimax proof of
\cref{thm:strong-duality} is unchanged.
\end{remark}

\section{Convergence to Classical MOT}\label{sec:convergence}

In this section, we study the convergence of BAMOT to classical MOT when the bid--ask spread converges to zero. That is, for each maturity, the bid and ask marginals converge to a single marginal, generating the unambiguous prices of vanillas for that maturity. The first subsection establishes consistency in the sense that the BAMOT value (and hence the optimal superhedging cost) converges to the natural limit. \Cref{sec:bid_ask_dist} introduces a novel metric, termed the \emph{bid--ask distance}, which is directly tied to the bid--ask spreads of call and put options. \Cref{sec:convergence-rate} uses this metric for a quantitative analysis of the convergence to MOT. We restrict our attention to the single-maturity case $N=1$ and establish two results: a linear rate of convergence when $h$ can be written as a difference of Lipschitz convex functions, and a square-root rate of convergence when $h \in {\rm USC}_L$ with bounded variation. The general multi-maturity setting is left for future investigation.

\subsection{Consistency}

The following result shows the continuity of the BAMOT value for a sequence of bid and ask marginals with monotonically decreasing spread. In particular, if the spread converges to zero, meaning that bid and ask marginals converge to the same limit, it establishes the consistency of BAMOT with classical MOT. Thus, if the spread is small, the superhedging price with bid--ask spread is approximately equal to the superhedging price obtained with static hedging instruments priced with the mid marginal.

\begin{proposition}\label{prop:consistency}
Let $\{(\mu_n^{\rm b}, \mu_n^{\rm a})\}_{n\in \N} \subset \cP_1^N\times \cP_1^N$ be a sequence of bid and ask marginals such that
    $$\mu_{n,i}^{\rm b} \cleq \mu_{m,i}^{\rm b} \cleq \mu_{m,i}^{\rm a} \cleq \mu_{n,i}^{\rm a}\quad \forall n\leq m, \ i\in [N].$$
    Then $\mu_{\infty}^{\rm b}=\lim_m\mu_{m}^{\rm b}$ and $\mu_{\infty}^{\rm a}=\lim_m\mu_{m}^{\rm a}$ exist and are again valid bid and ask marginals. Let $h\in \text{USC}_L(\R^N_+)$ be a payoff function that is bounded from above. Consider the corresponding BAMOT problems
\begin{equation*}%
    \begin{aligned}
            P_n(h):=\sup_{\Q \in \cQ(\mu^{\rm b}_n, \mu^{\rm a}_n)} &\E_\Q[h(X_1, \ldots, X_N)], \quad n\in\N\cup\{\infty\}.
\end{aligned}
\end{equation*}
Then, it holds that
$$
  P_n(h) \to P_\infty(h).
$$
In particular, if $\mu_\infty^{\rm b}=\mu_\infty^{\rm a}=:\mu_\infty$, then $P_n(h)$ converges to the value $\sup_{\Q \in \cQ(\mu_\infty)} \E_\Q[h(X_1, \ldots, X_N)]$ of the classical MOT problem associated with~$\mu_\infty$. 
\end{proposition}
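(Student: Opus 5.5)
We split the argument into existence of the limiting marginals and a compactness/monotonicity argument for the values.

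\emph{Existence of the limits.} For fixed $i$, the sequence $(\mu_{m,i}^{\rm a})_m$ is decreasing in convex order and bounded below by $\mu_{1,i}^{\rm b}$. It is also dominated in convex order by $\mu_{1,i}^{\rm a}$, so it is uniformly integrable and hence $\cW_1$-relatively compact. For every $K\ge0$, the call prices $c_m(K)=\E_{\mu_{m,i}^{\rm a}}[(X-K)^+]$ are monotone in $m$, and they are bounded below by the corresponding $\mu_{1,i}^{\rm b}$ prices. They therefore converge pointwise. Any $\cW_1$-limit point must have these limiting call prices and barycenter $x_0$. A measure is determined by its call prices via Breeden--Litzenberger, so the limit point is unique and $\mu_{m,i}^{\rm a}\to\mu_{\infty,i}^{\rm a}$ in $\cW_1$. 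The same argument, using increasing call prices bounded above by those of $\mu_{1,i}^{\rm a}$, gives $\mu_{m,i}^{\rm b}\to\mu_{\infty,i}^{\rm b}$.

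Convex-order inequalities pass to $\cW_1$-limits, since $\nu\mapsto\nu(\psi)$ is continuous for $\psi\in\text{CVX}_L$. Hence
$$\mu_{n,i}^{\rm b}\cleq\mu_{\infty,i}^{\rm b}\cleq\mu_{\infty,i}^{\rm a}\cleq\mu_{n,i}^{\rm a}\quad\text{for all } n.$$
The cross-maturity condition $\mu_{\infty,i}^{\rm b}\cleq\mu_{\infty,j}^{\rm a}$ for $i\le j$ follows by passing to the limit in \eqref{eq:bidAskConvexOrder}. So the limits are valid bid and ask marginals.

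\emph{Convergence of the values.} The sets $\cQ_n:=\cQ(\mu_n^{\rm b},\mu_n^{\rm a})$ are decreasing in $n$ and all contain $\cQ_\infty$. Consequently $P_n(h)$ is nonincreasing and satisfies $P_n(h)\ge P_\infty(h)$, so it suffices to prove $\lim_n P_n(h)\le P_\infty(h)$.

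By the compactness used in Theorem~\ref{thm:strong-duality}, pick optimizers $\Q_n\in\cQ_n$. All marginals of the $\Q_n$ are dominated in convex order by $\mu_{1,i}^{\rm a}$. The family $\{\Q_n\}$ is therefore $\cW_1$-relatively compact on $\R_+^N$, since uniform integrability of each coordinate gives uniform integrability of the norm. Take a subsequence converging in $\cW_1$ to some $\Q$. The martingale property is preserved under $\cW_1$-limits: test it against $\E[g(X_1,\dots,X_i)(X_{i+1}-X_i)]=0$ for bounded continuous $g$, which is continuous under $\cW_1$. For fixed $n$ and all $m\ge n$ we have $\mu_{n,i}^{\rm b}\cleq\mu_i^{\Q_m}\cleq\mu_{n,i}^{\rm a}$, which gives $\Q\in\cQ_n$ for every $n$. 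Passing $n\to\infty$ via the $\cW_1$ convergence of the bounds then yields $\Q\in\cQ_\infty$.

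Finally, because $h$ is u.s.c.\ and bounded above, the map $\Q\mapsto\E_\Q[h]$ is u.s.c.\ under weak convergence, by the Portmanteau theorem applied to $-h$, which is bounded below and l.s.c. Therefore
$$\lim_n P_n(h)=\limsup_n\E_{\Q_n}[h]\le\E_\Q[h]\le P_\infty(h).$$
The special case $\mu_\infty^{\rm b}=\mu_\infty^{\rm a}=\mu_\infty$ is immediate, because then $\cQ_\infty=\cQ(\mu_\infty)$ by antisymmetry of convex order.

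The step I expect to need the most care is the existence and identification of the limits: showing uniqueness of the limit point via call prices, and checking that the martingale constraint survives $\cW_1$ limits on $\R_+^N$. Both become routine once uniform integrability is secured from the domination by $\mu_{1,i}^{\rm a}$.
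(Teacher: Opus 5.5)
Your proof is correct. Its skeleton matches the paper's: compactness of $\cQ_1$, a convergent subsequence of maximizers, and showing the limit point lies in $\cQ_\infty$. Where you differ is in how the semicontinuous payoff is handled.

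The paper first proves the claim for $h\in\cC_b(\R^N_+)$, where $\E_{\Q_n^*}[h]\to\E_{\Q_\infty^*}[h]$ is immediate. It then reaches u.s.c.\ $h$ bounded above by choosing $h_k\in\cC_b$ with $h_k\downarrow h$. It invokes Denis--Hu--Peng's continuity from above, $P_n(h_k)\downarrow P_n(h)$, for each sublinear expectation, which rests on weak compactness. Finally it interchanges the limits in $k$ and $n$ by an $\varepsilon$-argument.

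You instead apply the Portmanteau inequality $\limsup_k\E_{\Q_{n_k}}[h]\le\E_{\Q}[h]$ directly to the u.s.c.\ payoff along the extracted subsequence. This merges the two steps into one and avoids the external theorem.

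You also supply details the paper only asserts:
\begin{itemize}
\item You identify the limiting marginals via $\cW_1$-relative compactness plus uniqueness from call prices; the paper just notes that the call prices converge.
\item You check explicitly that the limit point lies in $\cQ_\infty$ via convergence of the convex-order bounds; the paper uses $\cQ_\infty=\bigcap_n\cQ_n$ and weak closedness of $\cQ_n$ without comment.
\end{itemize}

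Because you work with $\cW_1$ rather than weak convergence, your argument does not actually need $h$ bounded above. For $h\in\textnormal{USC}_L(\R^N_+)$ with $h\le C(1+\|x\|)$, apply Portmanteau to $h-C(1+\|x\|)$ and use $\E_{\Q_{n_k}}[\|X\|]\to\E_{\Q}[\|X\|]$. The paper's route is more modular in return: the compactness argument is needed only for bounded continuous payoffs, and the semicontinuous extension is delegated to a general property of sublinear expectations.
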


\begin{proof}
The existence of $\lim \mu_{n}^{\rm b}$ can be seen from the associated call prices, which are monotone in~$n$ and therefore convergent. Similarly for $\mu_{n}^{\rm a}$.

Set $\cQ_n=\cQ(\mu^{\rm b}_n, \mu^{\rm a}_n)$, so that $\cQ_\infty=\cap_n\cQ(\mu^{\rm b}_n, \mu^{\rm a}_n)$. Since $\cQ_n$ decreases to $\cQ_\infty$,  clearly $\{P_n(h)\}_{n\in \N}$ is a decreasing sequence with $\lim_{n\to \infty} P_n(h) \geq  P_\infty(h)$. We show that $\lim_{n\to \infty} P_n(h)\leq P_\infty(h)$.

Suppose first that $h$ is bounded and continuous. Let $\Q_n^* \in \cQ_n$ denote a maximizer to $P_n(h)$.  As $\{\Q_n^*\}_{n \in \N}\subset \cQ_1$ and $\cQ_1$ is weakly compact, after taking a subsequence, there exists $\Q_\infty^*$ such that $\Q_n^*$ converges weakly to $\Q_\infty^*$. Then $\E_{\Q_n^*}[h] \to \E_{\Q_\infty^*}[h]$ since $h$ is bounded and continuous. Moreover, as $\{\cQ_m\}_{m\in \N}$ is decreasing, $\Q_m^* \in \cQ_m \subset \cQ_n$ for all $m\geq n$. Letting $m\to\infty$ yields $\Q_\infty^* \in \cQ_n$ as  $\cQ_n$ is weakly closed, and as $n$ was arbitrary, we conclude that $\Q_\infty^* \in \cap_{n\in \N}\cQ_n = \cQ_\infty$. In summary, $\lim_{n\to \infty}P_n(h) = \lim_{n\to \infty}\E_{\Q_n^*}[h] = \E_{\Q_\infty^*}[h] \leq  P_\infty(h)$, completing the proof when $h$ is bounded and continuous.

In the general case, there exist $h_k \in \cC_b(\R^N_+)$ decreasing to $h$, and the above shows that $P_n(h_k)$ decreases to $P_\infty(h_k)$ for every $k$. For each~$n\in\N\cup\{\infty\}$, the sublinear expectation $P_n$ satisfies $P_n(h_k) \downarrow P_n(h)$ by \citep[Theorem~31]{DenisHuPeng2011}. %
Given $\varepsilon>0$, choose $k$ such that $|P_\infty(h)-P_\infty(h_k)|<\varepsilon/2$ and then $n\in\N$ such that $|P_\infty(h_k)-P_n(h_k)|<\varepsilon/2$. By monotonicity of $P_n$, we have $P_n(h) \leq P_n(h_k) \leq P_\infty(h) + \varepsilon$, showing that $\lim_{n\to \infty} P_n(h)\leq P_\infty(h)$ as desired.
\end{proof}

\subsection{The Bid--Ask Distance}\label{sec:bid_ask_dist}

Next, we introduce a metric that is directly tied to the bid--ask spreads of call and put options. It also provides a characterization of the convex order between probability measures.

\begin{definition}\label{def:bidaskdist}
Given $\mu,\nu \in \cP_1$, introduce the directed distance   
\begin{equation*}\label{eq:directedDist}
    \vec{d}(\mu,\nu) = \sup \left\{(\mu-\nu)(\psi) \,\mid \, \psi \text{ convex, } \textnormal{Lip}(\psi) \le 1 \right\},
\end{equation*}
and the \emph{bid--ask distance} given by the symmetrization   
\begin{equation*}
    d(\mu,\nu) = \frac{\vec{d}(\mu,\nu) + \vec{d}(\nu,\mu)}{2}.
\end{equation*}
\end{definition}
Recently, \cite{WieselZhang2023} (see also \cite{acciaio}) proposed an alternative characterization of  convex order in the $2$-Wasserstein space. Their criterion relies on maximal covariances, a formulation that does not directly translate to the bid--ask interpretation needed for our analysis.   The bid--ask distance also relates to the relaxed $L^\infty$ metric introduced in \cite{SonerTissot}, which is constructed from an asymmetric distance to compare semicontinuous functions in a nearly uniform manner. The following proposition collects key properties of $\vec{d}$ and $d$. 
\begin{proposition}\label{prop:bidAskDistance}
Let $\cW_1$ denote the 1-Wasserstein distance. Then the following hold. 
    \begin{enumerate}[label = (\roman*)]

        \item  $ \vec{d}(\mu,\nu) = 0$ if and only if $\mu \cleq \nu$.

              \item $\vec{d}$ is an asymmetric distance \cite{mennucci}; it is nonnegative, satisfies the triangle inequality, and separates points in the sense that $\vec{d}(\mu,\nu) = \vec{d}(\nu,\mu) = 0 $ implies $\mu = \nu$.  
      Consequently, $d$ is a metric.

\item  If $\mu,\nu$ have identical barycenters, then 
$$ \vec{d}(\mu,\nu) = 2\sup_{K\geq 0} \left(\E_{\mu}[(X-K)^+]-\E_{\nu}[(X-K)^+]\right).$$

        \item We have $d(\mu,\nu) \le \cW_1(\mu,\nu)$ for all  $\mu,\nu \in \cP_1$. Moreover, there exist sequences  $(\mu_n),(\nu_n)$ in $\cP_1$ such that 
        $$\lim_{n\to \infty} d(\mu_n,\nu_n) = 0, \;  \text{ while } \cW_1(\mu_n,\nu_n) = 1 \quad \forall n \in \N.$$ 

        \item The metrics $d$, $\cW_1$ induce the same topology on $\cP_1$.

    \end{enumerate}
\end{proposition}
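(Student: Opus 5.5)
The plan is to work item by item from the definition of $\vec d$ as a supremum over convex $1$-Lipschitz functions. Throughout I will use two elementary facts. First, every convex function with linear growth on $\R_+$ is, up to a positive multiple, convex and $1$-Lipschitz. Second, affine functions $\psi(x)=a+bx$ with $|b|\le1$ are admissible. The second fact immediately gives $\vec d(\mu,\nu)\ge0$ (take $\psi\equiv0$), and for measures with different barycenters it gives $\vec d>0$ (take $\psi(x)=\pm x$).

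\emph{Items (i) and (ii).} For (i), $\mu\cleq\nu$ means $(\mu-\nu)(\psi)\le0$ for all $\psi\in\text{CVX}_L$, so $\vec d(\mu,\nu)=0$. Conversely, suppose $\vec d(\mu,\nu)=0$ and let $\psi\in\text{CVX}_L$ be arbitrary. A convex function of linear growth on $\R_+$ need not be globally Lipschitz near $0$, but I can approximate it by the call representation or by truncated slopes; more simply, by \eqref{eq:callCVX} it suffices to check $\psi(x)=\pm x$ and $\psi(x)=(x-K)^+$, all of which are $1$-Lipschitz convex. For (ii), the triangle inequality follows from $(\mu-\rho)(\psi)=(\mu-\nu)(\psi)+(\nu-\rho)(\psi)$ after taking suprema. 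Separation follows from (i): $\mu\cleq\nu\cleq\mu$ forces equal call prices for all $K$, hence $\mu=\nu$ by Breeden--Litzenberger (the second derivative in $K$). Symmetrizing then yields that $d$ is a metric, with finiteness coming from (iv).

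\emph{Item (iii).} Assume equal barycenters. A convex $1$-Lipschitz $\psi$ on $\R_+$ can be written as $\psi(x)=\psi(0)+\psi'(0+)x+\int(x-K)^+\,\rho(\rd K)$, where $\rho\ge0$ and $\psi'(0+)+\rho(\R_+)\le1$ with $\psi'(0+)\ge-1$, so $\rho(\R_+)\le2$. The constant and linear terms cancel under $\mu-\nu$, so
\[
(\mu-\nu)(\psi)=\int\big(C_\mu(K)-C_\nu(K)\big)\rho(\rd K)\le2\sup_K\big(C_\mu-C_\nu\big)^+ ,
\]
with $C_\mu(K)=\E_\mu[(X-K)^+]$. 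For $K\to\infty$ the difference tends to $0$, so the supremum in (iii) is nonnegative and the positive part can be dropped. For the reverse inequality, take $\psi(x)=2(x-K)^+-x$, which is convex with slopes in $\{-1,1\}$ and so $1$-Lipschitz; this attains $2(C_\mu-C_\nu)(K)$.

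\emph{Items (iv) and (v).} For (iv), each admissible $\psi$ is $1$-Lipschitz, so by Kantorovich--Rubinstein $\vec d(\mu,\nu)\le\cW_1(\mu,\nu)$, hence $d\le\cW_1$. For the counterexample, I would take $\mu_n=\delta_0$ and $\nu_n=(1-\tfrac1n)\delta_0+\tfrac1n\delta_n$, which do \emph{not} have equal barycenters. Then $\cW_1(\mu_n,\nu_n)=1$, but $\vec d(\nu_n,\mu_n)\ge1$ via $\psi=x$, so this fails. Instead, keep equal means: $\mu_n=\delta_1$ and $\nu_n=(1-\tfrac1n)\delta_{1-\frac1{n-1}}\ldots$. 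The cleanest choice is $\mu_n=\delta_1$ and $\nu_n=\tfrac12\delta_{1-1/2}+\tfrac12\delta_{3/2}$ scaled so that the spread goes to $0$ while the mass moves far. Concretely, I would take $\nu_n=(1-\tfrac1n)\delta_{1-\frac{1}{n}\cdot\frac{n}{2(n-1)}}+\tfrac1n\delta_{1+n/2}$ and check the following. The $\cW_1$ distance is the mean absolute deviation, which is about $1$ after normalization. Meanwhile $\vec d(\mu_n,\nu_n)=0$ by Jensen, and $\vec d(\nu_n,\mu_n)=2\sup_K(C_{\nu_n}-C_{\mu_n})$ by (iii), which stays bounded below. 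I therefore expect the main obstacle to be finding a genuine example. Because mean absolute deviation is exactly $2\sup_K(C_\nu-C_\mu)$ at $K=1$, one-sided pairs in convex order give $d\approx\cW_1/2$. So I would instead use two measures that are \emph{not} convex-ordered, with oscillating call-price differences of small amplitude but large $\cW_1$ (e.g.\ small-amplitude, high-frequency perturbations spread over a long range), bounding $\sup_K|C_\mu-C_\nu|$ by $O(1/n)$ while $\int|F_\mu-F_\nu|=1$.

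For (v), $d\le\cW_1$ shows that $\cW_1$-convergence implies $d$-convergence. Conversely, if $d(\mu_n,\mu)\to0$, then call prices converge uniformly in $K$ and the barycenters converge. Uniform convergence of calls gives weak convergence, and $C_{\mu_n}(K)\to C_\mu(K)$ together with $C_\mu(K)\to0$ gives uniform integrability of first moments, hence $\cW_1$-convergence.
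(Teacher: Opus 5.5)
Items (i)--(iii), the bound $d\le\cW_1$ in (iv), and item (v) are sound and follow the paper's proof closely. In (iii) you use Carr--Madan plus Fubini, and $2(x-K)^+-x=|x-K|-K$ for the reverse bound; you are even a bit more careful than the paper about dropping the positive part. Your uniform-integrability argument in (v) is a harmless variant. One side remark: your opening ``fact'' that every element of $\textnormal{CVX}_L$ is a multiple of a $1$-Lipschitz convex function is false (e.g.\ $-\sqrt{x}$), as you notice yourself in (i). Nothing depends on it, since you reduce to calls and $\pm x$ via \eqref{eq:callCVX}.

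The genuine gap is the second half of (iv): you never produce sequences with $d(\mu_n,\nu_n)\to0$ and $\cW_1(\mu_n,\nu_n)=1$. Your two concrete attempts fail, as you note. Your reason for abandoning convex-ordered pairs is also wrong. The equality $\cW_1(\mu,\nu)=\vec d(\nu,\mu)$, hence $d=\cW_1/2$, holds when $\mu$ is the Dirac mass at the common mean, but not for general $\mu\cleq\nu$.

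In fact the paper's example is convex ordered: take $\mu_n=\frac{1}{2n+1}\sum_{|m|\le n}\delta_{2m}$ and $\nu_n=\mu_n*\tfrac12(\delta_{-1}+\delta_1)$, shifted into $\R_+$. Then $\cW_1(\mu_n,\nu_n)=1$:
\begin{itemize}
\item the upper bound comes from the coupling $X_1=X_0\pm1$;
\item the lower bound comes from the $1$-Lipschitz test function $\sum_{|m|\le n}B(\cdot,2m)$, which equals $1$ on the even and $0$ on the odd integers.
\end{itemize}
Meanwhile $\vec d(\mu_n,\nu_n)=0$, and $\vec d(\nu_n,\mu_n)=\frac1{2n+1}$ by midpoint convexity in the interior and the Lipschitz bound at the two endpoints.

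This is exactly your ``small-amplitude, high-frequency'' mechanism. The difference $F_{\nu_n}-F_{\mu_n}$ alternates $\pm\frac{1}{2(2n+1)}$ on $4n+2$ unit intervals, so its $L^1$ norm, which is $\cW_1$, equals $1$. Its tail integrals, i.e.\ the call-price differences, are all at most $\frac{1}{2(2n+1)}$ in absolute value, so (iii) gives $\vec d(\nu_n,\mu_n)=\frac{1}{2n+1}$. As written, however, your proposal leaves this as a heuristic. It never checks the actual constraint, namely that the perturbed distribution function stays nondecreasing and keeps the right barycenter. So the existence claim in (iv) is not proved.
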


\begin{proof}
    See \cref{proof:bidAskProp}.
\end{proof}

\begin{remark}\label{rmk:bid-ask-distance}
\begin{enumerate}[label=(\alph*)]
\item Property~(iii) highlights the connection between the proposed distance and the bid--ask spreads of quoted vanilla options; if $c^{\rm s}(K,T) = \E_{\mu^{\rm s}}[(X-K)^+]$, $\rm s \in \{\rm b, \rm a\}$, denotes the  price of the $(K,T)$ call option on each side of market, then 
\begin{equation}\label{eq:bidaskDistanceExample}
    d(\mu^{\rm b},\mu^{\rm a}) = \frac{1}{2} \vec{d}(\mu^{\rm a},\mu^{\rm b}) = \sup_{K \ge 0} \left(c^{\rm a}(K,T) - c^{\rm b}(K,T)\right),
\end{equation}
where the first equality follows from property~(i). 
As each spread in \eqref{eq:bidaskDistanceExample} represents the (round-trip) transaction cost of trading these options, the distance indeed  measures the level of bid--ask friction in the market.
\item The bid--ask distance $d$ is equivalent to the following \emph{integral probability metric}, %
\begin{equation}\label{eq:dualnorm}
\tilde{d}(\mu,\nu)
= \sup \big\{(\mu-\nu)(\psi) \, \mid \, \psi=\psi^{\rm a}-\psi^{\rm b}\big\},
\end{equation}
where $\psi^{\rm a}$ and $\psi^{\rm b}$ are convex and $1$-Lipschitz. Indeed, observe that $\vec{d}(\mu,\nu)\le \tilde{d}(\mu,\nu)$, so $d(\mu,\nu)\le \tilde{d}(\mu,\nu)$ as well. On the other hand, for any test function $\psi=\psi^{\rm a}-\psi^{\rm b}$ in \eqref{eq:dualnorm},
\[
(\mu-\nu)(\psi)
= (\mu-\nu)(\psi^{\rm a})+(\nu-\mu)(\psi^{\rm b})
\le \vec{d}(\mu,\nu)+\vec{d}(\nu,\mu)
= 2\,d(\mu,\nu).
\]
\item The directed distance $\vec{d}$ is related to weak optimal transport through the convex Kantorovich--Rubinstein duality formula (see, e.g., \cite{beiglboeck2025}),
\begin{equation}\label{eq:CvxKantRubin}
\sup_{\psi \text{ convex},\ \textnormal{Lip}(\psi)\le 1} (\mu-\nu)(\psi)
= \inf_{\PP \in \Pi(\mu,\nu)} \E_{\PP}\big[|\E_{\PP}[X_1\mid X_0]-X_0|\big],
\end{equation}
where $(X_0,X_1)\sim \PP$ and $\Pi(\mu,\nu)$ denotes the set of all couplings between $\mu$ and $\nu$. The right-hand side of \eqref{eq:CvxKantRubin} thus provides an alternative characterization of $\vec{d}(\mu,\nu)$. As observed in~\cite{beiglboeck2025}, \eqref{eq:CvxKantRubin} is  a quantitative analogue of Strassen's theorem; the left-hand side quantifies the extent to which convex order is violated, while the dual measures the deviation from the martingale property.
\end{enumerate}
\end{remark}

\subsection{Convergence Rate}\label{sec:convergence-rate}

Next, we focus on the single-maturity case $N=1$ and show how the bid--ask distance controls the discrepancy between the BAMOT price and its frictionless counterpart. 
When $h$ is given by the difference of convex functions, then the rate is linear as shown next. In particular, the result applies to all multi-leg option strategies  $h(x) = \int_{\R_+}(x-K)^{+}\lambda(\rd K)$ such as bull, bear, and butterfly spreads. 
The following results are presented from the primal perspective \eqref{eq:primal1M}, noting that similar rates apply to the dual value by strong duality.

\begin{proposition}\label{prop:MidEstimate}
Let $\mu^{\rm b} \cleq \mu^{\rm a}$ and denote the mid marginal by $\mu^{\rm m} = \frac{\mu^{\rm b} + \mu^{\rm a}}{2} $. 
Suppose that $h = \psi^{\rm a} - \psi^{\rm b}$ for  some Lipschitz pair $(\psi^{\rm a}, \psi^{\rm b})\in \textnormal{CVX}_L^2$. Then, 
\begin{equation*}\label{eq:midRate1}
    0\le P(h) - \mu^{\rm m}(h) \le  C \hspace{0.3mm}  d(\mu^{\rm b},\mu^{\rm a}),
\end{equation*}
where $C = \textnormal{Lip}(\psi^{\rm a}) + \textnormal{Lip}(\psi^{\rm b})$.
\end{proposition}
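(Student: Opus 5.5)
The argument is direct and does not need duality. For the lower bound, observe that $\mu^{\rm m}\in\cQ(\mu^{\rm b},\mu^{\rm a})$. Indeed, for every $\psi\in\textnormal{CVX}_L$ we have $\mu^{\rm b}(\psi)\le\mu^{\rm a}(\psi)$, so the average $\mu^{\rm m}(\psi)=\tfrac12(\mu^{\rm b}(\psi)+\mu^{\rm a}(\psi))$ lies between them. Hence $\mu^{\rm b}\cleq\mu^{\rm m}\cleq\mu^{\rm a}$, which gives $P(h)\ge\mu^{\rm m}(h)$.

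For the upper bound, fix an arbitrary $\mu\in\cQ(\mu^{\rm b},\mu^{\rm a})$. Since $\psi^{\rm a},\psi^{\rm b}$ are Lipschitz and hence in $L^1(\mu)$, we can write
\[
\mu(h)-\mu^{\rm m}(h)=(\mu-\mu^{\rm m})(\psi^{\rm a})-(\mu-\mu^{\rm m})(\psi^{\rm b}).
\]
We bound the two terms separately, using convexity together with the sandwich $\mu^{\rm b}(\psi)\le\mu(\psi)\le\mu^{\rm a}(\psi)$ for convex $\psi$.

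First term: $(\mu-\mu^{\rm m})(\psi^{\rm a})\le(\mu^{\rm a}-\mu^{\rm m})(\psi^{\rm a})=\tfrac12(\mu^{\rm a}-\mu^{\rm b})(\psi^{\rm a})$.

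Second term: $-(\mu-\mu^{\rm m})(\psi^{\rm b})\le(\mu^{\rm m}-\mu^{\rm b})(\psi^{\rm b})=\tfrac12(\mu^{\rm a}-\mu^{\rm b})(\psi^{\rm b})$.

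Next, normalize. If $L=\textnormal{Lip}(\psi)>0$, then $\psi/L$ is convex and $1$-Lipschitz, so by Definition~\ref{def:bidaskdist}
\[
(\mu^{\rm a}-\mu^{\rm b})(\psi)\le L\,\vec d(\mu^{\rm a},\mu^{\rm b}).
\]
If $L=0$, then $\psi$ is constant and the difference vanishes. By Proposition~\ref{prop:bidAskDistance}(i), $\vec d(\mu^{\rm b},\mu^{\rm a})=0$ since $\mu^{\rm b}\cleq\mu^{\rm a}$, so $\vec d(\mu^{\rm a},\mu^{\rm b})=2\,d(\mu^{\rm b},\mu^{\rm a})$, as in \eqref{eq:bidaskDistanceExample}.

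Combining,
\[
\mu(h)-\mu^{\rm m}(h)\le\tfrac12\big(\textnormal{Lip}(\psi^{\rm a})+\textnormal{Lip}(\psi^{\rm b})\big)\cdot 2\,d(\mu^{\rm b},\mu^{\rm a})=C\,d(\mu^{\rm b},\mu^{\rm a}).
\]
Taking the supremum over $\mu$ yields the claim.

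There is no real obstacle. The only points needing care are the factor $2$ relating $\vec d(\mu^{\rm a},\mu^{\rm b})$ to $d$, and the integrability of the Lipschitz convex functions, which holds because all measures are in $\cP_1$.
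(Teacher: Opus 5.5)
Your proof is correct and follows the paper's argument. The lower bound comes from $\mu^{\rm m}\in\cQ(\mu^{\rm b},\mu^{\rm a})$, and the upper bound from sandwiching each convex leg between bid and ask, rescaling by the Lipschitz constants to bound by $\vec d(\mu^{\rm a},\mu^{\rm b})$, and using $\vec d(\mu^{\rm b},\mu^{\rm a})=0$ to get $\vec d(\mu^{\rm a},\mu^{\rm b})=2\,d(\mu^{\rm b},\mu^{\rm a})$; you simply write out the ``rearranging'' step the paper leaves implicit.
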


\begin{proof} 
The lower bound follows as $\mu^{\rm m} \in \cQ(\mu^{\rm b}, \mu^{\rm a})$. On the other hand, from the definitions of $\mu^{\rm m}$ and $\vec{d}$, we obtain after rearranging that 
\begin{equation}\label{eq:mid-estimate}
    \begin{aligned}
     P(h) - \mu^{\rm m}(h) \le \frac{1}{2}(\mu^{\rm a}-\mu^{\rm b})(\psi^{\rm a}) + \frac{1}{2}(\mu^{\rm a}-\mu^{\rm b})(\psi^{\rm b}) 
    \le \frac{1}{2}(\textnormal{Lip}(\psi^{\rm a}) + \textnormal{Lip}(\psi^{\rm b}))\vec{d}(\mu^{\rm a},\mu^{\rm b}).
\end{aligned}
\end{equation}
Since $\vec{d}(\mu^{\rm b},\mu^{\rm a}) =0$ by $\mu^{\rm b} \cleq \mu^{\rm a}$ and \cref{prop:bidAskDistance}~(i), we know that $\frac{1}{2}\vec{d}(\mu^{\rm a},\mu^{\rm b}) = d(\mu^{\rm a},\mu^{\rm b})$, which, together with \eqref{eq:mid-estimate}, concludes the upper bound.
\end{proof}

If $h$ is upper semicontinuous with bounded variation, as in the case of digital options, the convergence rate is typically square-root, as shown next. We will see in \cref{sec:convDigital} that this rate is in fact numerically sharp.
\begin{proposition}\label{prop:USC_sqrt_rate}
Let $\mu^{\rm b} \cleq \mu^{\rm a}$ and assume that the mid-marginal $\mu^{\rm m}=\frac{\mu^{\rm b}+\mu^{\rm a}}{2}$ admits a density $\phi^{\rm m}$ that is bounded by $M>0$. 
Let $h\in{\rm USC}_L$ have bounded variation $V(h)<\infty$. 
Then, 
\[
0 \le P(h)-\mu^{\rm m}(h) \le C\sqrt{d(\mu^{\rm b},\mu^{\rm a})},
\]
where $C = \sqrt{2M} V(h)$.
\end{proposition}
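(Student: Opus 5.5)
The plan is a smoothing argument: replace $h$ by a mollified version that is a difference of Lipschitz convex functions, apply \cref{prop:MidEstimate} to it, and optimize over the mollification scale. The lower bound is immediate, since $\mu^{\rm m}\in\cQ(\mu^{\rm b},\mu^{\rm a})$. For the upper bound, fix $\varepsilon>0$ and take a bounded-variation regularization $h_\varepsilon$ of $h$ whose derivative is bounded by a multiple of $V(h)/\varepsilon$. One could use the convolution $h\ast\rho_\varepsilon$ with a uniform kernel of width $\varepsilon$, extended suitably near $0$. Write $h_\varepsilon=\psi^{\rm a}_\varepsilon-\psi^{\rm b}_\varepsilon$, where
\[
\psi^{\rm a}_\varepsilon(x)=\int_0^x (h_\varepsilon')^+ ,\qquad \psi^{\rm b}_\varepsilon(x)=\int_0^x (h_\varepsilon')^- .
\]
This splitting is not automatically convex, so instead use the convexity-based splitting: $h_\varepsilon''$ is a signed measure with total mass at most of order $V(h)/\varepsilon$, and its Jordan decomposition gives convex $\psi^{\rm a}_\varepsilon,\psi^{\rm b}_\varepsilon$ whose Lipschitz constants sum to at most of order $V(h)/\varepsilon$. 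This is the first place where the constant must be tracked.

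The error then splits into three terms:
\[
P(h)-\mu^{\rm m}(h)\le \sup_{\mu\in\cQ}\big(\mu(h)-\mu(h_\varepsilon)\big)+\big(P(h_\varepsilon)-\mu^{\rm m}(h_\varepsilon)\big)+\big(\mu^{\rm m}(h_\varepsilon)-\mu^{\rm m}(h)\big).
\]
\Cref{prop:MidEstimate} bounds the middle term by $(c V(h)/\varepsilon)\, d(\mu^{\rm b},\mu^{\rm a})$. The last term uses the bounded density: $|\mu^{\rm m}(h_\varepsilon-h)|\le M\int|h_\varepsilon-h|\,\mathrm{d}x\le M\varepsilon V(h)$, because a uniform average over a window of width $\varepsilon$ moves $L^1$-mass by at most $\varepsilon V(h)$.

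The first term is the main obstacle, since a general $\mu\in\cQ(\mu^{\rm b},\mu^{\rm a})$ need not have a density; for instance, it may put an atom at a jump of $h$. The tool here is the convex-order sandwich. It is cleaner to choose the regularization one-sided, $h_\varepsilon\ge h$ (e.g. a sup-type smoothing adapted to the USC jumps), so that this term is $\le0$. The third term is then $\mu^{\rm m}(h_\varepsilon-h)\ge0$, and it is still bounded by $M\varepsilon V(h)$ using the density of $\mu^{\rm m}$. If one-sidedness and the second-derivative control of $h_\varepsilon$ cannot be obtained simultaneously, the fallback is to work through the dual of \cref{thm:band-duality}: a convex pair superhedging $h_\varepsilon$ also superhedges $h$. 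This gives $P(h)\le P(h_\varepsilon)$ directly and again yields the three-term bound with the first term dropped.

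Combining the bounds gives
\[
P(h)-\mu^{\rm m}(h)\le V(h)\Big(\frac{d}{\varepsilon}+M\varepsilon\Big)\cdot c .
\]
Choosing $\varepsilon=\sqrt{d/(M)}$ up to constants yields the rate $V(h)\sqrt{Md}$. Matching the exact constant $\sqrt{2M}$ requires the kernel normalization to be tuned, for example a half-width choice making the two terms $\frac{d}{2\varepsilon}V$ and $M\varepsilon V$. The case $d=0$ is trivial, since then $\mu^{\rm b}=\mu^{\rm a}$ by \cref{prop:bidAskDistance} and the set $\cQ(\mu^{\rm b},\mu^{\rm a})$ is a singleton.
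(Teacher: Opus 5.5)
Your route differs from the paper's and it can be completed, even with exactly the constant $\sqrt{2M}\,V(h)$. However, the step that carries the argument is left open. You need an upper regularization $h_\varepsilon\ge h$, holding at every point so that atoms of $\mu$ at jumps of $h$ are harmless. Its curvature mass $|h_\varepsilon''|(\R_+)$ must be of order $V(h)/\varepsilon$, and its $L^1$ error of order $\varepsilon V(h)$. Neither device you suggest supplies this.

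\emph{The dual fallback is circular.} A convex pair dominating $h_\varepsilon$ dominates $h$ only if $h_\varepsilon\ge h$, which is exactly what the fallback was meant to avoid.

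\emph{A plain sup-convolution fails.} The function $\sup_y\{h(y)-|x-y|/\varepsilon\}$ has a fixed cone slope, so its curvature mass counts oscillations rather than variation. For $h=\eta\sum_{k=1}^n\mathds{1}_{\{x=k\}}$ the curvature mass is $4n/\varepsilon$, whereas $V(h)=2n\eta$.

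\emph{A construction that works.} Write $h=u-v$ with $u,v$ nondecreasing and $V(u)+V(v)=V(h)$. Extend $v$ by $v(0)$ to the left, and average $u$ forward and $v$ backward:
\[
h_\varepsilon(x)=\frac{1}{\varepsilon}\int_0^\varepsilon\big(u(x+t)-v(x-t)\big)\,\rd t .
\]
Then $h_\varepsilon\ge h$ everywhere, $|h_\varepsilon''|(\R_+)\le 2V(h)/\varepsilon$, and $\int_0^\infty(h_\varepsilon-h)\,\rd x\le \varepsilon V(h)/2$.

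\emph{The middle term.} Use Carr--Madan directly; equivalently, centre the slopes in \cref{prop:MidEstimate}, which is allowed because all measures in $\cQ(\mu^{\rm b},\mu^{\rm a})$ share mass and mean. With $d=d(\mu^{\rm b},\mu^{\rm a})$ and $c_\mu(K)=\E_\mu[(X-K)^+]$, one has
\[
(\mu-\mu^{\rm m})(h_\varepsilon)=\int(c_\mu-c^{\rm m})(K)\,h_\varepsilon''(\rd K)\le \frac{d}{2}\,|h_\varepsilon''|(\R_+).
\]
Altogether $P(h)-\mu^{\rm m}(h)\le V(h)\,(d/\varepsilon+M\varepsilon/2)$. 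Choosing $\varepsilon=\sqrt{2d/M}$ gives exactly $\sqrt{2M}\,V(h)\sqrt{d}$, with no further tuning needed.

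\emph{Comparison with the paper.} The paper proves a payoff-independent fact about the admissible set. Each $\mu\in\cQ(\mu^{\rm b},\mu^{\rm a})$ has call curve within $d/2$ of the $\cC^{1,1}$ curve $c^{\rm m}$. Hence \cref{lem:CVXEstimate} gives the Kolmogorov bound $\sup_K|\Phi(K)-\Phi^{\rm m}(K)|\le\sqrt{2Md}$, and a single integration by parts (\cref{lem:BV-ibp}) then treats every BV payoff at once. This is the kind of control on possibly atomic $\mu$ that your first term would require with a two-sided regularization.

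Your argument instead bypasses both lemmas. It reduces BV payoffs to the difference-of-convex case of \cref{prop:MidEstimate} through a bias--curvature trade-off, which makes the origin of the square root transparent. It is also at heart a dual argument: $h_\varepsilon$ is an explicit static superhedge of $h$ with cost at most $\mu^{\rm m}(h)+\sqrt{2M}\,V(h)\sqrt d$. You therefore bound $D(h)$ directly and $P(h)$ by weak duality alone, without \cref{thm:band-duality}. For a digital call, $h_\varepsilon$ is the call spread on $[K-\varepsilon,K]$, the same shape as the dual optimizer in \cref{prop:digital-call-spread}.
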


\begin{proof} 
Fix $\mu\in\cQ(\mu^{\rm b},\mu^{\rm a})$ and denote its call price function by $c(K)=\E_{\mu}[(X-K)^+]$.  Define also $c^{\rm s}(K)=\E_{\mu^{\rm s}}[(X-K)^+]$, ${\rm s}\in\{{\rm b},{\rm m},{\rm a}\}$. For technical reasons, we extend $c$ to $\R$ by setting $c(K)= \E_{\mu}[X] - K$ when $K<0$.
Since $\mu\in\cQ(\mu^{\rm b},\mu^{\rm a})$, \cref{prop:bidAskDistance}~(iii) implies that
\[
|c(K)-c^{\rm m}(K)|
\le \tfrac12\big(c^{\rm a}(K)-c^{\rm b}(K)\big)
\le \frac{1}{2} d(\mu^{\rm b},\mu^{\rm a}), \quad\, \forall K\geq 0.
\]
In addition, $c(K)-c^{\rm m}(K)=0$ for any $K<0$ because $\E_{\mu}[X]=\E_{\mu^{\rm m}}[X]$ and both $\mu$ and $\mu^{\rm m}$ are supported on $\R_+$. Thus,
\[
\sup_{K\in\R}|c(K)-c^{\rm m}(K)| \le \frac{1}{2} d(\mu^{\rm b},\mu^{\rm a}).
\]

Moreover, since $\phi^{\rm m}$ is bounded by $M$, we have $c^{\rm m}\in\cC^{1,1}(\R)$ and $\partial_K c^{\rm m}$ is $M$-Lipschitz. Applying \cref{lem:CVXEstimate} to the convex functions $\psi=c^{\rm m}$, $\varphi=c$, and $\varepsilon=\frac{1}{2} d(\mu^{\rm b},\mu^{\rm a})$ on $\R$, we obtain
\begin{equation}\label{eq:deriv_bound}
\sup_{K\ge 0}\big|\partial_K^{+}c(K)-\partial_K c^{\rm m}(K)\big|
\le \sqrt{2M\,d(\mu^{\rm b},\mu^{\rm a})},
\end{equation}
where $\partial_K^{+}c$ denotes the right derivative of $c$. Recall the identities $\partial_K^{+}c=\Phi-1$ and $\partial_K c^{\rm m}=\Phi^{\rm m}-1$, where $\Phi$ and $\Phi^{\rm m}$ denote the distribution functions of $\mu$ and $\mu^{\rm m}$, respectively~(see, e.g., ~\cite{breeden1978prices}, \cite[Lemma~1]{hobson2010skorokhod}). Together with~\eqref{eq:deriv_bound}, this yields
\begin{equation}\label{eq:CDF_bound}
\sup_{K\ge 0}|\Phi(K)-\Phi^{\rm m}(K)|
\le\sqrt{2M\,d(\mu^{\rm b},\mu^{\rm a})}.
\end{equation}
Applying~\cref{lem:BV-ibp} with $\nu:=\mu-\mu^{\rm m}$ and defining $G_\nu(K):=\Phi(K)-\Phi^{\rm m}(K)$ for $K\geq 0$, then
\[
|(\mu-\mu^{\rm m})(h)|
=|\nu(h)|
\le V(h)\max\left\{
\sup_{K\ge0}|G_\nu(K)|,\,
\sup_{K\ge0}|G_\nu(K^-)|
\right\}.
\]
Combining this estimate with~\eqref{eq:CDF_bound} and taking the supremum over $\mu\in\cQ(\mu^{\rm b},\mu^{\rm a})$ proves the result.

\end{proof}

\section{Examples}\label{sec:numerics}

This section presents a collection of numerical examples involving multiple illiquid claims in the market. In \cref{sec:digital_options}, we focus on digital call options and present a closed-form solution in a one-sided market. We further validate our theoretical findings via a numerical experiment using options data on the S\&P~500 Index (SPX).\footnote{Data retrieved from Bloomberg.}
In \cref{sec:forward-start}, we compare the super- and subhedging prices obtained under BAMOT with those derived using classical MOT. Finally, in \cref{sec:convDigital}, we demonstrate the convergence behavior as the bid--ask spread shrinks for a risk-reversal payoff and an at-the-money digital call option.  Unless stated otherwise, all numerical solutions are obtained via a discretization scheme (see~\cref{app:LP} for details) and solved using linear programming. We refer to the resulting numerical solutions as the primal and dual optimizers. However, it should be emphasized that a dual optimizer for the original, undiscretized BAMOT problem need not exist.

\subsection{Digital Options}\label{sec:digital_options}

This section is devoted to digital options. Beyond their long-standing use in fixed income, foreign exchange, and commodities markets, their popularity has surged recently on prediction market platforms such as Polymarket and Kalshi. We first study a special case in which the market is one-sided, that is, bid quotes are absent, leading to explicit primal and dual optimizers. We then present numerical results for a digital call option written on the S\&P~500 Index (SPX) using real market data, from which we observe that pricing digital options using mid marginals can substantially underestimate the superhedging prices in the presence of bid--ask frictions.

Our analysis relies on the concept of \emph{local concentration}. Given $\mu \in \cP_1$, we can construct another probability measure $\mu'$ such that $\mu' \cleq \mu$ as follows. Let $I_1,\ldots,I_n$ be a partition of $\R_+$ into disjoint intervals $\{I_i\}$ such that $w_i := \mu(I_i) > 0$ for each $i$. Let $x_i = \frac{1}{w_i}\int_{I_i} x \, \mu(\rd x)$ denote the barycenter of $\mu$ on $I_i$. Then $\mu' = \sum_{i=1}^n w_i \delta_{x_i} \in \cP_1$, and $\mu' \cleq \mu$.

\subsubsection{Digital Option in a One-Sided Market}\label{sec:illiquidVanillas}
We consider a one-sided option market with a single maturity $T$ and no available bid quotes. Hence  $\mu^{\rm b} = \delta_{x_0}$, where we choose $x_0=1$, while the ask marginal distribution is an arbitrary atomless measure with mean $x_0$.
We aim to find the %
superhedging price and construct an optimal measure for
an out-of-the-money (OTM) digital call option with payoff $h(x) = \mathds{1}_{\{x \ge K\}}$, where $K > x_0$. Then, \cref{thm:band-duality} becomes
\begin{equation}\label{eq:digitalDuality}
    P(h) = \sup_{\mu \in \underline{\cQ}(\mu^{\rm a})} \mu([K,\infty)) = \inf_{\psi^{\rm b,a}\in \Psi^{\rm b,a}(h)} \mu^{\rm a}(\psi^{\rm a}) - \psi^{\rm b}(x_0) = D(h).
\end{equation}
It turns out that both primal and dual optimizers of \eqref{eq:digitalDuality} can be found explicitly. 
Indeed, introduce $b^{\rm a}(L) := \int_{L}^{\infty} (x-K) \mu^{\rm a}(\rd x)$. Since $b^{\rm a}(0) = x_0 - K<0$, $b^{\rm a}(K) \geq 0$ and by the  continuity of $L\mapsto b^{\rm a}(L)$, there exists a critical strike $L^{\star} \le K$ such that $b^{\rm a}(L^\star)=0$. For our purposes,  we assume that $L^{\star} < K$ as otherwise $\mu^{\rm a}$ is supported on $[0,K]$, in which case the problem is trivial.

\begin{proposition} 
Introduce   the convex profile  $\psi_L(x):=\frac{1}{K-L}(x-L)^+$. Then, 
\begin{enumerate}[label=(\roman*)]
    \item $\mu^\star :=\mu^{\rm a}\big|_{[0,L^\star]} + \mu^{\rm a}([L^\star,\infty))\delta_K  \in \underline{\cQ}(\mu^{\rm a}) $ is a primal  optimizer of \eqref{eq:digitalDuality}. %
    \item $(\psi_{L^\star},0) \in \Psi^{\rm b,a}(h)$ is a dual  optimizer of \eqref{eq:digitalDuality}. %
    \item $P(h)=D(h)=\frac{c^{\rm a}(L^\star,T)}{K-L^\star} = \mu^{\rm a}([L^\star,\infty))$.
\end{enumerate}
\end{proposition}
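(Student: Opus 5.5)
The plan is to exhibit a feasible primal measure and a feasible dual portfolio whose values coincide. Weak duality (\cref{lemma:weak-duality}, or equivalently \cref{thm:band-duality}) then forces both to be optimal and gives (iii). Here $\underline{\cQ}(\mu^{\rm a})=\cQ(\delta_{x_0},\mu^{\rm a})=\{\mu\in\cP_1 \mid \mu\cleq\mu^{\rm a}\}$, because $\delta_{x_0}\cleq\mu$ holds automatically once $\mu$ has mean $x_0$.

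\emph{Primal feasibility (i).} I would view $\mu^\star$ as a local concentration of $\mu^{\rm a}$. Take the partition consisting of the singletons in $[0,L^\star)$ (that is, leave $\mu^{\rm a}$ unchanged there) together with the interval $I=[L^\star,\infty)$. Since $\mu^{\rm a}$ is atomless, whether $L^\star$ is included in $I$ does not matter. The barycenter of $\mu^{\rm a}$ on $I$ is $K$: indeed $b^{\rm a}(L^\star)=0$ reads $\int_{L^\star}^\infty x\,\mu^{\rm a}(\rd x)=K\mu^{\rm a}([L^\star,\infty))$, and $\mu^{\rm a}(I)>0$ because $L^\star<K$ and $\mu^{\rm a}$ is not supported on $[0,K]$. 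Replacing $\mu^{\rm a}|_I$ by its mass times $\delta_K$ can only decrease convex integrals, by Jensen's inequality applied conditionally on $I$. Hence $\mu^\star\cleq\mu^{\rm a}$, and the mean is preserved, so $\mu^\star\in\underline{\cQ}(\mu^{\rm a})$. Its value is $\mu^\star([K,\infty))=\mu^{\rm a}([L^\star,\infty))$, because the part of $\mu^{\rm a}$ on $[0,L^\star]$ lives strictly below $K$.

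\emph{Dual feasibility (ii).} The profile $\psi_{L^\star}$ is convex and Lipschitz, and $\psi^{\rm b}=0$, so the pair lies in $\text{CVX}_L^2$. To check $\psi_{L^\star}\ge h$: for $x<K$ we have $\psi_{L^\star}\ge0=h$, and for $x\ge K$ we have $\psi_{L^\star}(x)=(x-L^\star)/(K-L^\star)\ge1=h$. Its cost is $\mu^{\rm a}(\psi_{L^\star})=c^{\rm a}(L^\star,T)/(K-L^\star)$.

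\emph{Equality of values (iii).} Write
\[
c^{\rm a}(L^\star,T)=\int_{L^\star}^\infty (x-K)\,\mu^{\rm a}(\rd x)+(K-L^\star)\,\mu^{\rm a}([L^\star,\infty)) .
\]
The first term is $b^{\rm a}(L^\star)=0$, so the dual cost equals $\mu^{\rm a}([L^\star,\infty))$, which is the primal value. Weak duality then yields $P(h)=D(h)$ at this common value, together with optimality of both candidates. The only delicate point I expect is the bookkeeping at $x=L^\star$ and $x=K$ (atoms and boundary placement), which the atomless assumption on $\mu^{\rm a}$ handles. Otherwise the argument is direct, with the convex-order check via local concentration being the substantive step.
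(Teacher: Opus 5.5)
Your proposal is correct and follows the paper's proof. Primal feasibility comes from concentrating $\mu^{\rm a}|_{[L^\star,\infty)}$ at its barycenter $K$, which holds by $b^{\rm a}(L^\star)=0$. Dual feasibility comes from the single call profile $\psi_{L^\star}$. The identity $\mu^{\rm a}(\psi_{L^\star})=\mu^{\rm a}([L^\star,\infty))=\mu^\star(h)$ matches the two values, and weak duality closes the argument. Your direct conditional-Jensen argument on the one interval $[L^\star,\infty)$ is a slightly more careful justification than the paper's bare appeal to local concentration, since that notion is defined via a finite partition into intervals of positive mass.
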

\begin{proof}
By local concentration,  $\mu^\star\in \cQ(\delta_{x_0},\mu^{\rm a})=\underline{\cQ}(\mu^{\rm a})$. 
It is also clear that $(\psi_{L^\star},0)\in \Psi^{\rm b,a}(h)$, as illustrated in the right panel of \cref{fig:digitalOneSidedHedge}. Then by definition of $\mu^{\star}$, $L^{\star}$,  and $\psi_{L}$, 
\[
D(h)\leq \mu^{\rm a}(\psi_{L^\star})
= \frac{1}{K-L^\star}\int_{L^{\star}}^\infty(x-L^\star)\,\mu^{\rm a}(\rd x)
= \mu^{\rm a}([L^\star,\infty)) = \mu^{\star}(h) 
\leq P(h).
\]
 The primal and dual optimality follow by weak duality.
\end{proof}

\paragraph{Primal-Dual Relations} 
Let $\sigma^\star(\cdot , T)$ denote the implied volatility skew  of the optimal measure~$\mu^\star$. Then, the implied volatility at the optimal strike $L^{\star}$ satisfies 
\begin{equation}\sigma^\star(L^\star, T) = \sigma^{\rm a}(L^\star, T).
\end{equation}
Indeed, the price of the $(L^\star, T)$ call under $\mu^\star$ is $c^\star(L^\star, T) = \mu^\star(h)(K - L^\star) = c^{\rm a}(L^\star, T)$. Similarly, observe that $c^\star(K, T) = c^{\rm b}(K, T) =(x_0-K)^+ = 0$, since the optimal measure is supported on $[0,K]$. Equivalently,  $\sigma^\star(K, T) = \sigma^{\rm b}(K, T) = 0$, as shown in the following example. 

\begin{example}\label{eq:digitalBS}
Let $T=1/12$ (one month), $K=1.05$, and let $\mu^{\rm a}$ be the one-month marginal distribution in the Black--Scholes model with volatility $\sigma^{\rm a} = 20\%$ and zero interest rates. %
Then the superhedging price is approximately equal to $D(h) = 0.46$, with optimal lower strike $L^\star \approx 1.004$. \cref{fig:digitalOneSidedHedge} displays the cost function $L\mapsto \mu^{\rm a}(\psi_L)$ and the optimal superhedging portfolio. The superhedging price is significantly higher than the  price of the digital under the ask marginal, roughly equal to $0.19$. 
As shown in \cref{fig:digitalOneSidedOptimalMu}, %
different optimal measures can be constructed by varying the distribution on $[0, L^\star]$ using local concentration. %
\end{example}

\begin{figure}[!ht]
\caption{Superhedging of a digital call in a one-sided market. }
    \begin{subfigure}[b]{0.49\linewidth}
        \centering
          \caption{Super-replication cost as function of $L$}
        \includegraphics[width=2.5in, height = 1.75in]{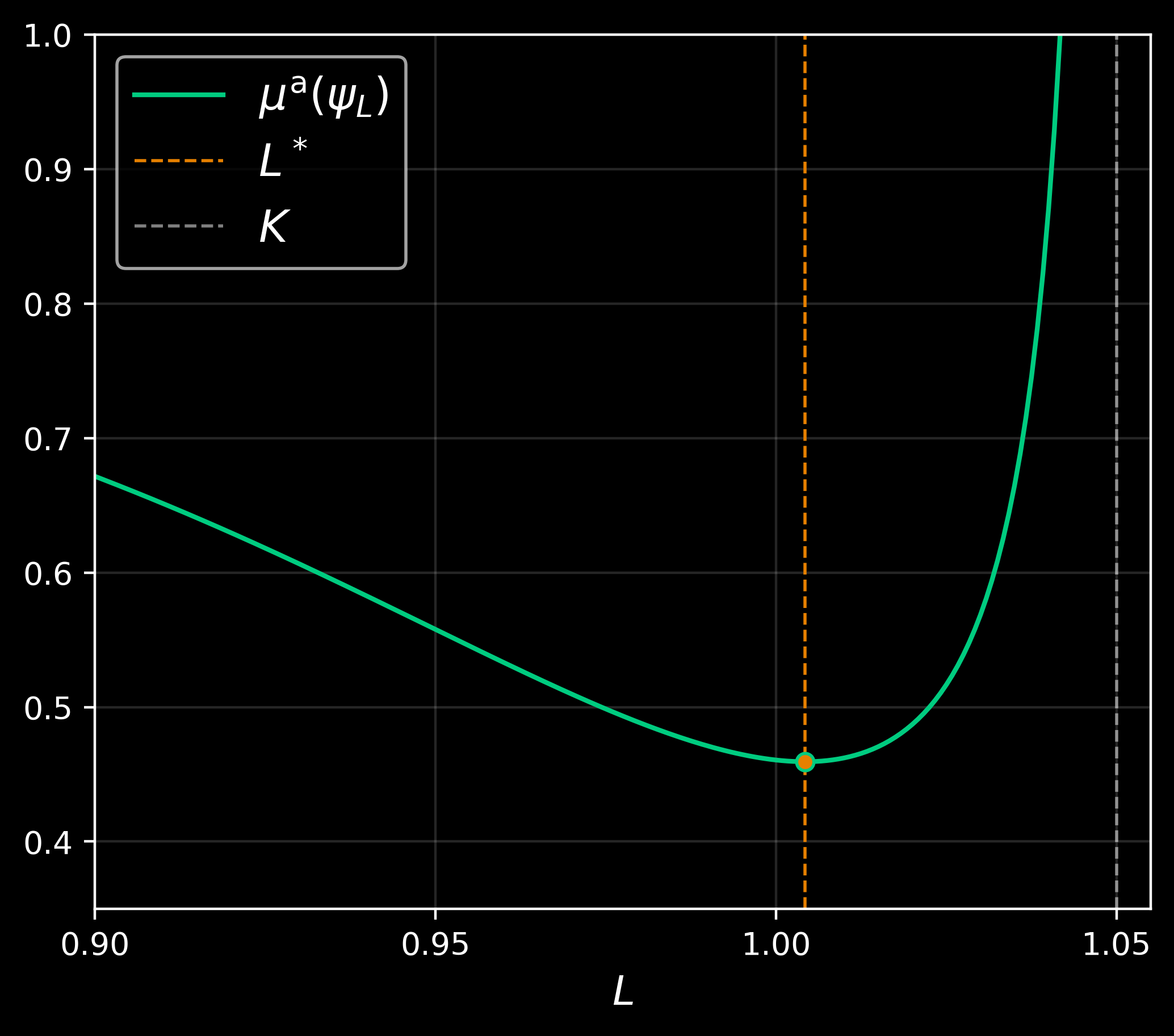}
    \end{subfigure}
        \begin{subfigure}[b]{0.49\linewidth}
        \centering
           \caption{Optimal superhedging portfolio}
        \includegraphics[width=2.5in, height = 1.75in]{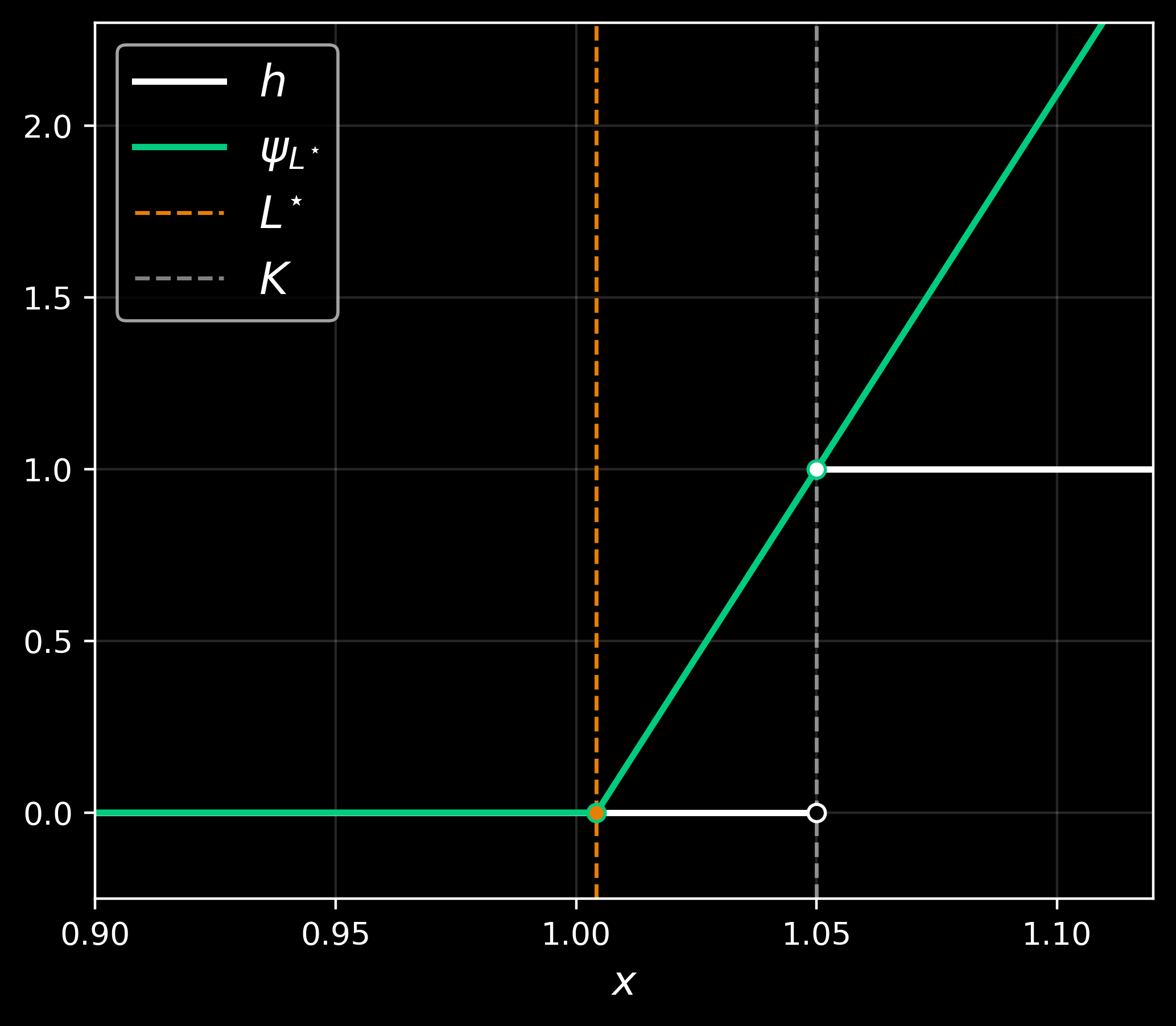}
    \end{subfigure}
        \label{fig:digitalOneSidedHedge}
\end{figure}

\begin{figure}[!ht]
\caption{Digital call in a one-sided market. Illustrations of different optimal measures (left) and corresponding implied volatility skews (right).}
    \begin{subfigure}[b]{0.49\linewidth}
        \centering
        \includegraphics[width=2.5in, height = 1.75in]{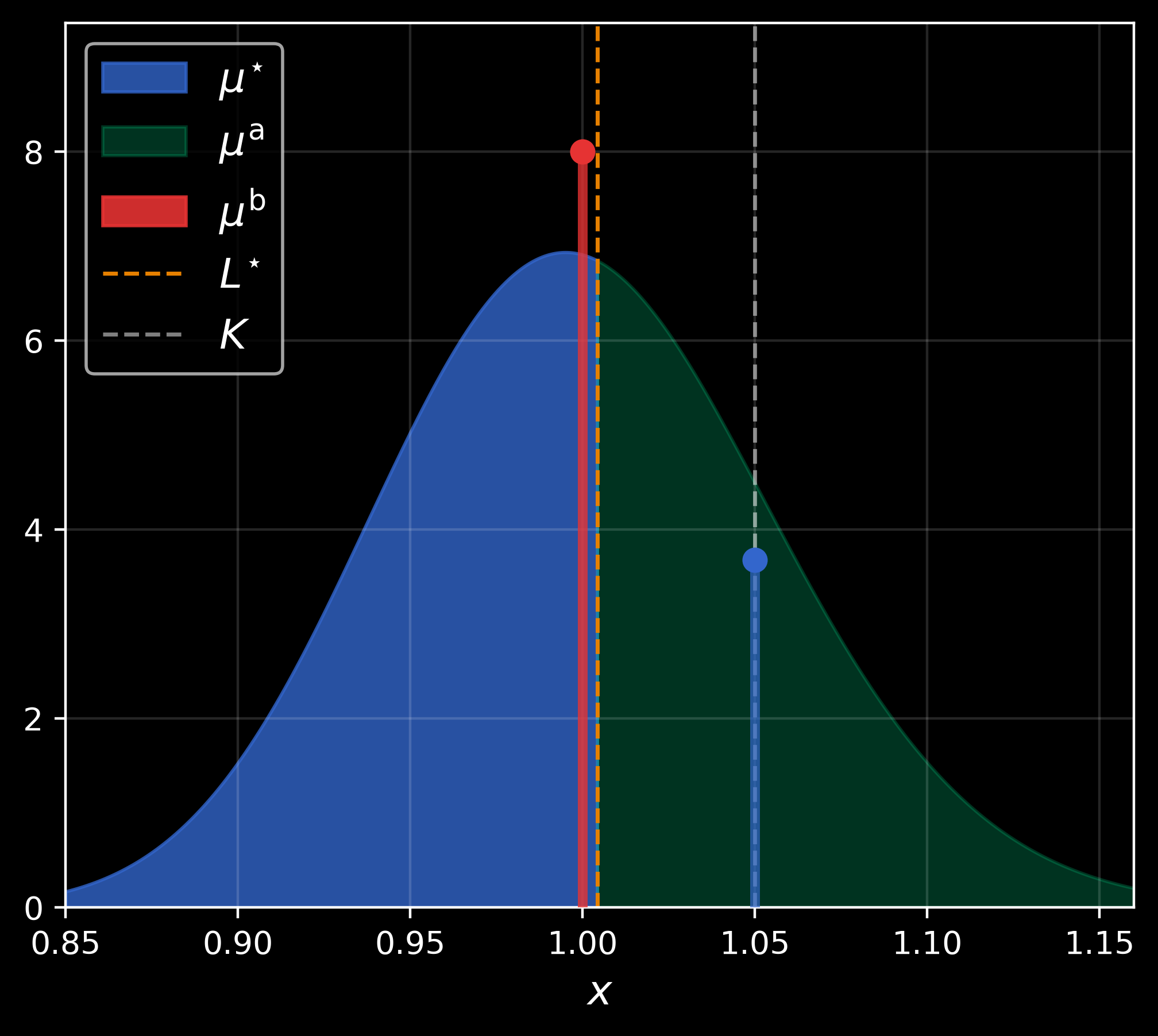}
    \end{subfigure}
    \begin{subfigure}[b]{0.49\linewidth}
        \centering
        \includegraphics[width=2.5in, height = 1.75in]{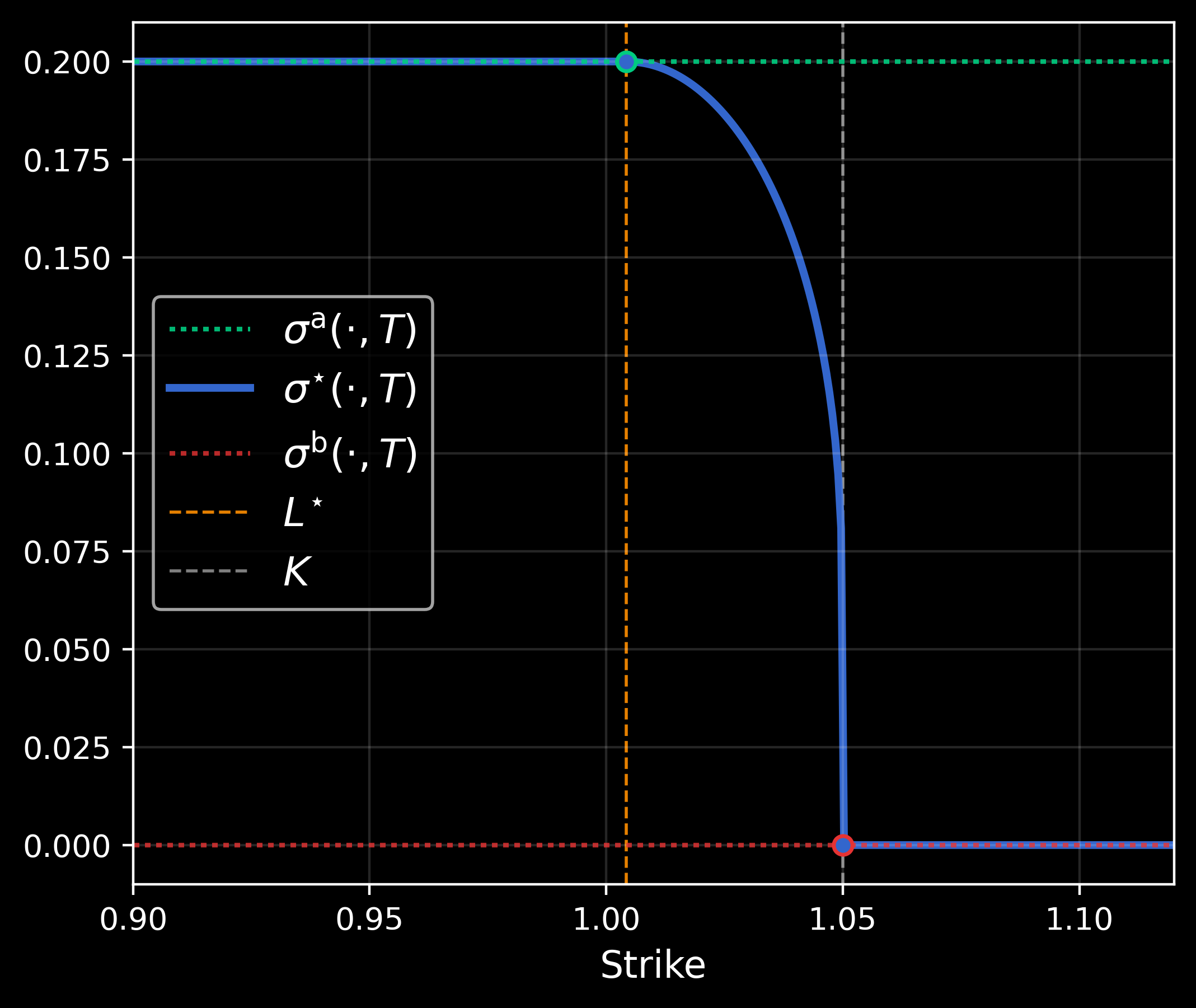}
    \end{subfigure}
    \vspace{3mm}
    
        \begin{subfigure}[b]{0.49\linewidth}
        \centering
        \includegraphics[width=2.5in, height = 1.75in]{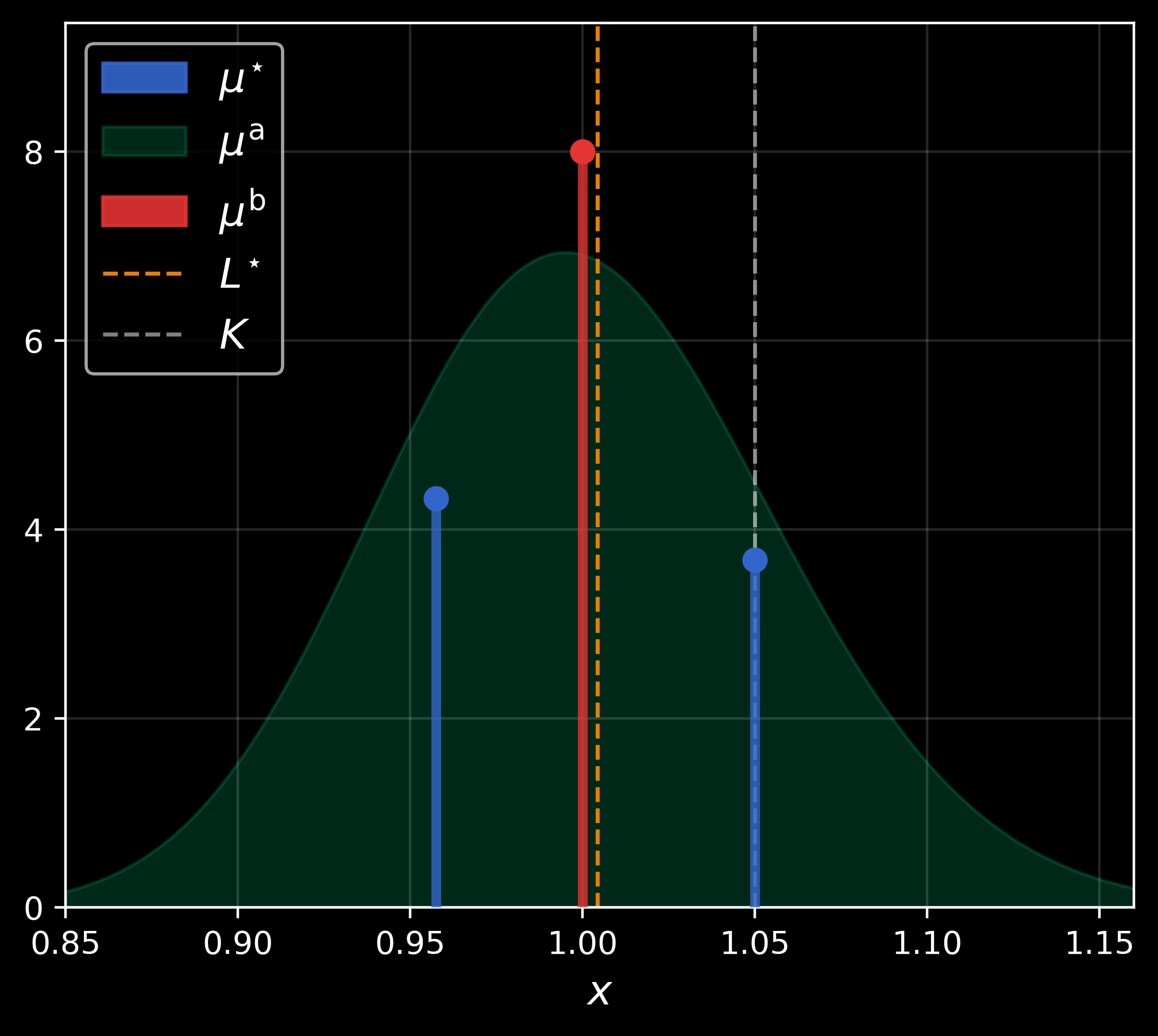}
    \end{subfigure}
    \begin{subfigure}[b]{0.49\linewidth}
        \centering
        \includegraphics[width=2.5in, height = 1.75in]{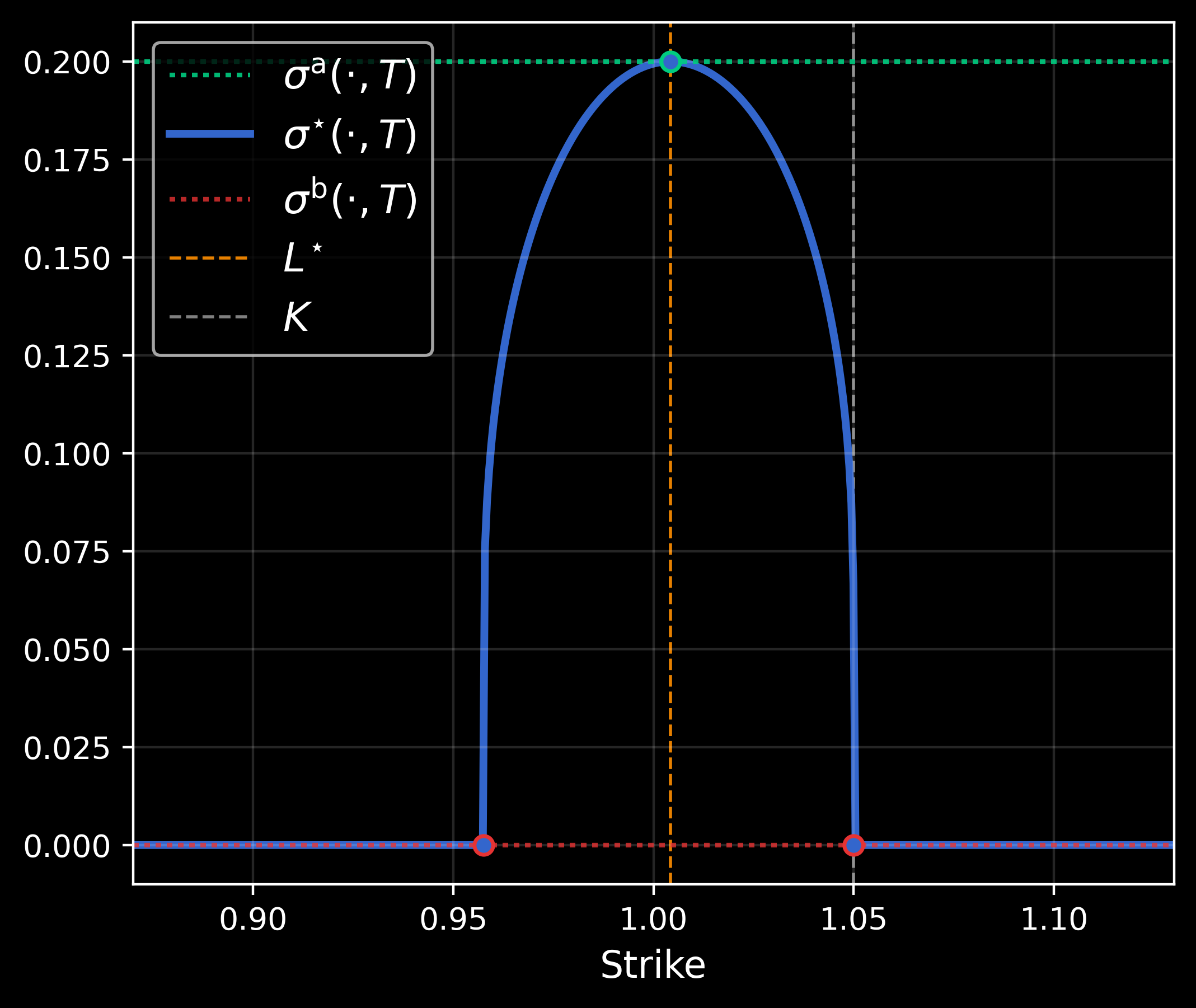}
    \end{subfigure}
    \label{fig:digitalOneSidedOptimalMu}
\end{figure}

\paragraph{Connection with One-touch Options.}
The present example can be seen as a static analogue of Hobson’s robust hedging of one-touch (American digital) options \cite{hobson2010skorokhod}. Indeed, consider the discrete optimal measure $\mu^{\star}$ displayed in the bottom left chart of \cref{fig:digitalOneSidedOptimalMu}. 
In view of $\delta_{x_0} = \mu^{\rm b}\cleq \mu^{\star} \cleq \mu^{\rm a}$, there exists a continuous-time càdlàg martingale $(X_t)_{t\in [0,1]}$ such that  $X_0 = x_0$, which jumps to $X_{1/2}\sim \mu^{\star}$ at $t=1/2$ and to $X_1\sim \mu^{\rm a}$ at the terminal time.  
This is precisely the model constructed in \cite{hobson2010skorokhod} that achieves the superhedging price for the one-touch call option. 
Specifically, we have
\begin{equation}\label{eq:HobsonParallel}
\sup_{\Q, \, X_T\sim  \mu^{\rm a}} \Q(X_\tau \ge K)
= \sup_{\mu \cleq \mu^{\rm a}} \mu([K, \infty)),
\end{equation}
where $\tau$ is the first time $X$ exceeds the strike $K$ (equal to $\tau = 1$ if $X_t< K$ for all $t$), and 
 $\Q$ is a continuous-time martingale measure. In fact, the identity  still holds if the marginal constraint on the left-hand side is relaxed to $\text{Law}^{\Q}(X_T)\cleq \mu^{\rm a}$. 
 This is because Hobson’s optimal superhedging strategy utilizes only long positions in call options alongside a dynamic hedge; since no options are sold, the strategy is independent of the bid marginal. 
As the right-hand side corresponds to a simpler problem, our framework also provides a useful compression tool for some path-dependent robust pricing problems. In addition, by the optional sampling theorem and Strassen's theorem, \eqref{eq:HobsonParallel} can be extended to connect our framework with the robust pricing of American options,
\begin{equation*}
\sup_{\tau \le T}\sup_{\Q,\, X_T\sim \mu^{\rm a}} \E_{\Q}[h(X_\tau)]
= \sup_{\mu \cleq \mu^{\rm a}} \mu(h),
\end{equation*}
where $h\in \text{USC}_L$ and $\tau$ ranges over all stopping times taking values in $[0,T]$.

\subsubsection{Digital Option on the S\&P~500 Index (SPX)}\label{subsec:digtial_spx}
We now compute the superhedging price of an out-of-the-money digital call using real market data. The bid and ask marginals are calibrated to raw quotes from the option chains of SPX put and call options quoted on 2025-02-27 with maturity on 2025-07-18, using four-component log-normal mixture models. The parameters are fitted using the discounted forward price $x_0=5861$ and are reported in \cref{tab:bid_ask_lognormal}. Observe that the mixture weights and component means are shared across the bid and ask marginals, while the component volatilities are calibrated so as to preserve convex order as mentioned in \cref{sec:bid-ask-marginals}.

\begin{figure}[!htbp]
    \centering
        \caption{A dual optimizer, the potential function, and implied volatility skew of a primal optimizer for the discretized problem} with digital payoff $h(x)=100\mathds{1}_{\{x\geq 1.05x_0\}}$ on the S\&P~500 Index (SPX).
    \begin{subfigure}[t]{0.32\linewidth}
        \centering
           \caption{Dual optimizer}
        \includegraphics[width=\linewidth]{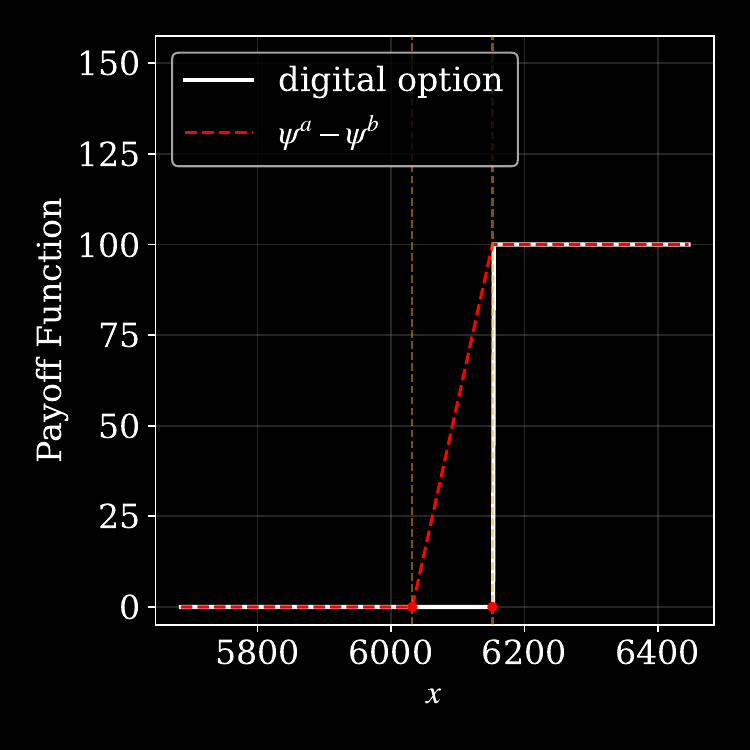}
        \label{fig:spx_dual}
    \end{subfigure}
    \hfill
    \begin{subfigure}[t]{0.32\linewidth}
        \centering
        \caption{Potential function}
        \includegraphics[width=\linewidth]{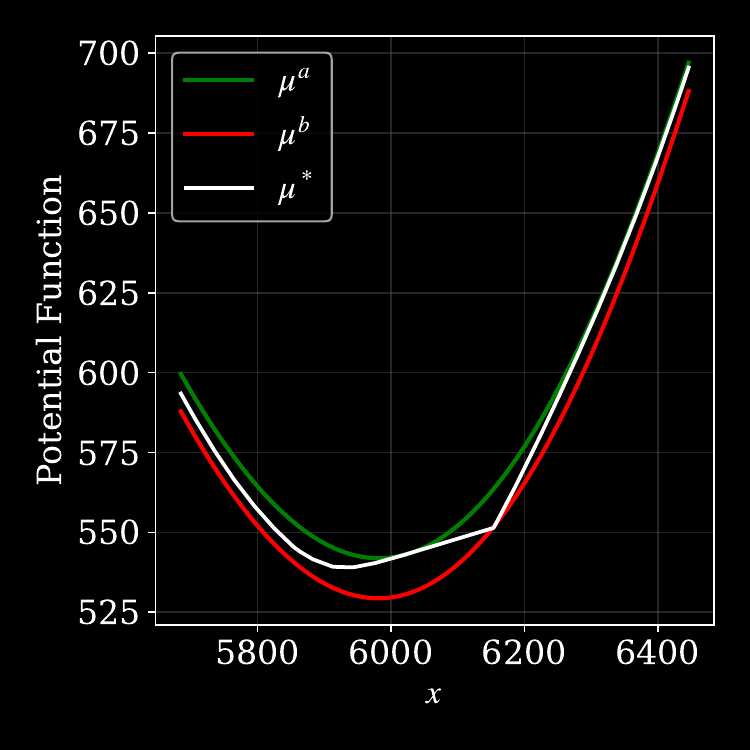}
        \label{fig:spx_primal}
    \end{subfigure}
    \hfill
    \begin{subfigure}[t]{0.32\linewidth}
        \centering
        \caption{Implied volatility}
        \includegraphics[width=\linewidth]{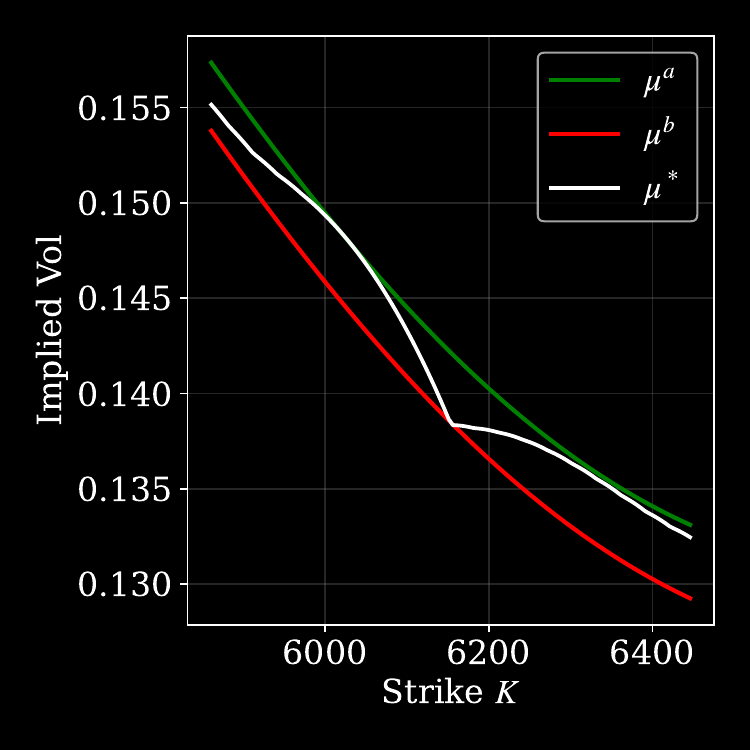}
        \label{fig:spx_iv}
    \end{subfigure}
    \label{fig:digital_spx_combined}
\end{figure}

\begin{table}[ht]
\centering
\caption{Log-normal mixture parameters for $\mu^{\rm a}$ and $\mu^{\rm b}$ (values are kept up to 4 significant figures).}
\begin{tabular}{c|ccc|ccc}
\hline
 & \multicolumn{3}{c|}{\textbf{Bid}} & \multicolumn{3}{c}{\textbf{Ask}} \\
\cline{2-7}
\textbf{Index} 
& \textbf{means} & \textbf{vols} & \textbf{weights}
& \textbf{means} & \textbf{vols} & \textbf{weights} \\
\hline
1 & 6250 & 0.04008 & 0.09591 & 6250 & 0.04009 & 0.09591 \\
2 & 6098 & 0.07222 & 0.5814  & 6098 & 0.07408 & 0.5814  \\
3 & 5531 & 0.1293  & 0.2741  & 5531 & 0.1360  & 0.2741  \\
4 & 4116 & 0.3150  & 0.04852 & 4116 & 0.3198  & 0.04852 \\
\hline
\end{tabular}

\label{tab:bid_ask_lognormal}
\end{table}

\Cref{fig:digital_spx_combined} presents a dual optimizer, the potential function and the implied volatility skew of a primal optimizer for the discretized BAMOT problem with $N=1$ and payoff $h(x) = 100\,\mathds{1}_{\{x \geq K\}}$ with $K = 1.05 x_0$. Recall that the potential function of a probability measure $\mu \in \cP_1$ is defined as $U_\mu(x) := \int |x-y| \, \mu(\rd y)$, and that $U_\mu \leq U_\nu$ if $\mu \cleq \nu$. 

The computed superhedging price is $46.84$, while the price obtained under the mid marginal is $37.14$, resulting in a premium of $P(h)-\mu^{\rm m}(h)=9.70$. This suggests that the common practice of pricing digital options using the mid marginal can substantially underestimate prices relative to BAMOT in the presence of bid--ask frictions.

The dual optimizer shown in \cref{fig:spx_dual} is a call spread with strikes $L^\star= 6032$ and $K = 1.05 x_0 = 6154.05$, yielding an optimal superhedging portfolio in the market. 
On the other hand, \cref{fig:spx_primal} shows that the potential function of the numerical optimizer $\mu^\star$ lies between the potential functions of the bid and ask marginals up to a small discretization residual. Over the interval $[0,2x_0]$, the minimum signed residuals are $\min_{x\in[0,2x_0]}
\{U_{\mu^{\rm a}}(x)-U_{\mu^\star}(x)\}
= -0.1136$ and $\min_{x\in[0,2x_0]}
\{U_{\mu^\star}(x)-U_{\mu^{\rm b}}(x)\}
= -0.3529$.
Thus the observed violations are small relative to the scale of the plotted potentials, and we interpret the plots as numerical evidence for feasibility of the solution.
In addition, the potential function associated with $\mu^{\star}$ touches those of the ask and bid marginals at $L^\star, K$, respectively. This behavior is consistent with the structure of the digital call payoff, which is a step function that vanishes below the strike. Finally, \cref{fig:spx_iv} illustrates that the implied volatility skew of the primal optimizer lies within the region delineated by the bid and ask marginals and touches the bid and ask implied volatility skews at $L^\star, K$.
To understand this, note that the potential function of $\mu^{\star}$ in \cref{fig:spx_primal} is linear between the lower strike $L^{\star}$ and $K$, implying that $\mu^{\star}((L^{\star},K))=0$. Hence, the digital call payoff and the dual optimizer $\psi^{\star}(x)=100\frac{(x-L^{\star})^+-(x-K)^+}{K-L^{\star}}$ have the same price under $\mu^{\star}$. Therefore,
\[
P(h)=\mu^{\star}(h)=\mu^{\star}(\psi^{\star})
=100\times\frac{c^{\star}(L^{\star},T)-c^{\star}(K,T)}{K-L^{\star}}
\le 100\times\frac{c^{\rm a}(L^{\star},T)-c^{\rm b}(K,T)}{K-L^{\star}}
= D(h).
\]
By strong duality, we conclude that $c^{\star}(L^{\star},T)=c^{\rm a}(L^{\star},T)$ and $c^{\star}(K,T)=c^{\rm b}(K,T)$, and the same holds for the corresponding implied volatilities.

The discussion above can be rigorously justified by the following proposition, showing that a dual optimizer exists under our setting; that is, for marginals presented in~\cref{tab:bid_ask_lognormal}. In particular, the optimizer is a call spread with upper strike $K$ and lower strike $L^\star$ being a minimizer of
\[
[0,K) \ni L\mapsto f_K(L):=\frac{c^{\rm a}(L,T)-c^{\rm b}(K,T)}{K-L}.
\]

\begin{proposition}\label{prop:digital-call-spread}
Let $\mu^{\rm b}\cleq \mu^{\rm a}$ satisfy $c^{\rm a}(K,T)>c^{\rm b}(K,T)$. Let $h(x) = \mathds{1}_{\{x\geq K\}}$ be the payoff function.
It follows that the function $[0,K)\ni L\mapsto f_K(L)$ attains a minimizer $L^\star$. Then
\begin{enumerate}[label = (\roman*)]
    \item $\frac{1}{K-L^\star}((x-L^\star)^+, (x-K)^+) \in \Psi^{\rm b,a}(h)$ is a dual optimizer.
\item $\mu^\star = \mu^{\rm a}|_{[0,L^\star)} + (\mu^{\rm a}([L^\star, \infty)) - q^\star) \delta_{L^\star} + (q^\star - \mu^{\rm b}((K, \infty)) ) \delta_{K} +  \mu^{\rm b}|_{(K,\infty)}$ is a primal optimizer, where $q^\star =  f_K(L^\star)$.
\item $P(h)=D(h)= q^\star$.
\item Write $c^\star(k,T):=\E_{\mu^\star}[(X-k)^+]$, it holds that $c^\star(L^\star,T)=c^{\rm a}(L^\star,T)$, $c^\star(K,T)=c^{\rm b}(K,T)$.
\end{enumerate}
\end{proposition}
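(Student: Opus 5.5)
The plan is to use weak duality (\cref{lemma:weak-duality}): I will exhibit a feasible dual pair and a feasible primal measure with equal values. Both checks reduce to facts about the function $f_K$.

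\textbf{Existence of $L^\star$.} The function $f_K$ is continuous on $[0,K)$. As $L\uparrow K$, the numerator tends to $c^{\rm a}(K,T)-c^{\rm b}(K,T)>0$ while the denominator tends to $0$, so $f_K(L)\to+\infty$. A continuous function on the compact set $[0,K-\eta]$, for $\eta$ small, therefore attains the infimum over $[0,K)$.

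\textbf{Dual feasibility and cost.} The call spread
\[
\psi^\star(x)=\frac{(x-L^\star)^+-(x-K)^+}{K-L^\star}
\]
equals $1$ on $[K,\infty)$ and is nonnegative elsewhere, so $\psi^\star\ge h$. Both legs are convex with linear growth. The cost is $\bigl(c^{\rm a}(L^\star,T)-c^{\rm b}(K,T)\bigr)/(K-L^\star)=q^\star$, hence $D(h)\le q^\star$.

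\textbf{Primal feasibility.} This is the main obstacle, and the first point to settle is that the weights of $\mu^\star$ are nonnegative.
\begin{itemize}
\item For the atom at $K$ we need $q^\star\ge \mu^{\rm b}((K,\infty))$. Since $(x-L)^+\ge (x-K)^+ + (K-L)\mathds 1_{\{x>K\}}$ and $c^{\rm a}(L,T)\ge c^{\rm b}(L,T)$, we get $f_K(L)\ge \mu^{\rm b}((K,\infty))$ for every $L$.
\item For the atom at $L^\star$ we need $q^\star\le \mu^{\rm a}([L^\star,\infty))$. The first-order condition at the minimizer gives $-\partial^\pm c^{\rm a}(L^\star)\lessgtr q^\star$, i.e.\ $\mu^{\rm a}((L^\star,\infty))\le q^\star\le \mu^{\rm a}([L^\star,\infty))$. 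If $L^\star=0$ this has to be handled separately.
\end{itemize}
Next I compute the call function $c^\star(k)$ of $\mu^\star$ piecewise:
\begin{itemize}
\item On $k\ge K$, $c^\star(k)=c^{\rm b}(k)$.
\item On $[L^\star,K]$, $c^\star$ is affine with slope $-q^\star$, starting from $c^{\rm b}(K)$ at $k=K$; it reaches $c^{\rm b}(K)+q^\star(K-L^\star)=c^{\rm a}(L^\star)$ at $k=L^\star$.
\item On $k\le L^\star$, $c^\star(k)=c^{\rm a}(k)$, because the mass and barycenter agree. For the barycenter this needs $\mu^\star$ to have mean $x_0$, which I check via $c^\star(L^\star)=c^{\rm a}(L^\star)$ and equal total mass.
\end{itemize}
I then verify $c^{\rm b}\le c^\star\le c^{\rm a}$ region by region.
\begin{itemize}
\item On $[L^\star,K]$, the upper bound holds because $c^{\rm a}$ is convex, $c^\star$ is affine, and the two agree at $L^\star$ with slope $-q^\star\ge \partial^+c^{\rm a}(L^\star)$. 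For the lower bound, $c^{\rm b}$ is convex with $c^{\rm b}(K)=c^\star(K)$, so it suffices that $c^{\rm b}(L^\star)\le c^\star(L^\star)=c^{\rm a}(L^\star)$, which holds.
\item On $k\ge K$, the upper bound is $c^{\rm b}\le c^{\rm a}$.
\item On $k\le L^\star$, the lower bound is the same inequality.
\end{itemize}
By \eqref{eq:callCVX}, these give $\mu^{\rm b}\cleq\mu^\star\cleq\mu^{\rm a}$.

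\textbf{Conclusion.} We have $\mu^\star(h)=\mu^\star([K,\infty))=q^\star-\mu^{\rm b}((K,\infty))+\mu^{\rm b}((K,\infty))=q^\star$. So $P(h)\ge q^\star\ge D(h)$, and weak duality gives (i)--(iii). Item (iv) is the two endpoint identities already obtained in the piecewise computation.
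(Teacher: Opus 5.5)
Your argument follows the paper's proof essentially step for step, and it is correct up to two small repairs. The shared steps are: existence of $L^\star$ from the blow-up of $f_K$ at $K$; the call spread with cost $q^\star$ as dual candidate; the explicit $\mu^\star$ whose call function is $c^{\rm a}$ on $[0,L^\star]$, the line $\ell(k)=c^{\rm b}(K,T)+q^\star(K-k)$ on $[L^\star,K]$, and $c^{\rm b}$ on $[K,\infty)$; and the weak-duality sandwich.

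\textbf{The case $L^\star=0$.} You flag this case but never treat it. There the weight of $\delta_0$ is $1-q^\star$. Its nonnegativity follows from $q^\star=f_K(0)=(x_0-c^{\rm b}(K,T))/K\le 1$, because $c^{\rm b}(K,T)\ge (x_0-K)^+$ by Jensen, with $x_0$ the common barycenter. This is the one line the paper uses.

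\textbf{The slope inequality on $[L^\star,K]$.} In your upper bound the inequality points the wrong way. An affine function through $(L^\star,c^{\rm a}(L^\star,T))$ stays below the convex $c^{\rm a}$ to the right of $L^\star$ exactly when its slope satisfies $-q^\star\le\partial^+c^{\rm a}(L^\star,T)$, i.e.\ $q^\star\ge\mu^{\rm a}((L^\star,\infty))$. That is the inequality you actually derived from the first-order condition. More simply, as in the paper, minimality of $L^\star$ reads $f_K(k)\ge q^\star$ for $k\in[0,K)$, which is literally $c^{\rm a}(k,T)\ge\ell(k)$. So no first-order condition is needed for this bound.

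Your pointwise inequality $(x-L)^+\ge(x-K)^++(K-L)\mathds{1}_{\{x>K\}}$ for the atom at $K$ is a clean substitute for the paper's slope comparison at $K$. Skipping the paper's separate convexity check of $c^\star$ is also harmless, since nonnegative weights already make $c^\star$ the call function of a probability measure.
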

\begin{proof}
Since $c^{\rm a}(K,T)>c^{\rm b}(K,T)$, we know that $f_K(L)\to\infty$ as
$L\uparrow K$. Together with the continuity of call prices in strikes, this implies the existence of a minimizer $L^\star$. 

Set
$\ell(k):=c^{\rm b}(K,T)+q^\star(K-k)$. Minimality of $L^\star$ gives
$\ell\le c^{\rm a}(\cdot,T)$ on $[0,K]$, while convexity of
$c^{\rm b}(\cdot,T)$ gives $c^{\rm b}(\cdot,T)\le\ell$ on
$[L^\star,K]$. Hence
\[
c^\star(k,T):=
\begin{cases}
c^{\rm a}(k,T),&k\le L^\star,\\
\ell(k),&L^\star<k<K,\\
c^{\rm b}(k,T),&k\ge K,
\end{cases}
\]
is piecewise convex and satisfies
$c^{\rm b}\le c^\star\le c^{\rm a}$. To verify its global convexity, let $s=-q^\star$ denote the slope of
$\ell$. Since $\ell\le c^{\rm a}(\cdot,T)$ with equality at $L^\star$,
we have $\partial^-c^{\rm a}(L^\star,T)\le s$. Similarly, since
$c^{\rm b}(\cdot,T)\le\ell$ on $[L^\star,K]$ with equality at $K$, we
have $s\le\partial^-c^{\rm b}(K,T)\le\partial^+c^{\rm b}(K,T)$. Thus
the one-sided slopes of $c^\star$ are nondecreasing at $L^\star$ and
$K$, proving that $c^\star(\cdot,T)$ is convex.

Now we construct a primal optimizer from $c^\star$. In view of the identities
$\partial^-c^{\rm a}(L^\star,T)
=-\mu^{\rm a}([L^\star,\infty))$ and
$\partial^+c^{\rm b}(K,T)
=-\mu^{\rm b}((K,\infty))$, we have $\mu^{\rm a}([L^\star,\infty))-q^\star\ge0$ and
$q^\star-\mu^{\rm b}((K,\infty))\ge0$. If $L^\star=0$, the first
inequality follows instead from $q^\star = f_K(0)\le1$. We may
therefore define
\[
\mu^\star
=
\mu^{\rm a}\big|_{[0,L^\star)}
+\big(\mu^{\rm a}([L^\star,\infty))-q^\star\big)\delta_{L^\star}
+\big(q^\star-\mu^{\rm b}((K,\infty))\big)\delta_K
+\mu^{\rm b}\big|_{(K,\infty)}.
\]
It is clear that $\mu^\star$ is a probability
measure. For $L^\star\le k\le K$, direct computation gives
\[
\begin{aligned}
\E_{\mu^\star}[(X-k)^+]
&=\big(q^\star-\mu^{\rm b}((K,\infty))\big)(K-k)
  +\int_{(K,\infty)}(x-k)\,\mu^{\rm b}(\rd x)\\
&=c^{\rm b}(K,T)+q^\star(K-k)=\ell(k).
\end{aligned}
\]
For $k\ge K$, the same expectation is $c^{\rm b}(k,T)$. At
$k=L^\star$, the preceding computation gives
$\E_{\mu^\star}[(X-L^\star)^+]=c^{\rm a}(L^\star,T)$, while
$\mu^\star([L^\star,\infty))
=\mu^{\rm a}([L^\star,\infty))$. Splitting the call payoff at
$L^\star$ therefore gives
$\E_{\mu^\star}[(X-k)^+]=c^{\rm a}(k,T)$ for $k\le L^\star$.
Consequently,
$\E_{\mu^\star}[(X-k)^+]=c^\star(k,T)$ for all $k\in\R_+$.
Since $c^{\rm b}\le c^\star\le c^{\rm a}$, we conclude that
$\mu^\star\in\cQ(\mu^{\rm b},\mu^{\rm a})$. Moreover,
$\mu^\star([K,\infty))=q^\star$.
\\
Finally, let
$\psi^\star(x)=((x-L^\star)^+-(x-K)^+)/(K-L^\star)$. Then
$\psi^\star\ge h$ and its cost is $q^\star$. Together with weak
duality, this yields
\[
q^\star\le P(h)\le D(h)\le q^\star.
\]
Thus $\mu^\star$ and the stated call spread are primal and dual
optimizers, respectively. The identities
$c^\star(L^\star,T)=c^{\rm a}(L^\star,T)$ and
$c^\star(K,T)=c^{\rm b}(K,T)$ follow directly from the construction.
\end{proof}

\Cref{fig:spx_digital_bounds} illustrates the performance of BAMOT by comparing the super- and subhedging prices of SPX digital options obtained from the fitted bid--ask marginals with the super- and subhedging prices obtained by hedging directly with the finitely many quotes available in the raw market data. We use quotes observed on 2026-08-06 for options expiring on 2026-09-18 to provide an additional validation of our approach. For both approaches, the hedging inequalities are enforced on the same state-space grid, and the strikes of the call options are taken directly from the market quotes. As shown in~\cref{fig:spx_digital_bounds}, the resulting price bounds are nearly identical. In general, one can expect the match to be close for liquid markets with many quotes, with a potentially larger gap when hedging instruments are sparse.
\begin{figure}
    \centering
    \caption{Robust bounds for SPX digital calls with payoff $\mathbf{1}_{\{x\geq K\}}$, computed from raw market quotes and MLN-fitted bid--ask marginals. 
    Quotes observed on 2026-08-06 for options expiring on 2026-09-18.
    }
    \label{fig:spx_digital_bounds}
    \begin{subfigure}[t]{0.45\linewidth}
        \centering
        \includegraphics[width=2.75in,height = 2in]{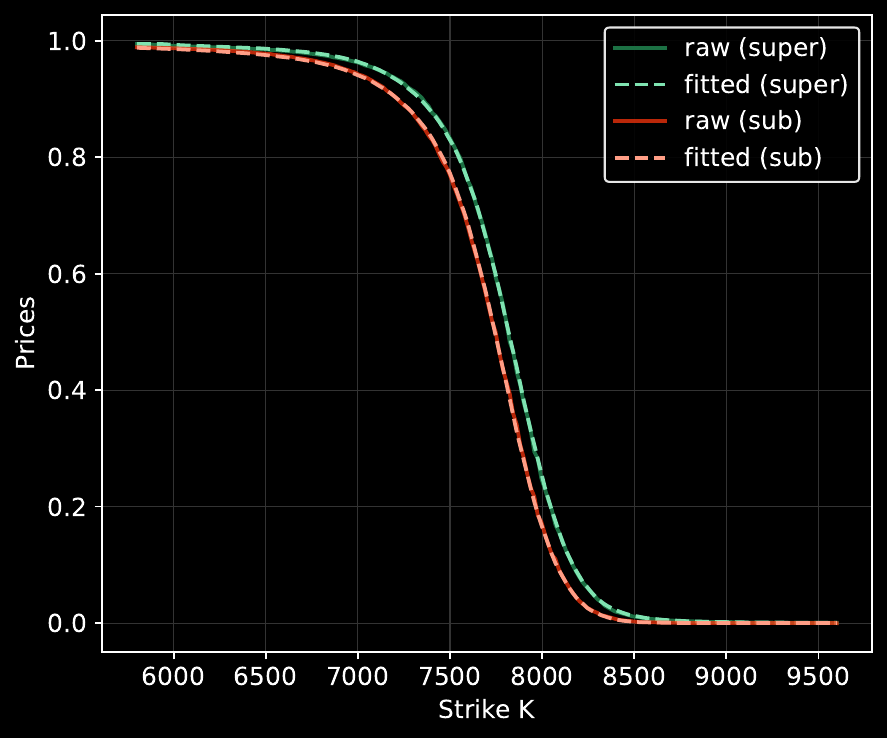}
        \caption{Super- and subhedging bounds}
        \label{fig:spx_digital_bound_levels}
    \end{subfigure}
    \hfill
    \begin{subfigure}[t]{0.45\linewidth}
        \centering
        \includegraphics[width=2.75in,height = 2in]{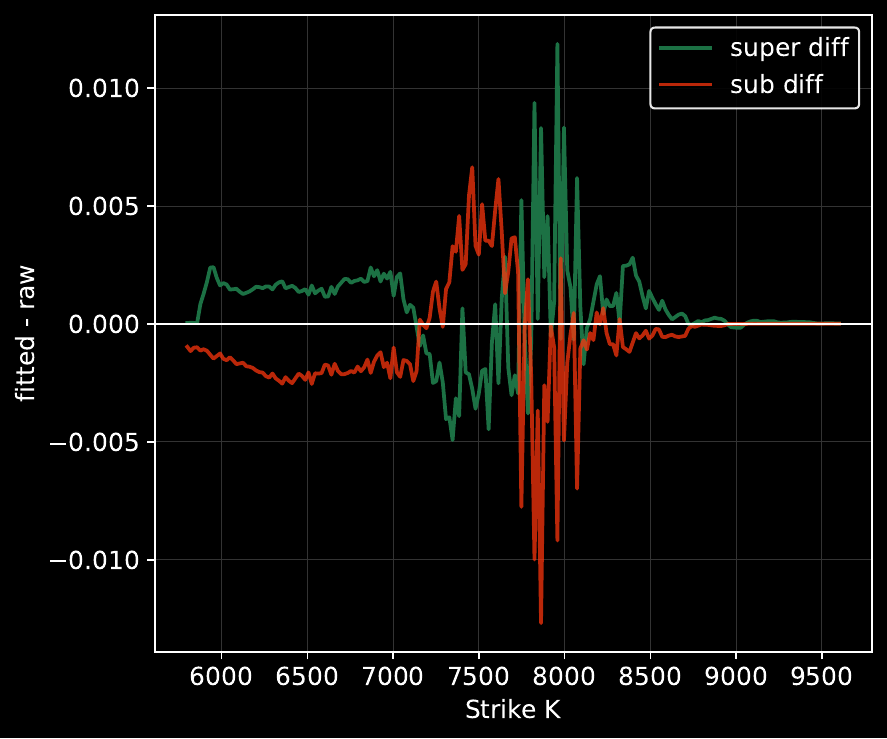}
        \caption{Difference of bounds}
        \label{fig:spx_digital_bound_diff}
    \end{subfigure}
\end{figure}

\subsection{Forward Start Option}\label{sec:forward-start}
In this section, we compare the price bounds computed under BAMOT with those obtained from MOT.
Consider the case $N=2$ and take the forward-start payoff
\[
h(x_1,x_2) = (x_2 - K x_1)^+.
\]
Observe that $
|x_2 - x_1|
= 2(x_2 - x_1)^+ - (x_2 - x_1)$, 
and hence
$\E_{\Q}[|X_2 - X_1|] = 2\E_{\Q}[(X_2 - X_1)^+]$
for any martingale coupling $\Q \in \cQ(\mu^{\rm b},\mu^{\rm a})$, where
$\mu^{\rm a} = (\mu_1^{\rm a},\mu_2^{\rm a})$ and
$\mu^{\rm b} = (\mu_1^{\rm b},\mu_2^{\rm b})$. Assume zero interest rates, we initialize the bid--ask marginals $\mu_i^{\rm a},\mu_i^{\rm b}$,
$i=1,2$, from the Black--Scholes model with parameters
\[
(T_1,\sigma_1^{\rm b})=(0.5,0.19),\quad
(T_1,\sigma_1^{\rm a})=(0.5,0.20),\quad
(T_2,\sigma_2^{\rm b})=(1,0.17),\quad
(T_2,\sigma_2^{\rm a})=(1,0.18),
\]
whose density and potential functions are presented in \cref{fig:bs-density} and \cref{fig:bs-potential}, respectively.

\begin{figure}[!t]%
    \centering
        \caption{(a)--(b): The left shows the density functions of $\mu_1^{\rm b}, \mu_1^{\rm a}, \mu_2^{\rm b}, \mu_2^{\rm a}$, and the right displays the corresponding potential functions, confirming \cref{asm:bidAskConvexOrder}. 
    (c)--(d): Super- and subhedging bounds and their differences for $h(x_1,x_2)=(x_2-Kx_1)^+$ with $K\in \{0.8, 0.85, \ldots, 1.15, 1.2\}$, given strikes $\{60,65, 70,\ldots,140\}$ at maturities $T_1,T_2$, spot $x_0=100$ and zero interest rates.}

    \begin{subfigure}[b]{0.48\linewidth}
        \centering
                \caption{Densities of $\mu_1^{\rm b}, \mu_1^{\rm a}, \mu_2^{\rm b}, \mu_2^{\rm a}$}
        \includegraphics[width=0.9\linewidth]{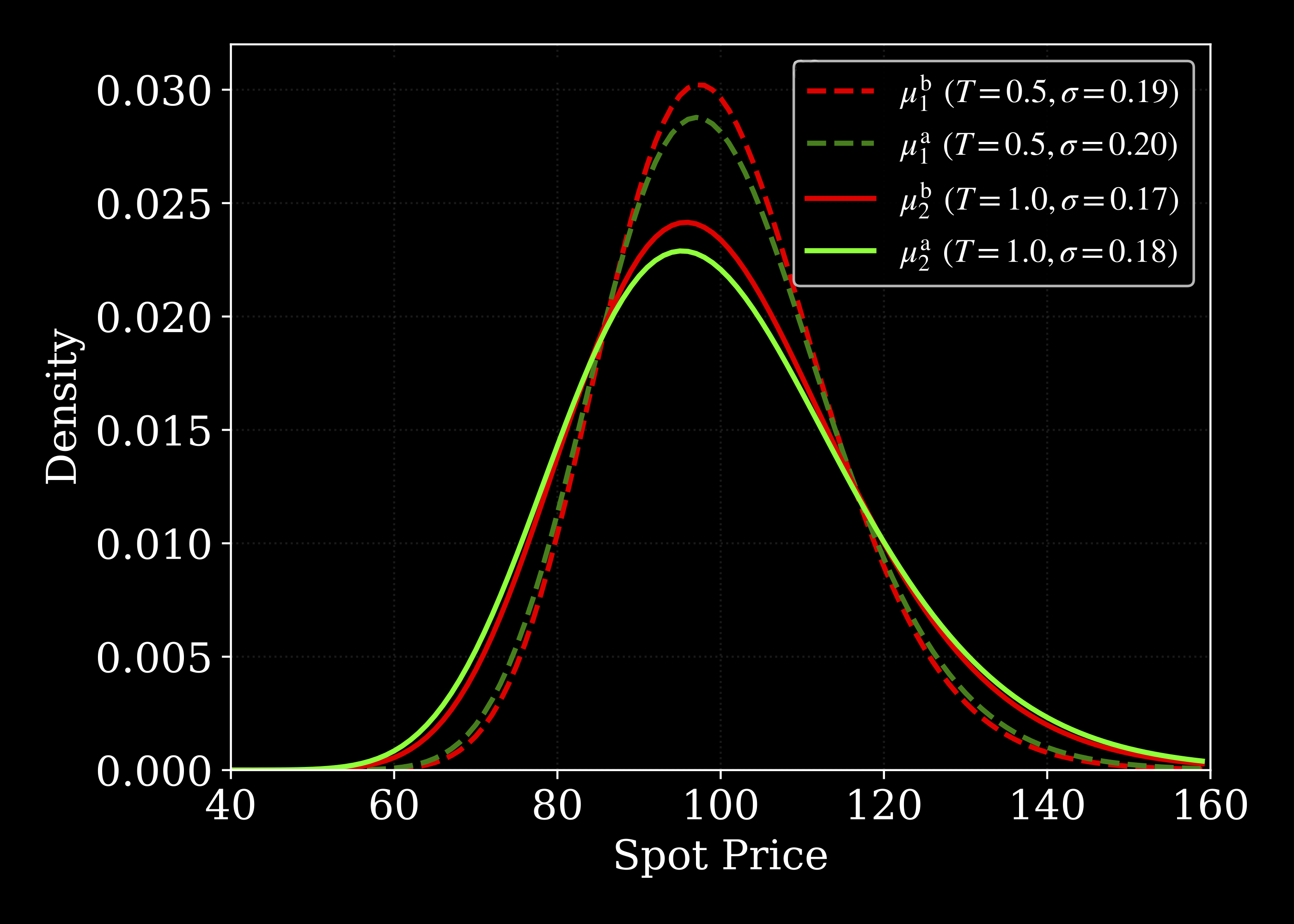}
        \label{fig:bs-density}
    \end{subfigure}
    \hfill
    \begin{subfigure}[b]{0.48\linewidth}
        \centering
           \caption{Potential functions}
        \includegraphics[width=0.9\linewidth]{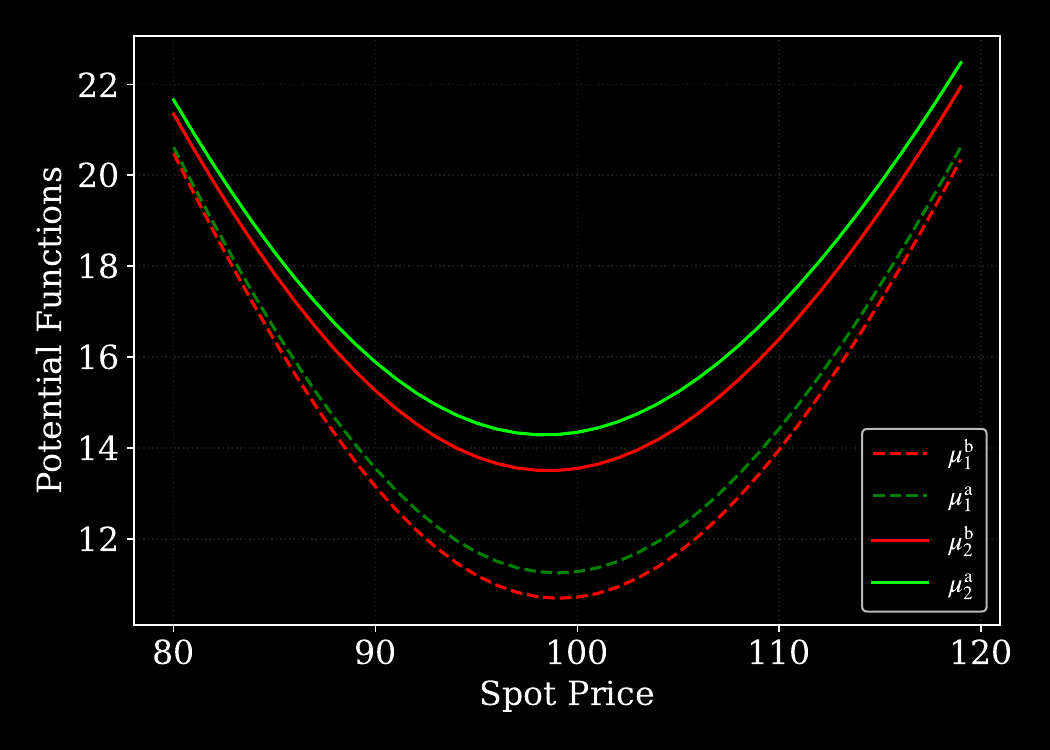}
        \label{fig:bs-potential}
    \end{subfigure}

    \begin{subfigure}[b]{0.48\linewidth}
        \centering
        \caption{Super- and subhedging bounds}%
        \includegraphics[width=0.9\linewidth]{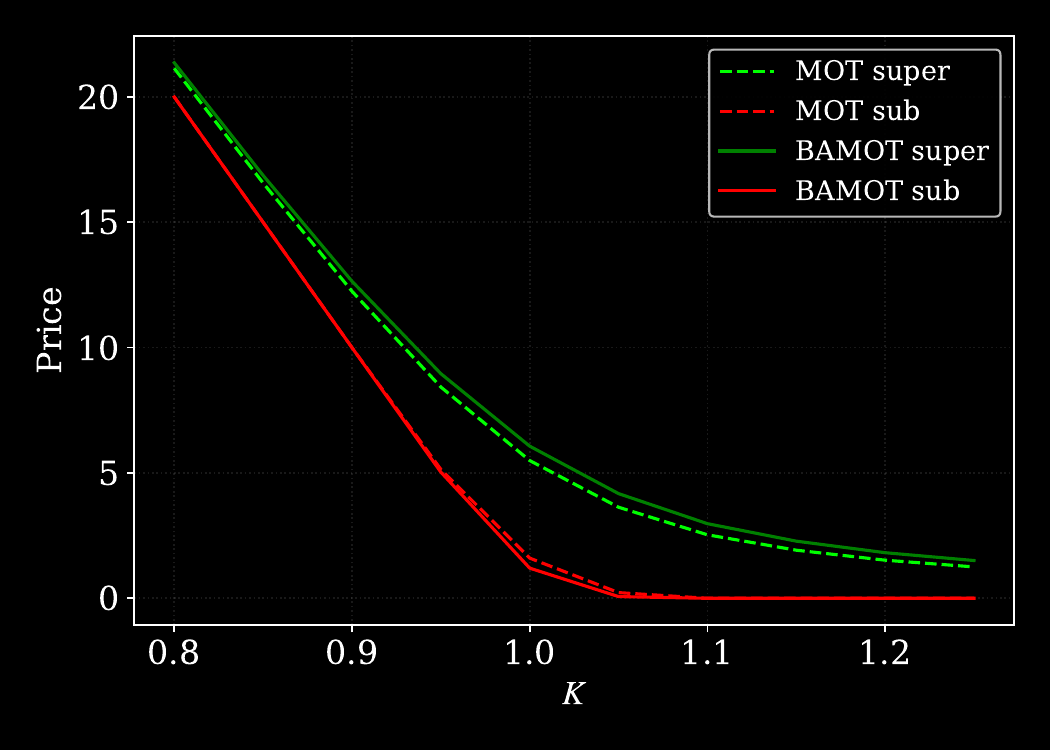}
        \label{fig:mixture-bounds}
    \end{subfigure}
    \hfill
    \begin{subfigure}[b]{0.48\linewidth}
        \centering
         \caption{Differences of robust bounds}
        \includegraphics[width=0.9\linewidth]{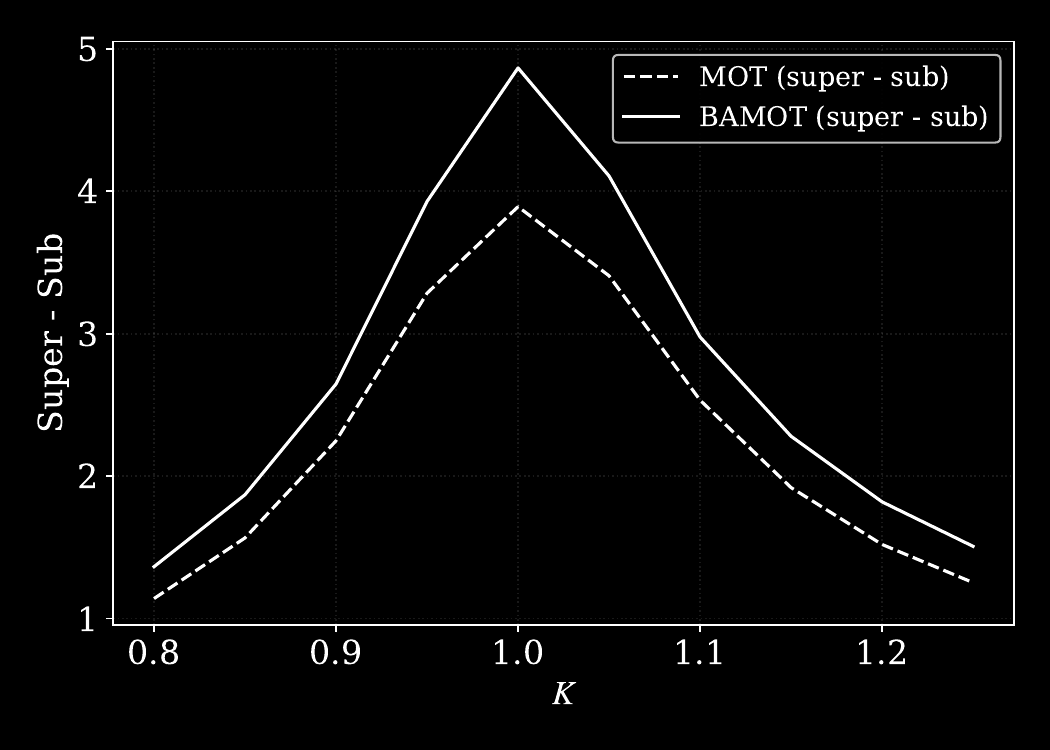}
        \label{fig:mixture-bounds-diff}
    \end{subfigure}
    \label{fig:combined}
\end{figure}
Fix a collection of call strikes $\{K_{m}\}_{m\in [M]}$ available at both $T_i, i=1,2$. The dual problem $D(h)$ then leads to the following discretized optimization (see~\cref{app:LP} for details):
\begin{align*}
   \inf \quad &
   a + \sum_{i=1}^2 b_i x_0
   + \sum_{i=1}^2 \sum_{m=1}^M
   \Bigl(
   c_{i,m}^{\rm a} c^{\rm a}(K_{m},T_i)
   - c_{i,m}^{\rm b} c^{\rm b}(K_{m},T_i)
   \Bigr) \\
   \text{such that }\quad &
   a + \sum_{i=1}^2 b_i x_i
   + \sum_{i=1}^2 \sum_{m=1}^M
   (c_{i,m}^{\rm a}-c_{i,m}^{\rm b})(x_i-K_{m})^+ \\
   &\qquad
   + \Delta(x_1)(x_2-x_1)
   \;\ge\;
   (x_2-Kx_1)^+,
   \qquad (x_1,x_2)\in(\R_+)^2,\\
   & a,b_i\in\R,
   \qquad c_{i,m}^{\rm a},c_{i,m}^{\rm b}\ge 0, \qquad i =1,2,\, m\in [M].
\end{align*}
We compute and compare the super- and subhedging prices under BAMOT and MOT using mid marginals, which are constructed by averaging the bid and ask marginals, for
$K \in \{0.8, 0.85, \ldots, 1.2\}$, and present the results in \cref{fig:combined}.
As illustrated in \cref{fig:mixture-bounds} and \cref{fig:mixture-bounds-diff}, BAMOT produces a wider price range
than MOT. The discrepancy is most pronounced near $K = 1$ and
diminishes as $K$ approaches the boundary values.

\subsection{Convergence Rates} \label{sec:convDigital}%

In this section, we present two results on the convergence rate of BAMOT to classical MOT when the payoff is either a risk-reversal strategy or an at-the-money (ATM) digital option. We show that the former exhibits a linear rate of convergence, while the latter admits a square-root rate.

For both payoffs, we consider the setting when $\mu^{\rm b}$ and $\mu^{\rm a}$ are the one-year marginals, i.e., $T=1$, in the Black--Scholes model with zero interest rates and volatilities $\sigma^{\rm b}=15\%$ and $\sigma^{\rm a}=20\%$, respectively. We set the spot price $x_0=1$ for simplicity. We then continuously deform the bid and ask marginals toward the mid marginal $\mu^{\rm m}=\frac{\mu^{\rm b}+\mu^{\rm a}}{2}$ by setting $\mu^{\rm s,\gamma}=(1-\gamma)\mu^{\rm s}+\gamma\mu^{\rm m}$, ${\rm s}\in\{{\rm b},{\rm a}\}$, with $\gamma\in[0,1]$ increasing to one.

\begin{figure}[!ht]
    \caption{Left: Bid--ask spread of one-year call options  in the Black--Scholes model with $\sigma^{\rm b}=15\%$ and $\sigma^{\rm a}=20\%$. Comparison with Black--Scholes Vega based on bid prices. Right: Risk-reversal strategy $h(x)=(x-1.05)^+-(0.95-x)^+$ 
            and  optimal dominating profile.} %
    \centering
    \begin{subfigure}[t]{0.45\linewidth}
        \centering
        \includegraphics[width=2.75in,height = 2in]{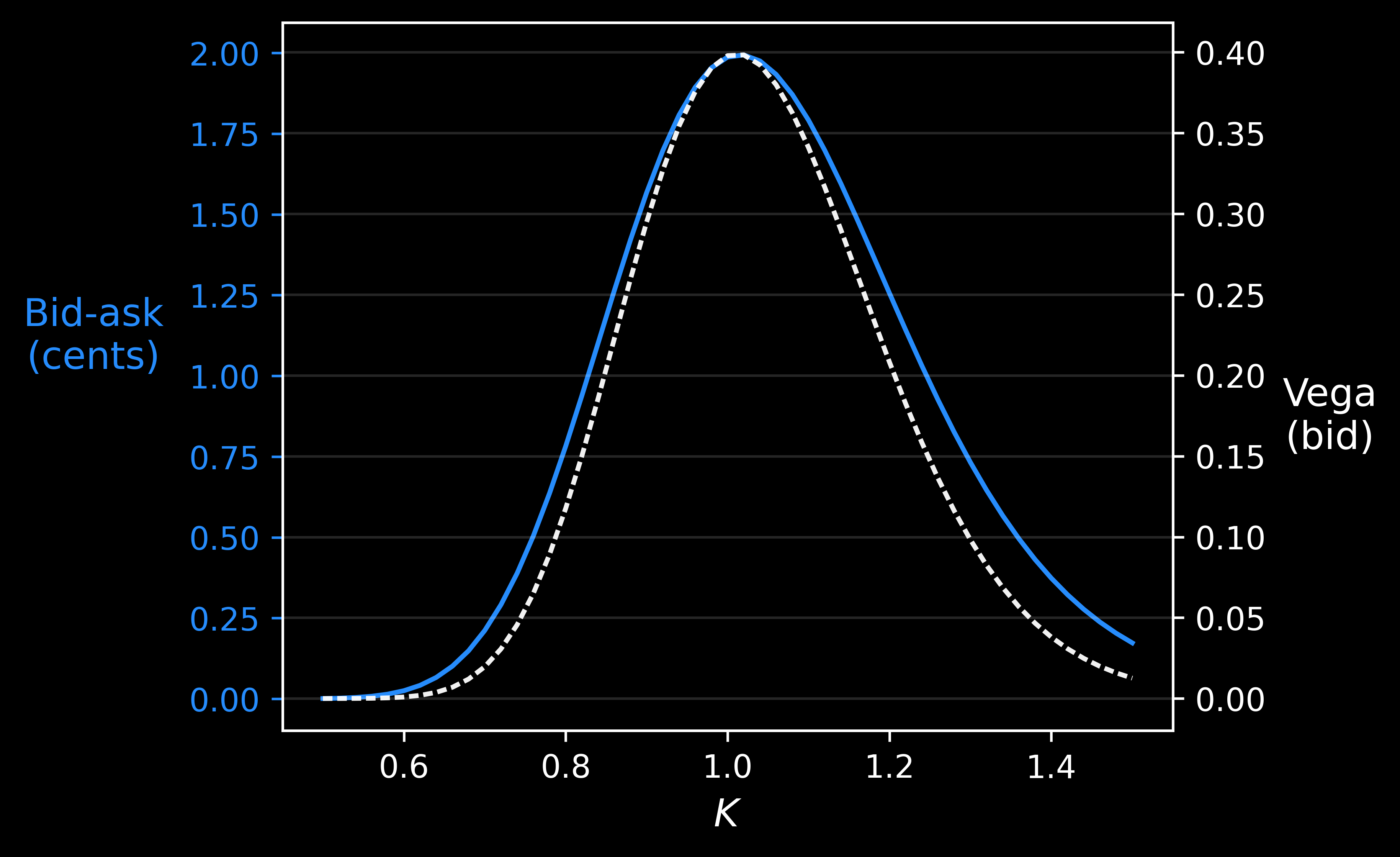}
        \label{fig:BidAskVega}
    \end{subfigure}
    \hfill
    \begin{subfigure}[t]{0.45\linewidth}
        \centering
        \includegraphics[width=2.75in,height  = 2in]{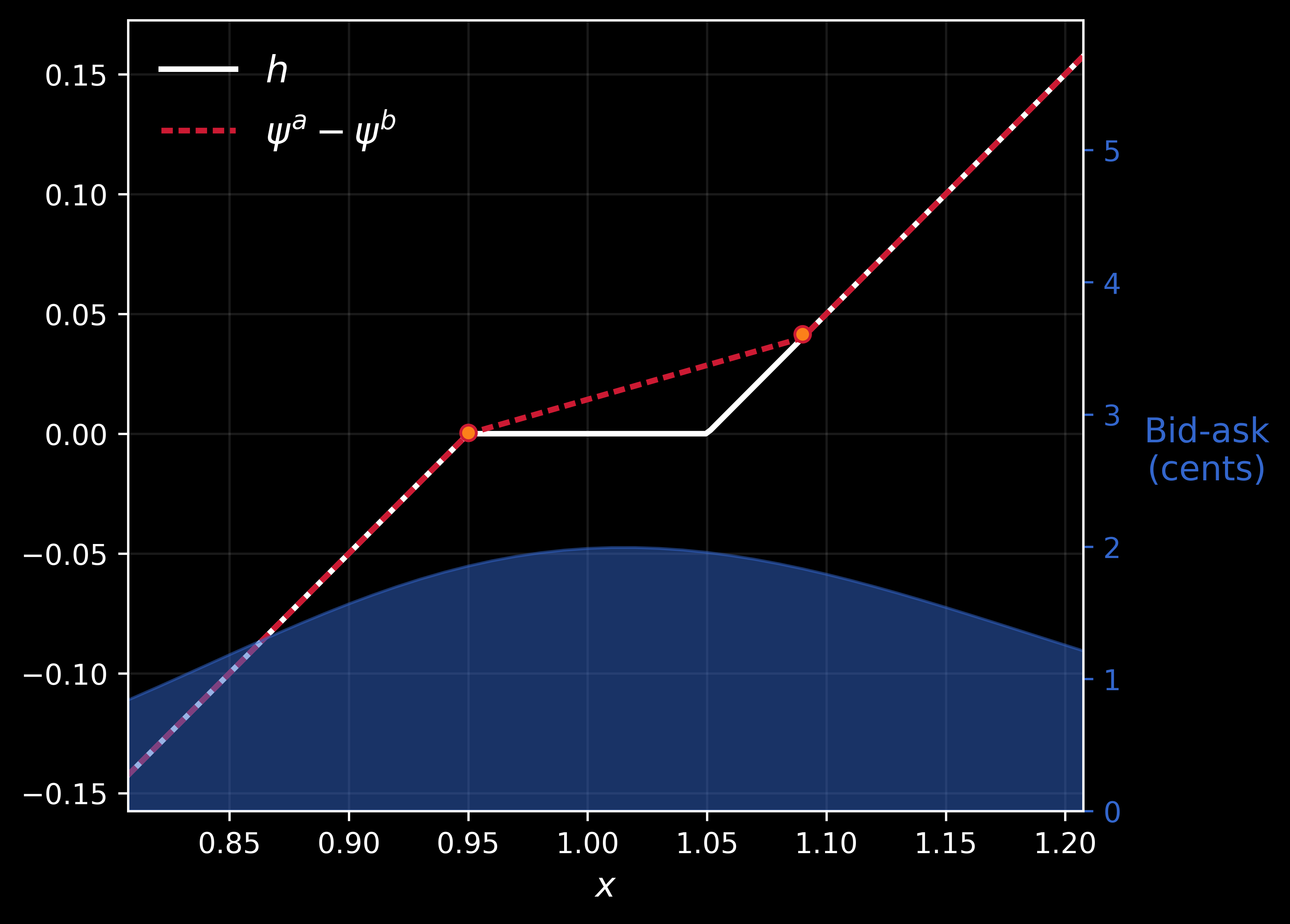}
        \label{fig:RRDominance}
    \end{subfigure}
    \label{fig:BS_numerics}
\end{figure}

The bid--ask distances are obtained by maximizing the bid--ask spread among call options; see \cref{rmk:bid-ask-distance}~(a). By a first-order Taylor expansion of the Black--Scholes price $c_{\text{BS}}(K,T,\sigma)$ with respect to $\sigma$, the bid--ask spread of call options is maximized where the Vega is large, namely near the money ($K\approx 1$). This behavior is confirmed in the left panel of \cref{fig:BidAskVega}, showing the spread for the initial Black--Scholes marginals $\mu^{\rm b}, \mu^{\rm a}$.

For the risk-reversal payoff $h(x)=(x-1.05)^+-(0.95-x)^+$, which is a difference of two convex functions, \cref{prop:MidEstimate} implies a linear convergence rate. The right panel of \Cref{fig:RRDominance} displays the payoff together with an optimal dominating profile $\psi^{\rm a,\star}-\psi^{\rm b,\star}$, and \cref{fig:RRBSConv} confirms the linear rate of convergence. On the other hand, for the ATM digital option $h(x)=\mathds{1}_{\{x\ge 1\}}$, \cref{prop:USC_sqrt_rate} predicts a square-root convergence rate, and \cref{fig:DigitalBSConv} illustrates that this rate is sharp.

\begin{figure}[!ht]
    \centering
    \caption{BAMOT convergence for a $95\%-105\%$ risk-reversal strategy.} %
   \centering
    \begin{subfigure}[t]{0.45\linewidth}
        \centering
            \caption{Superhedging price $P(h)$}
        \includegraphics[width=2.5in,height = 2in]{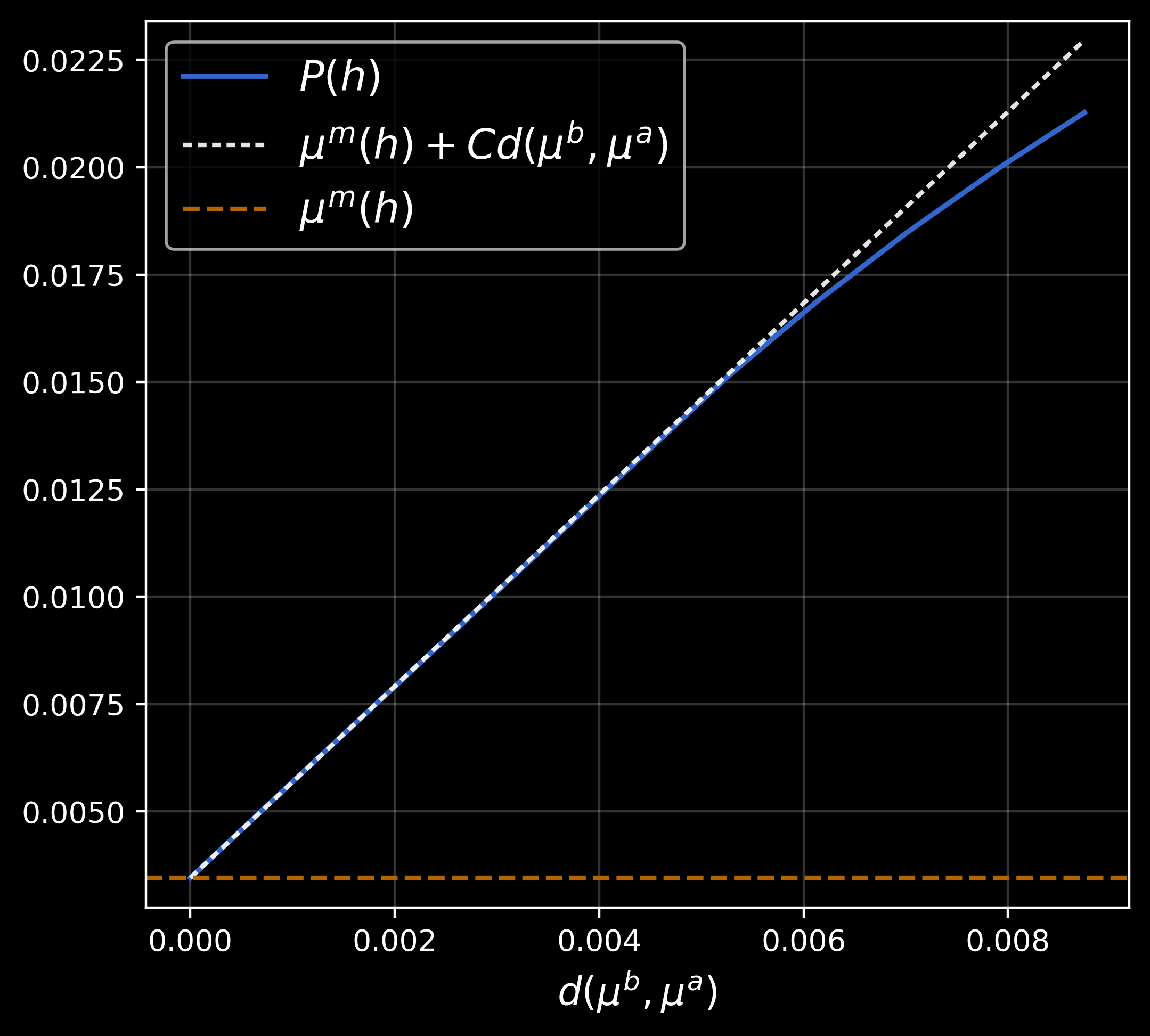}
        \label{fig:RR_conv1}
    \end{subfigure}
    \hfill
        \begin{subfigure}[t]{0.45\linewidth}
        \centering
            \caption{Premium $P(h)- \mu^{\rm m}(h)$}%
        \includegraphics[width=2.5in,height  = 2in]{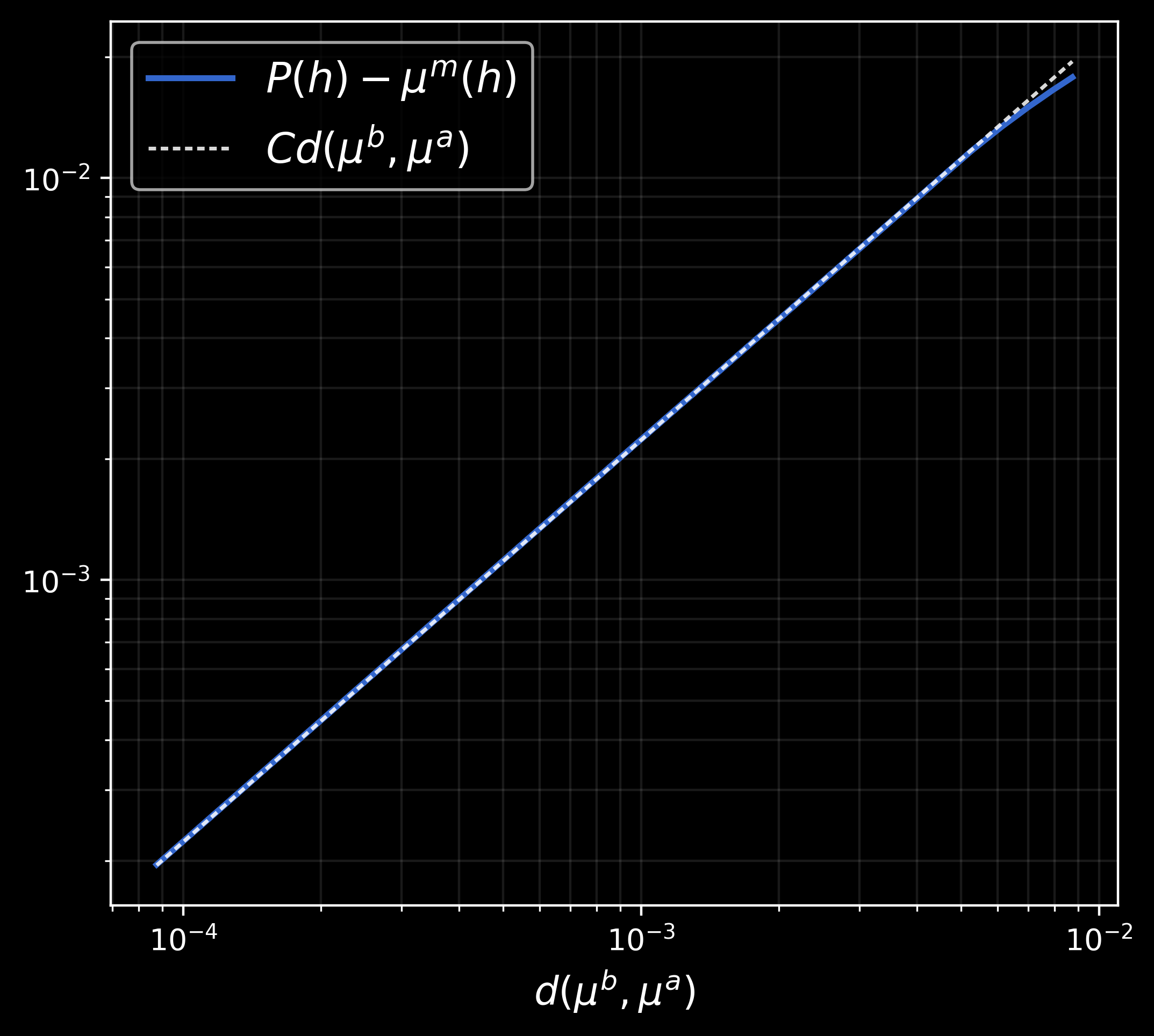}
        \label{fig:RR_conv2}
    \end{subfigure}
    \label{fig:RRBSConv}
\end{figure}

\begin{figure}[!ht]
    \centering
    \caption{BAMOT convergence for an at-the-money digital option.} %
   \centering
    \begin{subfigure}[t]{0.45\linewidth}
        \centering
            \caption{Superhedging price $P(h)$}
        \includegraphics[width=2.5in,height = 2in]{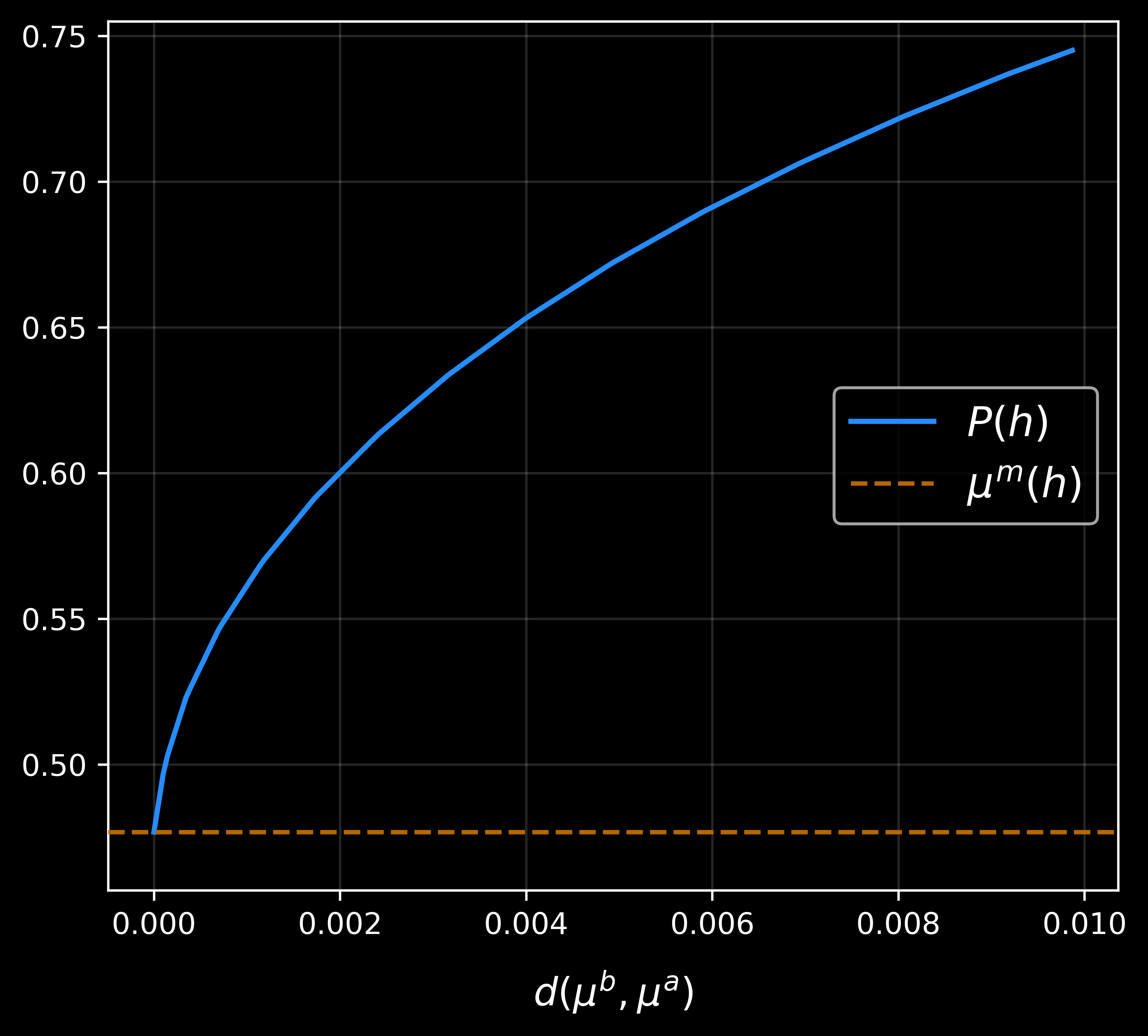}
        \label{fig:digital_conv1}
    \end{subfigure}
    \hfill
        \begin{subfigure}[t]{0.45\linewidth}
        \centering
            \caption{Premium $P(h)- \mu^{\rm m}(h)$}%
        \includegraphics[width=2.5in,height  = 2in]{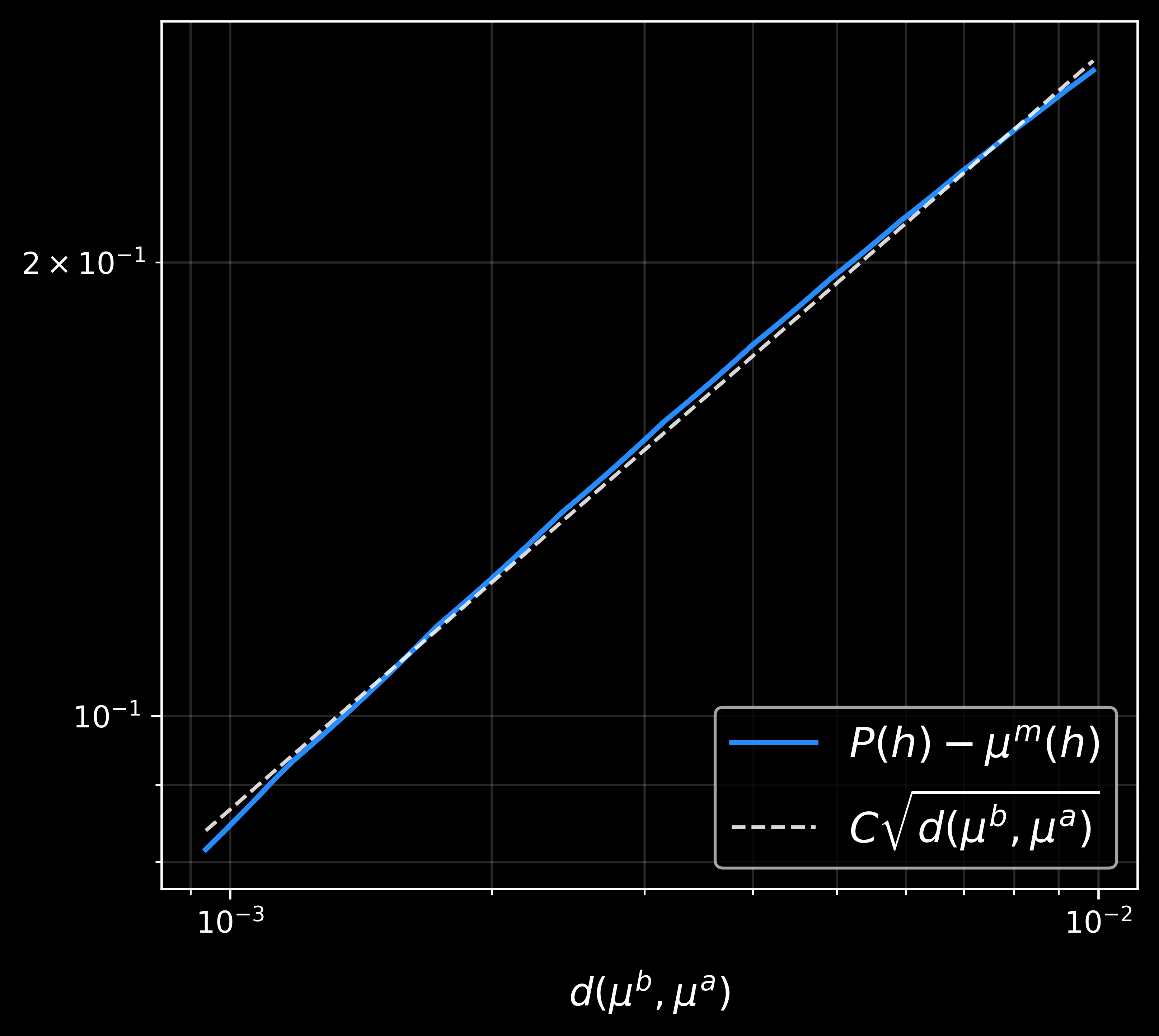}
        \label{fig:digital_conv2}
    \end{subfigure}
    \label{fig:DigitalBSConv}
\end{figure}

\section{Conclusion} 
\label{sec:future-work}

In this work, we propose BAMOT as a practical and mathematically tractable framework for robust pricing that incorporates bid--ask spreads on vanilla options. We establish strong duality for upper semicontinuous payoff functions and convergence to classical MOT as the bid--ask spreads vanish. This opens the door to extending numerous classical MOT results to the bid--ask setting, including strong duality for more general payoffs, dual attainment and regularity of dual optimizers, and the stability under general perturbations of the bid and ask marginals. Another avenue for future work involves interpolating bid and ask volatilities across maturities, consistent with options flow desk practices. This would enable the framework's extension to a continuous-time setting~\cite{DOLINSKYSonerContinuous,HouObloj2018}, aligning with the demands of exotic options trading. 

\appendix

\section{Construction of Bid and Ask Marginals from Enhanced Quotes} \label{sec:existenceBidAsk}

As noted in \cref{sec:bid-ask-marginals}, the existence of bid and ask marginals is implicitly assumed in practice in order to produce arbitrage-free implied volatility skews. A natural question is whether these marginals always exist, which is the focus of this section.  
Specifically, we present a procedure for enhancing market quotes by replacing them with the ``best” available super- and subhedging prices, which enables exact calibration of the ask marginals; see \cref{fig:SMIImprovement}.
On the other hand, our findings suggest that marginal distributions can  only produce a best-fit approximation of bid market quotes.

Consider a strip of  available strikes $K_0 < \ldots < K_M$ for  fixed maturity $T$,  and corresponding   bid and ask quotes 
$$p^{\rm b}_m \le p^{\rm a}_m, \quad c^{\rm b}_m \le c^{\rm a}_m, $$  
of put  and call  options struck at $K_m$, $m=0,\ldots,M$. %
Suppose that $K_0 = 0$, with $p_0^b = p_0^a = 0$ and $c_0^b, c_0^a $ equal to the  $T-$forward price $F$.     If  the bid or ask price is unavailable for some strike, we impute it according to 
$$p^{\rm b}_m = 0,  \quad p^{\rm a}_m = K_m, \quad  c^{\rm b}_m = 0, \quad c^{\rm a}_m = F,$$
using the bounds $ (K_m-X_T)^+ \le K_m$ and $ (X_T-K_m)^+ \le X_T$ for puts and calls, respectively. Assume the market quotes are free of static arbitrage, ensuring that the family  of calibrated models
\begin{equation*}\label{eq:discreteAdmissibleSet}
 \cQ_M^{\rm b,a} = \big\{\Q  \, \mid \,   \text{martingale measure, }   v_m^{\rm b} \le v_m^\Q \le v_m^{\rm a} \ \ \forall \ m \big\}, \quad   \begin{cases}v_m^{\Q} = \E_{\Q}[(K_m - X_T)^+], \quad v_m^{\rm s} = p_m^{\rm s}, \\[0.5em]
 v_m^{\Q} =\E_{\Q}[(X_T-K_m)^+], \quad v_m^{\rm s} = c_m^{\rm s}, 
 \end{cases}
\end{equation*}
for $ s\in \{{\rm b, \rm a}\}$, 
is non-empty. This section explains how $ \cQ_M^{\rm b,a}$ relates to the admissible set $\cQ(\mu^{\rm b},\mu^{\rm a})$   in \eqref{eq:martingale-couplings} with $N=1$. Specifically, we construct bid and ask marginals $\mu^{\rm b},\mu^{\rm a}$ such that the latter precisely recovers the ask quotes following a novel quotes enhancement procedure. A similar procedure is applied  to the bid prices, facilitating the calibration of the bid marginal to market quotes. 
\subsection{Combining Put and Call Options}\label{sec:putcall}
Recall that $c_0^{\rm b} = c_0^{\rm a} = F$, i.e., the forward contract has zero bid--ask spread. Then in view of put-call parity, we can replace the call quotes by 
\begin{align*}
    c^{\rm b}_m \leftarrow  \max\big(c^{\rm b}_m, \ p^{\rm b}_m + F- K_m\big),  \quad 
    c^{\rm a}_m \leftarrow  \min\big(c^{\rm a}_m, \ p^{\rm a}_m + F- K_m\big).
\end{align*}
Then the put quotes no longer carry any additional information 
and can be  discarded.
\subsection{Quotes Enhancement}\label{eq:improve}
Each quote may be further enhanced using other call options as hedging instruments. 
To wit, consider the following super- and subhedging problems, 
\begin{align*}
    \underline{c}^{\rm a}_m &= \inf_{\lambda \in \Lambda_m^{+}} \text{cost}^{+}(\lambda) ,\;  \Lambda_m^{+} = \Big\{(\lambda^{\rm a},\lambda^{\rm b})\in (\R_+^{M+1})^2 \, \mid \, \sum_{\ell=0}^M  (\lambda^{\rm a}_{\ell} - \lambda^{\rm b}_{\ell})(x-K_{\ell})^+ \ge (x-K_m)^+ \; \ \forall \ x\ge 0 \Big\}, \\[1em]
      \overline{c}^b_m &= \sup_{\lambda \in \Lambda_m^{-}} \text{cost}^{-}(\lambda), \;    \Lambda_m^{-} = \Big\{(\lambda^{\rm b},\lambda^{\rm a})\in (\R_+^{M+1})^2 \, \mid \, \sum_{\ell=0}^M  (\lambda^{\rm b}_{\ell} - \lambda^{\rm a}_{\ell})(x-K_{\ell})^+ \le (x-K_{m})^+ \; \ \forall \ x\ge 0 \Big\},
\end{align*}
with the cost functions $\text{cost}^{\pm}(\lambda) = \pm \sum_{\ell=0}^M  (\lambda^{\rm a}_{\ell} c^{\rm a}_{\ell}  - \lambda^{\rm b}_{\ell} c^{\rm b}_{\ell})$. By construction, we have that   $c_m^{\rm b} \le \overline{c}_{m}^{\rm b}$ and $\underline{c}_{m}^{\rm a} \le c_m^{\rm a}$ for all $m$. Hence, the above  quotes enhancement procedure generates tighter bid--ask spreads. 
 \cref{fig:SMIImprovement} illustrates this  for vanilla options on the Swiss Market Index (SMI) expiring on 10-17-2025, as of 10-09-2025. As can be seen, the bid and ask enhanced quotes do not cross and become non-increasing in strike. Moreover, while the bid quotes exhibit mixed convexity---being convex in some regions and concave in others---the ask quotes appear to be convex in the strike. We formalize these empirical observations in the next result. 

\begin{figure}[!t]
\caption{Quotes enhancement of  10-17-2025  call options on the  Swiss Market Index, as of 10-09-2025. }
    \begin{subfigure}[b]{0.48\linewidth}
        \centering
          \caption{Global view}
        \includegraphics[width=3.1in, height = 2.6in]{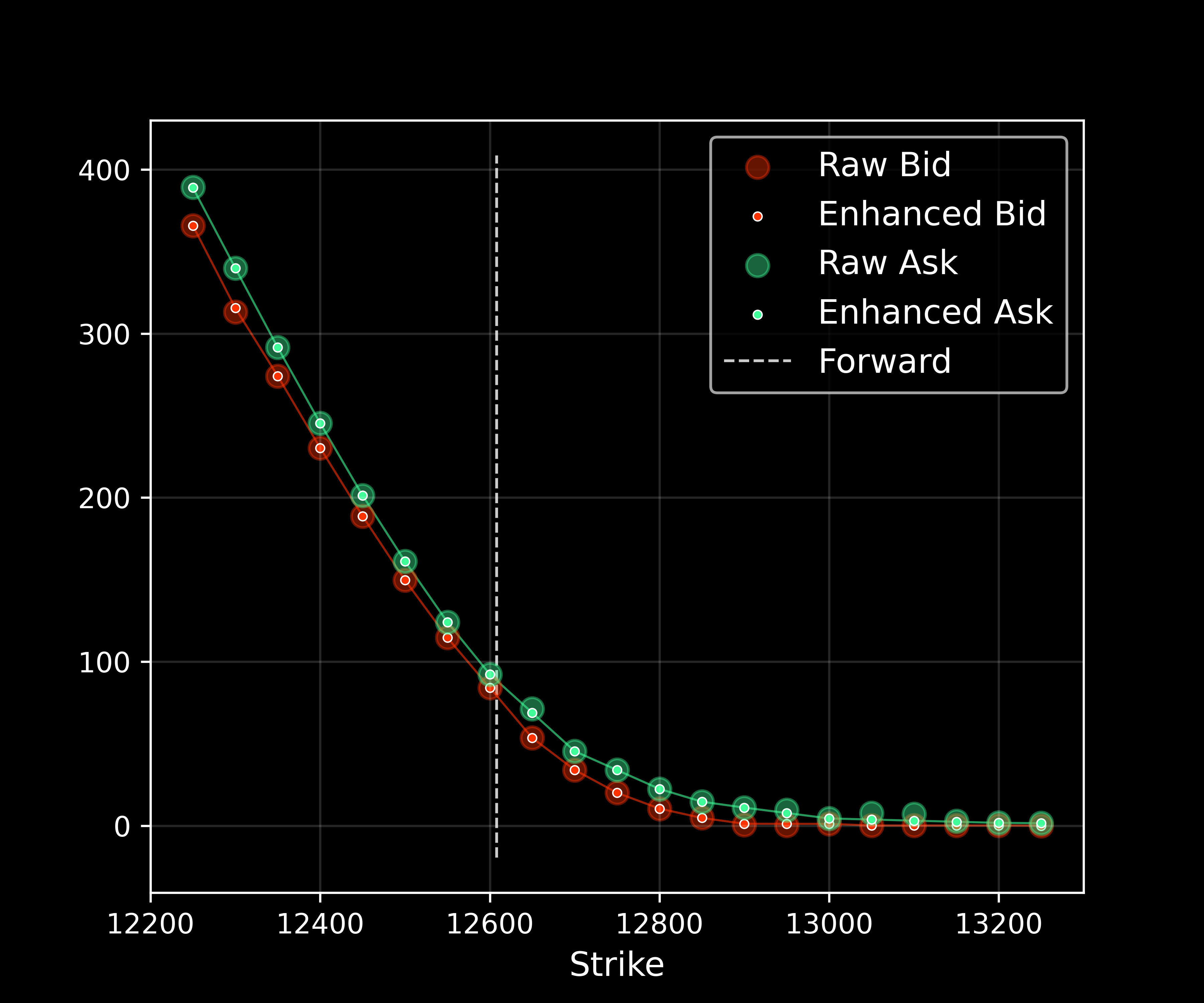}
    \end{subfigure}
    \begin{subfigure}[b]{0.48\linewidth}
        \centering
        \caption{Zoom-in View}
        \includegraphics[width=3.1in, height = 2.6in]{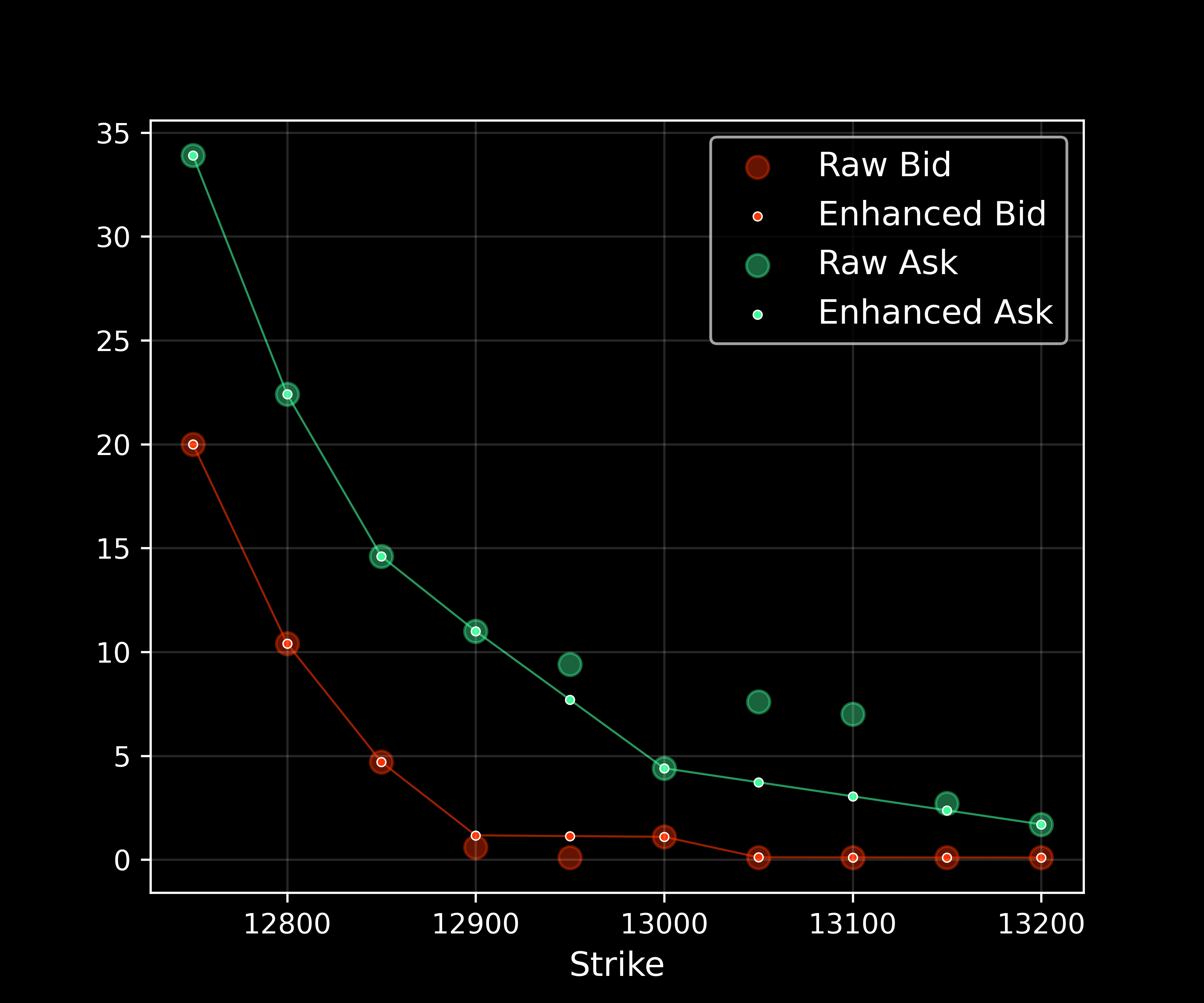}
    \end{subfigure}
    \label{fig:SMIImprovement}
\end{figure}

\begin{proposition} \label{prop:bidAsk}
    The enhanced bid and ask quotes satisfy the following properties: 
    \begin{enumerate}[label = (\roman*)]
    \item (Consistency) $\overline{c}^{\rm b}_m \le \underline{c}^{\rm a}_m$ for all $m$.
        \item (Monotonicity) $F = \underline{c}^{\rm a}_0 \ge \ldots \ge \underline{c}^{\rm a}_M$, and $F = \overline{c}^{\rm b}_0 \ge \ldots \ge \overline{c}^{\rm b}_M$. 
        \item (Convexity, ask quotes) For all $m_{-} < m < m_{+} $,  
        \begin{equation}\label{eq:convexity}
            \underline{c}^{\rm a}_{m} \le \gamma\underline{c}^{\rm a}_{m_{-}} + (1-\gamma)\underline{c}^{\rm a}_{m_{+}}\quad \text{when} \quad K_{m} = \gamma K_{m_{-}} + (1-\gamma)K_{m_{+}}.
        \end{equation}
    \end{enumerate}
\end{proposition}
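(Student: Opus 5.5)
The plan is to work throughout with the primal side given by the nonempty set $\cQ_M^{\rm b,a}$, and to use the two LP formulations of $\underline{c}^{\rm a}_m$ and $\overline{c}^{\rm b}_m$ as super-/subhedging problems. The basic tool is the finite-dimensional analogue of weak duality (\cref{lemma:weak-duality}). Fix any $\Q\in\cQ_M^{\rm b,a}$ and any $(\lambda^{\rm a},\lambda^{\rm b})\in\Lambda_m^+$. Taking expectations in the dominance inequality, and using $\lambda^{\rm a},\lambda^{\rm b}\ge0$ together with $c^{\rm b}_\ell\le v^\Q_\ell\le c^{\rm a}_\ell$ (after the put--call merging of \cref{sec:putcall}, which preserves these bounds under $\Q$ by put--call parity), we get
\[
\E_\Q[(X_T-K_m)^+]\le \text{cost}^+(\lambda).
\]
Hence $\E_\Q[(X_T-K_m)^+]\le \underline{c}^{\rm a}_m$. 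Symmetrically, $\E_\Q[(X_T-K_m)^+]\ge \overline{c}^{\rm b}_m$.

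\textbf{(i) Consistency.} This follows immediately from the two bounds: pick one $\Q\in\cQ_M^{\rm b,a}$, which exists by the no-static-arbitrage assumption, and sandwich $\overline{c}^{\rm b}_m\le v_m^\Q\le \underline{c}^{\rm a}_m$.

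\textbf{(ii) Monotonicity.} For the ask side, note that $(x-K_{m})^+\ge (x-K_{m+1})^+$. Any superhedge in $\Lambda_m^+$ therefore also lies in $\Lambda_{m+1}^+$, so $\underline{c}^{\rm a}_{m+1}\le \underline{c}^{\rm a}_m$. For the bid side, use the inclusion $\Lambda_{m+1}^-\subseteq\Lambda_m^-$, which gives $\overline{c}^{\rm b}_{m+1}\le\overline{c}^{\rm b}_m$. At $m=0$, the single forward position ($\lambda^{\rm a}_0=1$ or $\lambda^{\rm b}_0=1$) gives $\underline{c}^{\rm a}_0\le F\le\overline{c}^{\rm b}_0$. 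Combined with (i) and the bounds $F\ge\E_\Q[X_T]$ (equality, since $c^{\rm b}_0=c^{\rm a}_0=F$ forces $\E_\Q X_T=F$), this yields $\underline{c}^{\rm a}_0=\overline{c}^{\rm b}_0=F$.

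\textbf{(iii) Convexity of the ask quotes.} Take near-optimal superhedges $\lambda^{\pm}\in\Lambda^+_{m_\pm}$ for the strikes $K_{m_-}$ and $K_{m_+}$, and form the convex combination $\gamma\lambda^-+(1-\gamma)\lambda^+$. Its nonnegativity structure is preserved, and its payoff dominates
\[
\gamma(x-K_{m_-})^++(1-\gamma)(x-K_{m_+})^+,
\]
which by convexity of $k\mapsto(x-k)^+$ dominates $(x-K_m)^+$ when $K_m=\gamma K_{m_-}+(1-\gamma)K_{m_+}$. The combination therefore lies in $\Lambda^+_m$. Its cost is $\gamma\,\text{cost}^+(\lambda^-)+(1-\gamma)\,\text{cost}^+(\lambda^+)$ by linearity, and letting the near-optimality slack go to zero gives \eqref{eq:convexity}.

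The analogous argument fails for bid quotes because the subhedging constraint goes the wrong way: a convex combination of subhedges of $(x-K_{m_\pm})^+$ need not subhedge $(x-K_m)^+$. This is consistent with the empirical observation.

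\textbf{Main obstacle.} The only delicate point is checking that the infimum and supremum are finite, so that the near-optimal arguments are legitimate. Finiteness comes from the weak duality bound above (nonempty $\cQ_M^{\rm b,a}$) together with feasibility of the trivial hedges $\lambda^{\rm a}_0=1$ and $\lambda=0$. One also has to check that the put--call merging step keeps every $\Q\in\cQ_M^{\rm b,a}$ admissible, so that the sandwich argument applies after enhancement.
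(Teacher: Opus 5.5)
Your proof is correct. Parts (ii) and (iii) follow the paper's argument: the set inclusions $\Lambda_m^+\subseteq\Lambda_{m+1}^+$ and $\Lambda_{m+1}^-\subseteq\Lambda_m^-$, and a convex combination of $\varepsilon$-optimal superhedges.

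Part (i) takes a different route. The paper argues by contradiction on the portfolio side. If $\overline{c}^{\rm b}_m>\underline{c}^{\rm a}_m$, it takes a superhedge $\lambda^+\in\Lambda_m^+$ and a subhedge $\lambda^-\in\Lambda_m^-$ with $\text{cost}^{-}(\lambda^-)>\text{cost}^{+}(\lambda^+)$. Adding them gives a nonnegative payoff at negative cost, which contradicts static no-arbitrage. You instead fix a calibrated $\Q\in\cQ_M^{\rm b,a}$ and take expectations in the hedging inequalities, which gives $\overline{c}^{\rm b}_m\le\E_\Q[(X_T-K_m)^+]\le\underline{c}^{\rm a}_m$. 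Your check that put--call merging keeps every such $\Q$ admissible (via $\E_\Q[X_T]=F$) is also correct.

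The paper's argument needs only the portfolio-level no-arbitrage assumption and no pricing measure. Yours uses the consequence $\cQ_M^{\rm b,a}\neq\varnothing$. In return, your sandwich says more: every calibrated model prices within the enhanced quotes. That inclusion is exactly what the paper's later rewriting of $\cQ_M^{\rm b,a}$ in terms of $\overline{c}^{\rm b}_m,\underline{c}^{\rm a}_m$ requires. Consistency alone only says the enhanced quotes do not cross.

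You also fill in two details the paper leaves implicit. One is the endpoint identity $\underline{c}^{\rm a}_0=\overline{c}^{\rm b}_0=F$, which you get from the forward positions plus (i). The other is the finiteness of the LP values, which is what makes the $\varepsilon$-optimizers in (iii) legitimate.
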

\begin{proof}   

\textit{(i)} Suppose that $\overline{c}^{\rm b}_m >\underline{c}^{\rm a}_m$ for some $m\le M$. Then there exists $\lambda^{\pm} \in \Lambda_m^{\pm}$ such that $\text{cost}^{-}(\lambda^{-}) > \text{cost}^{+}(\lambda^{+})$. Consequently, 
$$ \sum_{\ell=0}^M  (\lambda^{\rm b,-}_{\ell} - \lambda^{\rm a,-}_{\ell})(x-K_{\ell})^+ \le (x-K_m)^+ \le \sum_{\ell=0}^M  (\lambda^{\rm a,+}_{\ell} - \lambda^{\rm b,+}_{\ell})(x-K_{\ell})^+. $$
Then the portfolio associated with $\lambda:= (\lambda^{\rm a,+}+\lambda^{\rm a,-},\lambda^{\rm b,+}+\lambda^{\rm b,-})$ yields a nonnegative payoff for a negative price, contradicting our assumption that the options market is arbitrage-free. 

\textit{(ii)} Fix $\ell < m$. As $(x-K_{\ell})^+ \ge (x-K_{m})^+$,  any superhedging portfolio of $(x-K_{\ell})^+$ also superhedges  $(x-K_{m})^+$. Hence $\Lambda_{\ell}^{+}\subseteq \Lambda_{m}^{+}$ and $\underline{c}^{\rm a}_{\ell} \ge \underline{c}^{\rm a}_{m}$ follows.  Similarly, $\Lambda_{m}^{-}\subseteq \Lambda_{\ell}^{-}$, which yields $\overline{c}^{\rm b}_{\ell} \ge \overline{c}^{\rm b}_{m}$. %

    \textit{(iii)} For any $\vae>0$, let %
    $\lambda_{m_{-}} \in \Lambda_{m_{-}}^+$,  $\lambda_{m_{+}} \in \Lambda_{m_{+}}^+$ 
    be  $\frac{\vae}{2}$-approximate optimizers to $\underline{c}^{\rm a}_{m_{-}}$ and  $\underline{c}^{\rm a}_{m_{+}}$, respectively. Since $K_{m} = \gamma K_{m_{-}} + (1-\gamma)K_{m_{+}}$, we have 
    $\gamma (x-K_{m_{-}})^+ + (1-\gamma) (x-K_{m_{+}})^+ \ge (x-K_{m})^+$ by the convexity of $K\mapsto (x-K)^+$. Thus, $\lambda_{m} := \gamma \lambda_{m_{-}} + (1-\gamma) \lambda_{m_{+}} \in \Lambda_{m}^+$. Therefore,
    \[
    \underline{c}^{\rm a}_{m} \leq \text{cost}^{+}(\lambda_{m}) =  \gamma \text{cost}^{+}(\lambda_{m_{-}}) + (1-\gamma)\text{cost}^{+}(\lambda_{m_{+}}) \leq \gamma\underline{c}^{\rm a}_{m_{-}} + (1-\gamma)\underline{c}^{\rm a}_{m_{+}} + \vae.
    \]
As $\vae>0$ is  arbitrary, the result follows. %

\end{proof}

From \cref{sec:putcall} and the consistency property in \cref{prop:bidAsk}, the set of calibrated models can be equivalently described as 
$$\cQ_M^{\rm b,a} = \big\{\Q  \, \mid \,   \text{martingale measure, }   \overline{c}_m^{\rm b} \le c_m^\Q \le \underline{c}_m^{\rm a} \ \ \forall \ m \big\}.$$ 

The quote enhancement mechanism described above ensures the existence of an ask marginal that precisely recovers the filtered ask quotes. A construction is provided in the next section for  completeness, keeping in mind that the parametric approach  in \cref{sec:bid-ask-marginals} yields more realistic implied volatility skews and is thus preferred in practice. Moreover, the following approach shows the existence of a generalized bid marginal, potentially involving negative mass, see \cref{rem:bid}.   %

A  quote-based formulation can be considered by directly optimizing over $\cQ_M^{\rm b,a}$. However, working with bid--ask marginals offers several advantages. First, it yields a clean characterization of the no-arbitrage condition, as formulated in Assumption~2.1. Compared with the quote-level conditions obtained by testing collections of strategies under bid--ask spreads, first studied in \cite{Cousot2007} and summarized in \cite[Table~1]{CohenReisingerWang2020}, Assumption~2.1 provides a concise characterization at the level of fitted marginals. Moreover, the consistency result in Section~4.1 and the bid--ask distance are intrinsically marginal notions: they rely on bid--ask marginals defined over all strikes and have no direct counterpart in a raw finite-quote formulation. In particular, mid-quotes need not be generated by any probability measure; in that case, the analogue of the limiting MOT problem in Proposition~4.1 is not well-posed because its admissible set is empty. Furthermore, the marginal formulation yields primal attainment through the compactness of $\cQ(\mu^{\rm b}, \mu^{\rm a})$ and permits an explicit characterization of the primal optimizers, whereas $\cQ_M^{\rm b,a}$ need not even be weakly compact. As shown in Propositions~5.1 and~5.3 in the single-maturity setting, these optimizers can be expressed directly in terms of the bid and ask marginals. Obtaining a comparable characterization under finite-quote constraints would appear considerably more difficult, if possible at all.

\subsection{Existence of an  Ask Marginal} \label{sec:existenceAsk}

Let $\underline{M} = \sup\{1\leq m \le  M \, \mid \, \underline{c}_{m}^{\rm a}  < \underline{c}_{m-1}^{\rm a}  \}$.\footnote{If the set is empty, then $\underline{c}_0^{\rm a} = \ldots = \underline{c}_M^{\rm a} = F$, and also $c_0^{\rm a} = \ldots = c_M^{\rm a} = F$, leading to a degenerate market.} When $\underline{M}<M$, then the enhanced ask quotes $\underline{c}_{m}^{\rm a}  =\underline{c}_{\underline{M}}^{\rm a}$ for all $m>\underline{M}$, meaning that the cheapest superhedging of the call at $K_m$ is given by the call at $K_{\underline{M}}$. We can therefore truncate the quotes to $ \underline{c}^{\rm a}_0 \ge \ldots \ge \underline{c}^{\rm a}_{\underline{M}}$, as rational agents would always prefer $(X_T - K_{\underline{M}})^+$ over $(X_T - K_m)^+$ for $m>\underline{M}$, or be indifferent if the support of the  risk-neutral distribution is contained in $[0,K_{\underline{M}}]$. 

Write $M$ in lieu of $\underline{M}$. By the maximality of $M$, we have $\underline{c}_{M}^{\rm a} < \underline{c}_{M-1}^{\rm a}$,  hence
$\partial^-\underline{c}^{\rm a}(K_M)
:= \frac{\underline{c}_{M}^{\rm a}-\underline{c}_{M-1}^{\rm a}}{K_M-K_{M-1}} < 0$. Let
$K_{M+1}:=K_M-\underline{c}_M^{\rm a}/\partial^-\underline{c}^{\rm a}(K_M)\ge K_M$ and set $\underline{c}_{M+1}^{\rm a}=0$. Define %
\begin{equation*}
\partial^+\underline{c}^{\rm a}(K)
:= \sum_{m=0}^{M} \mathds{1}_{[K_m,K_{m+1})}(K)
\frac{\underline{c}_{m+1}^{\rm a}-\underline{c}_{m}^{\rm a}}{K_{m+1}-K_m}.
\end{equation*}
Then $\partial^+\underline{c}^{\rm a} = 0$ beyond $K_{M+1}$, and  the convexity property \eqref{eq:convexity} implies that
\[
\underline{c}^{\rm a}(K)
:= F + \int_{0}^K \partial^+\underline{c}^{\rm a}(L)\,\dd L,
\qquad \underline{c}^{\rm a}(0)=F,
\qquad \forall\, K\ge 0,
\]
is a piecewise linear convex function that interpolates the enhanced ask quotes
$\{\underline{c}_m^{\rm a}\}_{m\in[M]}$. We then introduce the measure
\begin{equation}\label{eq:askmeasure}
    \mu^{\rm a}((L,K]) := \partial^+\underline{c}^{\rm a}(K) -\partial^+\underline{c}^{\rm a}(L), \;   L\le K,\quad    \mu^{\rm a}(\{0\}) = 1 + \partial^+\underline{c}^{\rm a}(0) \  \le  \ 1,
\end{equation}
which is nonnegative as $\partial^+\underline{c}^{\rm a}$ is nondecreasing in strike. We finally show that $\mu^{\rm a}$ is a probability measure and matches all enhanced ask quotes, thus corresponding to a feasible ask measure.

\begin{proposition}

The construction $\mu^{\rm a}$ in \eqref{eq:askmeasure} is a probability measure supported on $K_0<\ldots  < K_{M+1}$, such that      $\E_{\mu^{\rm a}}[(X-K_m)^+] = \underline{c}^{\rm a}_{m}$ for all $m$, with barycenter equal to  the forward $F = \underline{c}^{\rm a}_{0}$.  
\end{proposition}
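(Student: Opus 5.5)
The plan is to read everything off the piecewise linear convex function $\underline{c}^{\rm a}$, using the standard fact that a convex, nonincreasing call-price curve with the right boundary behaviour is the Breeden--Litzenberger transform of a probability measure. The proof has three steps: total mass, support, and call prices.

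\emph{Step 1: total mass.} By \eqref{eq:askmeasure}, $\mu^{\rm a}$ is the Lebesgue--Stieltjes measure of the nondecreasing right-continuous function $\partial^+\underline{c}^{\rm a}$ on $(0,\infty)$, plus an atom at $0$. Its total mass is therefore
\[
\mu^{\rm a}(\{0\})+\lim_{K\to\infty}\partial^+\underline{c}^{\rm a}(K)-\partial^+\underline{c}^{\rm a}(0)=1+0=1,
\]
since $\partial^+\underline{c}^{\rm a}=0$ beyond $K_{M+1}$. Nonnegativity of the atom at $0$ needs $\partial^+\underline{c}^{\rm a}(0)\ge -1$. This holds because
\[
\partial^+\underline{c}^{\rm a}(0)=\frac{\underline{c}_1^{\rm a}-F}{K_1}\ge -1,
\]
which is equivalent to $\underline{c}_1^{\rm a}\ge F-K_1$. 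That inequality should follow from no-arbitrage: any $\lambda\in\Lambda_1^+$ dominates $(x-K_1)^+\ge x-K_1$, so its cost is at least $F-K_1$ by consistency with any $\Q\in\cQ_M^{\rm b,a}$, the same argument as in \cref{prop:bidAsk}(i). The atoms of $\mu^{\rm a}$ sit exactly at the jump points of $\partial^+\underline{c}^{\rm a}$, namely $K_0,\dots,K_{M+1}$, because $\partial^+\underline{c}^{\rm a}$ is piecewise constant.

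\emph{Step 2: call prices.} For $K\ge0$, Fubini gives
\[
\E_{\mu^{\rm a}}[(X-K)^+]=\int_K^\infty \mu^{\rm a}((L,\infty))\,\dd L=\int_K^\infty\bigl(0-\partial^+\underline{c}^{\rm a}(L)\bigr)\,\dd L .
\]
Since $\underline{c}^{\rm a}(L)\to0$ (it vanishes from $K_{M+1}$ on, by the choice of $K_{M+1}$ and linearity on $[K_M,K_{M+1}]$), this integral equals $\underline{c}^{\rm a}(K)$. At the nodes this gives $\underline{c}^{\rm a}(K_m)=\underline{c}^{\rm a}_m$, because the integral of the slope interpolates the quotes exactly: the telescoping sum of $(\underline{c}_{m+1}^{\rm a}-\underline{c}_m^{\rm a})$ from $F=\underline{c}_0^{\rm a}$. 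Taking $K=0=K_0$ then yields the barycenter $\E_{\mu^{\rm a}}[X]=\underline{c}_0^{\rm a}=F$.

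\emph{Main obstacle.} The delicate point is the boundary bookkeeping at the two ends. At $K_M$, the truncation to $\underline{M}$ ensures the last slope is strictly negative, so $K_{M+1}$ is finite and $\ge K_M$; one should also handle the degenerate case $\underline{c}_M^{\rm a}=0$, where $K_{M+1}=K_M$. At $0$, the lower bound $\partial^+\underline{c}^{\rm a}(0)\ge-1$ must be justified from the quote-enhancement and no-arbitrage assumptions as in Step 1; everything else is routine.
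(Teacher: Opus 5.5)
Your proof is correct and follows the paper's argument essentially step by step. Total mass comes from the telescoping of $\partial^+\underline{c}^{\rm a}$, which vanishes beyond $K_{M+1}$; nonnegativity of the atom at $0$ comes from $\underline{c}^{\rm a}_1\ge F-K_1$; and the call prices, hence the barycenter at $K_0=0$, come from integrating $-\partial^+\underline{c}^{\rm a}$. Your justification of $\underline{c}^{\rm a}_1\ge F-K_1$, by pricing any $\lambda\in\Lambda_1^+$ under some $\Q\in\cQ_M^{\rm b,a}$, simply spells out what the paper asserts in one line as $\underline{c}^{\rm a}_1\ge (F-K_1)^+$.
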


\begin{proof}

Clearly,  $\mu^{\rm a}$ is  supported on $\{K_m\}_{m=0}^{M+1}$. Moreover,  its  total mass is given by 
$$\mu^{\rm a}(\R_+)= \mu^{\rm a}(\{0\}) + \mu^{\rm a}((0,K_{M+1}]) = 1 + \partial^+\underline{c}^{\rm a}(0) + \partial^+\underline{c}^{\rm a}(K_{M+1}) - \partial^+\underline{c}^{\rm a}(0) = 1.$$
Since  $\partial^+\underline{c}^{\rm a}$ is nondecreasing and $\partial^+\underline{c}^{\rm a}(0) \ge -1$ (noting that $\underline{c}^{\rm a}_1 -\underline{c}^{\rm a}_0 \ge (F-K_1)^+ -F \ge -K_1$ and $K_0 =0$),  $\mu^{\rm a}$ is nonnegative. Hence $\mu^{\rm a}$ is a probability measure. 
Finally,  by construction and integration by parts, 
\begin{align*}
    \E_{\mu^{\rm a}}[(X-K_m)^+] &= \int_{K_m}^{K_{M+1}} (x-K_m) \mu^{\rm a}(\rd x)  =  -\int_{K_m}^{K_{M+1}}\partial^+\underline{c}^{\rm a}(x)dx = \underline{c}^{\rm a}(K_m)=  \underline{c}^{\rm a}_{m}.
\end{align*}
    
\end{proof}

\begin{remark}\label{rem:bid}
A parallel construction applies to the enhanced bid quotes, leading to an exact fit. However, because the curve formed by  the enhanced bid quotes is not necessarily convex, the resulting object is generally a \emph{signed} measure. The latter nonetheless yields nonnegative prices for all call options and by extension, for all nonnegative convex payoffs. This points toward a generalization of convex order to signed measures, as explained in Remark~\ref{rem:signed-bid-measures}. %

\end{remark}

\subsection{Quotes Enhancement across Maturities}\label{app:quoteAcrossMaturities}

This section shows how the quotes enhancement algorithm in Section~\ref{eq:improve} can be extended across maturities. To wit, let $T_1<\ldots < T_N$ and consider  quotes $c^{\rm b}_{m,n}, c^{\rm a}_{m,n}$ for the $(K_m,T_n)$ call, assuming put options have already been integrated, see Section~\ref{sec:putcall}.  We then set
\begin{align}
 \underline{c}^{\rm a}_{m,n} &= \inf \Big\{ \text{cost}^{+}(\lambda) := \sum_{\ell = 0}^M \sum_{i=1}^N  (\lambda^{\rm a}_{\ell,i} c^{\rm a}_{\ell,i}  - \lambda^{\rm b}_{\ell,i} c^{\rm b}_{\ell,i}), \; \lambda \in \Lambda_{m,n}^{+} \Big\}, \label{eq:enhancedAskAcrossMat}\\[0.5em]
  \Lambda_{m,n}^{+} &= \Big\{(\lambda^{\rm a},\lambda^{\rm b})\in (\R_+^{(M+1) \times N})^2 \, \mid \, (\psi^{\lambda^{\rm a}},\psi^{\lambda^{\rm b}}) \in \Psi(h_{m,n}) \Big\}, \nonumber \\[0.5em] \; 
  h_{m,n}(x_1,\ldots, x_N) &= (x_n - K_m)^+, \nonumber
\end{align}
where %
$\psi_{i}^{\lambda}(x) = \sum_{\ell =0}^M \lambda_{\ell,i} (x- K_{\ell})^+$, $i = 1,\ldots, N$, and the admissible set $\Psi(h_{m,n})$ given in \eqref{eq:admissibleDual}. The enhanced bid quotes  $ \overline{c}^{\rm b}_{m,n}$ can be defined similarly. While expanding the set of hedging instruments across maturities naturally 
leads to  tighter bid--ask spreads compared to those in Section~\ref{eq:improve},  
it crucially ensures that the term structures $n \mapsto \underline{c}^{\rm a}_{m,n}$ are nondecreasing.  %

\begin{proposition}
    We have %
    $\underline{c}^{\rm a}_{m,1}\le \ldots \le \underline{c}^{\rm a}_{m,N}$ for all $m = 0,\ldots, M$. 
\end{proposition}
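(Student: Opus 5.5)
The plan is an inclusion of admissible sets, as in the proof of the monotonicity property~(ii) of \cref{prop:bidAsk}. The claim $\underline{c}^{\rm a}_{m,n}\le \underline{c}^{\rm a}_{m,n+1}$ will follow once I show $\Lambda^{+}_{m,n+1}\subseteq \Lambda^{+}_{m,n}$. The cost functional $\text{cost}^{+}$ in \eqref{eq:enhancedAskAcrossMat} does not depend on $(m,n)$. So the infimum over the larger set $\Lambda^{+}_{m,n}$ is at most the infimum over $\Lambda^{+}_{m,n+1}$. Iterating over $n=1,\dots,N-1$ then gives the chain of inequalities.

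To prove the inclusion, fix $\lambda=(\lambda^{\rm a},\lambda^{\rm b})\in\Lambda^{+}_{m,n+1}$. Then there is a progressively measurable $\Delta=(\Delta_1,\dots,\Delta_{N-1})$ with
\[
\sum_{i=1}^N \big(\psi^{\lambda^{\rm a}}_i-\psi^{\lambda^{\rm b}}_i\big)(x_i)+\sum_{i=1}^{N-1}\Delta_i(x_1,\dots,x_i)(x_{i+1}-x_i)\;\ge\;(x_{n+1}-K_m)^+
\]
for all $x\in\R_+^N$. The key pointwise observation is the convexity (tangent-line) inequality
\[
(x_{n+1}-K_m)^+\;\ge\;(x_n-K_m)^+ + \mathds{1}_{\{x_n> K_m\}}(x_{n+1}-x_n).
\]
I define a new delta $\tilde\Delta$ by keeping $\tilde\Delta_i=\Delta_i$ for $i\neq n$ and setting $\tilde\Delta_n(x_1,\dots,x_n):=\Delta_n(x_1,\dots,x_n)-\mathds{1}_{\{x_n>K_m\}}$. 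This $\tilde\Delta_n$ is still a measurable function of $(x_1,\dots,x_n)$, so $\tilde\Delta$ is progressively measurable. Substituting the convexity inequality into the superhedging inequality shows that the same static weights $\lambda$, together with $\tilde\Delta$, superhedge $h_{m,n}(x)=(x_n-K_m)^+$. Hence $\lambda\in\Lambda^{+}_{m,n}$.

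Financially, the argument says that a $(K_m,T_{n+1})$ call dominates the $(K_m,T_n)$ call up to a self-financing trade in the underlying between $T_n$ and $T_{n+1}$. This is the calendar-spread argument already used for \cref{asm:bidAskConvexOrder}.

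The only delicate point is checking that the modified strategy stays admissible in the sense of \eqref{eq:admissibleDual}. In particular, adding the bounded term $-\mathds{1}_{\{x_n>K_m\}}$ must not break any integrability or measurability requirement. Since admissibility there only asks that the inequality hold pointwise with a progressively measurable $\Delta$, this is immediate. No linear-growth conditions on $\Delta$ are imposed, and the static profiles $\psi^{\lambda}$ are unchanged, so they remain in $\textnormal{CVX}_L$.

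If the infimum in \eqref{eq:enhancedAskAcrossMat} were taken over an empty set for some $(m,n+1)$, the inequality would hold trivially with value $+\infty$. I do not expect this case to arise, since $\lambda^{\rm a}_{0,n+1}=1$ with all other weights zero superhedges.
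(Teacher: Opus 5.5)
Your proposal is correct and follows the paper's proof: you establish the inclusion $\Lambda^{+}_{m,n+1}\subseteq\Lambda^{+}_{m,n}$ by applying the calendar (subgradient) inequality for $x\mapsto(x-K_m)^+$ at $x_n$, and you absorb the extra term through the modified hedge $\tilde\Delta_n=\Delta_n-\mathds{1}_{\{x_n>K_m\}}$. Your additional remarks on progressive measurability of $\tilde\Delta$ and on nonemptiness of $\Lambda^{+}_{m,n+1}$ are valid but not needed beyond what the paper states.
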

\begin{proof} Fix $n<N$ and $m\in \{0,\ldots,M\}$. We argue that $\Lambda_{m,n+1}^{+} \subseteq \Lambda_{m,n}^{+}$, leading to $\underline{c}^{\rm a}_{m,n} \le \underline{c}^{\rm a}_{m,n+1}$ in view of \eqref{eq:enhancedAskAcrossMat}.
Let $\lambda \in \Lambda_{m,n+1}^{+}$, that is, there exists a Delta hedging strategy  $\Delta$ such that 
    \begin{equation} \label{eq:superhedgeCalendar}
\sum_{i=1}^N (\psi_{i}^{\lambda^{\rm a}} - \psi_{i}^{\lambda^{\rm b}})(x_i) + \sum_{i=1}^{N-1} \Delta_i(x_1,\ldots,x_i)(x_{i+1} - x_i) \ge (x_{n+1}- K_m)^+\quad  \forall x_1,\ldots, x_N.
\end{equation}
   Next, we invoke a critical inequality typically used to rule out  calendar arbitrage: 
    \begin{equation} \label{eq:calendar_ineq}
(x_{n+1} - K_m)^+ -\mathds{1}_{\{x_n > K_m\}}(x_{n+1} - x_n) \ge    (x_n - K_m)^+, \quad \forall \ x_n, x_{n+1}. 
\end{equation}
Display \eqref{eq:calendar_ineq} %
follows by noting that $\mathds{1}_{\{x_n>K_m\}}$ is a subgradient of  $x\mapsto (x-K_m)^+$ at $x = x_n$. 
We then combine %
\eqref{eq:superhedgeCalendar} and   \eqref{eq:calendar_ineq} to arrive at 
$$\sum_{i=1}^N (\psi_{i}^{\lambda^{\rm a}} - \psi_{i}^{\lambda^{\rm b}})(x_i) + \sum_{i=1}^{N-1} \tilde{\Delta}_i(x_1,\ldots,x_i)(x_{i+1} - x_i) \ge (x_{n}- K_m)^+, $$
with $\tilde{\Delta}_n(x_1,\ldots,x_n) = \Delta_n(x_1,\ldots,x_n) - \mathds{1}_{\{x_n > K_m\}}$,  while $\tilde{\Delta}_i = \Delta_i$ if $i\ne n$. Hence $\lambda \in \Lambda_{m,n}^{+}$. 
\end{proof}

One can then adapt the construction of the ask measure in Section~\ref{sec:existenceAsk} to the multiple-maturity setting by linearly interpolating the enhanced quotes. To extrapolate beyond the highest available strike $K_M$ without inducing crossings between maturities, we use the flattest left slope  at $K_M$ among all maturities, namely, 
$$\max_{1 \le n \le N} \partial^-\underline{c}^{\rm a}(K_M,T_n), \quad \partial^-\underline{c}^{\rm a}(K_M,T_n) = \frac{\underline{c}_{M,n}^{\rm a}-\underline{c}_{M-1,n}^{\rm a}}{K_M-K_{M-1}}. $$ This yields ask marginals $\mu_1^{\rm a}, \ldots, \mu_N^{\rm a}$ that are nondecreasing in the convex order. In other words, they form a peacock, as announced in Remark~\ref{rem:peacockEnvelopes}. The same result holds for the bid quotes, except that the corresponding peacock may live in the space of signed measures.

\subsection{Comparison with other Approaches}
In this section, we review related work on constructing arbitrage-free option-price surfaces and marginal distributions from finite market quotes. These calibration approaches are complementary to the mixture-of-log-normals (MLN) calibration introduced in Section~2.1. The MLN approach produces separate bid and ask call-price curves over all strikes and can accommodate cases in which only one side of a quote is available. This parametric calibration also has limitations. In particular, the calibrated marginals may depend on the chosen family and fitting criterion, and cross-maturity convex-order consistency is not automatically guaranteed unless it is imposed during or after calibration. Other calibration methods, including nonparametric and LP-based procedures mentioned below, may be used instead, provided that they produce admissible bid and ask marginals. We remark that, conditional on the resulting bid and ask marginals, BAMOT remains model-free with respect to the joint dynamics and provides a continuum formulation, strong duality, and convergence guarantees. 

Although the dual value should be interpreted as the superhedging cost in the calibrated market, rather than directly in the market of raw quotes, this formulation is consistent with market practice. In particular, Bloomberg provides bid--ask implied volatility skews that can be used to construct fitted marginal distributions. Pricing and hedging can then be conducted relative to these calibrated distributions rather than directly against every raw market quote. Thus, while consistency at the quote level remains important, the fitted-marginal formulation is relevant to the pricing and hedging problem considered in the present paper.

\paragraph{Consistency of an option price surface with a martingale model.}
Carr and Madan~\cite{CarrMadan2005} give no-arbitrage conditions for vertical, butterfly, and calendar spreads across finitely many maturities and a (common) countable sequence of strikes $K_m$ going to infinity. These conditions yield marginals increasing in convex order and hence a Markov martingale matching the quoted prices. On the other hand, Davis and Hobson~\cite{DavisHobson2007} characterize consistency for arbitrary finite sets of strikes and maturities under deterministic interest rates and dividends. More closely
related is the work by Cousot~\cite{Cousot2007}. The author treats bid--ask quotes at maturity-dependent finite strike sets: conditions on vertical, calendar vertical, and calendar butterfly spreads ensure an arbitrage-free model whose option prices lie within the quoted intervals. However, unlike our approach, this selects a single price from each bid--ask interval rather than constructing separate bid and ask marginals.

\paragraph{Arbitrage-free calibration and repair.}
Given raw market quotes that may violate no-arbitrage conditions, Cohen et al.~\cite{CohenReisingerWang2020} use an LP to minimally adjust reference prices, such as mid-prices, subject to a collection of no-arbitrage constraints. In their formulation, the bid--ask bounds enter the objective as soft constraints. Consequently, the repaired prices may fall outside the quoted spreads when no arbitrage-free price surface lies entirely within them.

A more direct approach is to jointly optimize over discrete option prices and a risk-neutral measure through a finite-dimensional LP \cite{BayraktarHanNorgilas2026}. When feasible, the selected prices lie within their bid--ask intervals, while the martingale constraints enforce convex-order consistency across maturities. As formulated, however, the method requires both a bid and an ask quote for each option and yields a single marginal at each maturity, rather than separate bid and ask marginals. At a single maturity, model-free bounds and superhedging duality can also be derived from finitely many derivative-price constraints, including bid--ask spreads, through linear semi-infinite optimization \cite{NeufeldAntonisXiang2023}.

Unlike the preceding two approaches, which operate directly on finite market quotes, Alfonsi et al.~\cite{Alfonsi2019sampling,Alfonsi2020sampling} start from discrete approximations of probability measures and modify them through finite-dimensional optimization so that the resulting measures are in convex order. Their procedure is therefore applied after marginal estimation and may alter both the measures and their supports; it neither extracts marginals from finite quotes nor preserves exact calibration to those quotes.

\section{Technical Results and Proofs}\label{app:proof}

The first two lemmas are standard results related to the convex order. Given $\mu\in\cP_1$, we introduce the sublevel set
\begin{equation*}%
\underline{\cQ}(\mu)
:= \left\{ \mu' \in \cP_1 \,\middle|\, \mu' \cleq \mu \right\}.
\end{equation*}
Observe that $\underline{\cQ}(\mu) \subseteq \cP_1(A)$ whenever $A := \supp(\mu)$ is an interval in $\R_+$.

\begin{lemma}\label{lemma:continuity}
Let $f\in \cC_L(\R_+)$ and $\underline{\cQ}(\mu^{\rm a}) \ni \mu_n \overset{w}{\to} \mu$. Then $\mu \in  \underline{\cQ}(\mu^{\rm a}) $ and $\mu_n(f) \to \mu(f)$.
\end{lemma}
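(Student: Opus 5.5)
The plan is to reduce everything to uniform integrability of the sequence $(\mu_n)$, which the convex order bound $\mu_n\cleq\mu^{\rm a}$ gives for free.

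\emph{Step 1: uniform control of the tails.} For $R>0$, the function $x\mapsto (x-R)^+$ is convex with linear growth. The constraint $\mu_n\cleq \mu^{\rm a}$ therefore gives
\[
\sup_n \int (x-R)^+\,\mu_n(\rd x)\le \int (x-R)^+\,\mu^{\rm a}(\rd x)\xrightarrow[R\to\infty]{}0,
\]
where the limit follows by dominated convergence since $\mu^{\rm a}\in\cP_1$. Next, on $\{x\ge 2R\}$ we have $x\,\mathbf 1_{\{x\ge 2R\}}\le 2(x-R)^+$. Hence
\[
\sup_n\int_{\{x\ge 2R\}} x\,\mu_n(\rd x)\to 0 \quad\text{as } R\to\infty,
\]
which is uniform integrability of the identity map under $(\mu_n)$.

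\emph{Step 2: convergence of integrals.} Let $f\in\cC_L$. We have $|f(x)|\le \|f\|_L(1+x)$, so the family $\{f\}$ is uniformly integrable under $(\mu_n)$ by Step 1. Weak convergence $\mu_n\overset{w}{\to}\mu$ together with uniform integrability yields $\mu_n(f)\to\mu(f)$. Concretely, this is the standard truncation argument. Write $f=f\chi_R+f(1-\chi_R)$ with a continuous cutoff $\chi_R$ equal to $1$ on $[0,R]$ and $0$ on $[2R,\infty)$. The first part is bounded and continuous, so its integrals converge. The second part is bounded by $2\|f\|_L\, x\mathbf 1_{\{x\ge R\}}$ for $R\ge1$, so its integral is small uniformly in $n$ by Step 1. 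For the limit $\mu$ we also need this smallness; it follows by Fatou or the portmanteau theorem applied to the lower semicontinuous function $(x-R)^+\wedge k$, with $k\to\infty$. In particular $\mu\in\cP_1$, since $\int x\,\rd\mu\le\liminf_n\int x\,\rd\mu_n\le \int x\,\rd\mu^{\rm a}$.

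\emph{Step 3: the limit stays in $\underline{\cQ}(\mu^{\rm a})$.} Every $\psi\in\textnormal{CVX}_L$ lies in $\cC_L$, because convex functions on $\R_+$ are continuous in the interior. At the endpoint $0$, a convex function with linear growth is finite but could jump upward. To avoid this, we use the characterization \eqref{eq:callCVX}: it suffices to test call payoffs $(x-K)^+$ together with the identity $x$, and these are continuous. By Step 2,
\[
\mu\big((x-K)^+\big)=\lim_n\mu_n\big((x-K)^+\big)\le\mu^{\rm a}\big((x-K)^+\big)
\]
for every $K\ge 0$. Likewise the means satisfy $\mu(x)=\lim_n\mu_n(x)=\mu^{\rm a}(x)$, since each $\mu_n$ has the same barycenter as $\mu^{\rm a}$. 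Hence $\mu\cleq\mu^{\rm a}$.

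The main obstacle is Step 1, the uniform integrability, since weak convergence alone does not control linear-growth test functions. Once that tail bound is in place, Steps 2 and 3 are routine.
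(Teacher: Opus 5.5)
Your proof is correct and follows essentially the same route as the paper. You truncate with a cutoff $\chi_R$, bound the tail term uniformly in $n$ by a multiple of a call payoff using $\mu_n\cleq\mu^{\rm a}$, and then pass to the limit in call payoffs and the identity before concluding via \eqref{eq:callCVX}. The differences are cosmetic: you phrase the tail bound as uniform integrability and use $1+x\le 2x$ on the tail where the paper uses Markov's inequality, and the muddled opening sentence of your Step~3 about $\textnormal{CVX}_L\subseteq\cC_L$ is never used, so it can simply be deleted.
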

\begin{proof}
By the weakly lower semi-continuity of $\cP(\R_+)\ni \nu \mapsto \int x \dd \nu$ \citep[see, e.g.,][Proposition~7.1]{santambrogio2015optimal}, we conclude that $\mu \in \cP_1$. For any $R>0$, let $\chi_R$ be a continuous cutoff function with $\chi_R =1$ on $[0,R]$, $0\leq \chi_R\leq 1$ on $(R,R+1)$, and $0$ outside $[0,R+1]$. Then it holds that
\begin{align*}
    \left|\int f\dd \mu_n - \int f\dd \mu\right| \leq  \left|\int f\chi_R \dd (\mu_n- \mu)\right| + \left|\int_{\R_+} f(1-\chi_R)\dd \mu_n\right| + \left|\int_{\R_+} f(1-\chi_R)\dd \mu\right|.
\end{align*}
The first term converges to 0 by the definition of weak convergence. We now control the second term. Since $|f(1-\chi_R)(x)| \leq |f(x)|\mathds{1}_{\{x>R\}}\leq \lVert f \rVert_{L}(1+x) \mathds{1}_{\{x>R\}}$ and  $x\mathds{1}_{\{x>R\}} \leq 2(x - R/2)^+$ for all $x\in \R_+$, %
\begin{align*}
    \left|\int_{\R_+} f(1-\chi_R)\dd \mu_n\right| &\leq  \lVert f \rVert_{L} \int_{\R_+\setminus [0,R]}\left(1+x\right)\mu_n(\rd x)\\
    & \leq  \lVert f \rVert_{L} \,\mu_n\left(\R_+\setminus[0,R]\right) +  \lVert f \rVert_{L}\int 2(x-R/2)^+ \mu_n(\rd x)\\
    & \leq \frac{\lVert f \rVert_{L}}{R}\int x\mu^{\rm a}(\rd x) + 2\lVert f \rVert_{L}\int (x-R/2)^+ \mu^{\rm a}(\rd x),
\end{align*}
where the last line follows from Markov's inequality and the convexity of $x \mapsto (x - R/2)^+$, $x\mapsto x$. 
 Since $(x - R/2)^+ \to 0$ pointwise as  $R\to \infty$, $(x - R/2)^+ \leq x$ and $\mu^{\rm a} \in \cP_1$, we conclude by the dominated convergence theorem that
 \[
\limsup_{R\to \infty} \sup_{n}  \left|\int_{\R_+} f(1-\chi_R)\dd \mu_n\right|=0.
 \]
The third term can be controlled using similar arguments since $\mu \in \cP_1$. 
Finally, applying the convergence established above to $f(x)=(x-K)^+$ for $K\geq 0$, and using \eqref{eq:callCVX}, we conclude that $\mu\in\underline{\cQ}(\mu^{\rm a})$.
\end{proof}

\begin{lemma}\label{lemma:convex-order-topology}
    Let $\mu^{\rm a}, \mu^{\rm b} \in \cP_1$ satisfy $\mu^{\rm b} \cleq \mu^{\rm a}$. Then the sets  $\underline{\cQ}( \mu^{\rm a})$ and $\cQ(\mu^{\rm b}, \mu^{\rm a})$ are convex and  compact with respect to $\cW_1$. 
\end{lemma}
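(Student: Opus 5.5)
Convexity of both sets is immediate, since the defining constraints are linear in the measure. If $\mu',\mu''\cleq\mu^{\rm a}$ and $\lambda\in[0,1]$, then for every $\psi\in\textnormal{CVX}_L$ linearity of $\nu\mapsto\nu(\psi)$ gives
\[
(\lambda\mu'+(1-\lambda)\mu'')(\psi)\le\mu^{\rm a}(\psi).
\]
The lower constraint $\mu^{\rm b}(\psi)\le\nu(\psi)$ is handled in the same way. So the real content is $\cW_1$-compactness. I will prove it in three steps: relative compactness (tightness plus uniform integrability of first moments), closedness under weak convergence via \cref{lemma:continuity}, and the standard fact that these upgrade weak convergence to $\cW_1$ convergence.

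\emph{Step 1 (tightness and uniform integrability).} Let $\mu\in\underline{\cQ}(\mu^{\rm a})$. The function $x\mapsto x$ is affine, so $\E_\mu[X]=\E_{\mu^{\rm a}}[X]=:m$. Markov's inequality then gives $\mu((R,\infty))\le m/R$ uniformly in $\mu$, which is tightness. For the first moments I use the convex call payoff: as in the proof of \cref{lemma:continuity},
\[
\int_{(R,\infty)} x\,\mu(\rd x)\le 2\int (x-R/2)^+\mu(\rd x)\le 2\int (x-R/2)^+\mu^{\rm a}(\rd x)\to0
\]
as $R\to\infty$, by dominated convergence. Hence $\underline{\cQ}(\mu^{\rm a})$ is tight with uniformly integrable first moments. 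By Prokhorov's theorem every sequence has a weakly convergent subsequence, and by the characterization of $\cW_1$ convergence (weak convergence plus convergence of first moments, e.g.\ \cite{santambrogio2015optimal}) such a subsequence converges in $\cW_1$, provided the limit lies in $\cP_1$.

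\emph{Step 2 (closedness).} Suppose $\mu_n\in\underline{\cQ}(\mu^{\rm a})$ and $\mu_n\overset{w}{\to}\mu$. \Cref{lemma:continuity} gives $\mu\in\underline{\cQ}(\mu^{\rm a})$ directly, and also $\mu_n(f)\to\mu(f)$ for all $f\in\cC_L$; taking $f(x)=x$ yields first-moment convergence, so $\mu_n\to\mu$ in $\cW_1$. This shows $\underline{\cQ}(\mu^{\rm a})$ is sequentially compact, hence compact, in the metric space $(\cP_1,\cW_1)$.

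\emph{Step 3 (the set $\cQ(\mu^{\rm b},\mu^{\rm a})$).} This set is contained in $\underline{\cQ}(\mu^{\rm a})$, so it is enough to show it is $\cW_1$-closed. Let $\mu_n\to\mu$ in $\cW_1$ with $\mu^{\rm b}\cleq\mu_n$. For every $\psi\in\textnormal{CVX}_L$ we have $\psi\in\cC_L$, so \cref{lemma:continuity} gives $\mu_n(\psi)\to\mu(\psi)$; alternatively, $\cW_1$ convergence implies convergence of integrals of continuous functions of linear growth. Passing to the limit in $\mu^{\rm b}(\psi)\le\mu_n(\psi)$ gives $\mu^{\rm b}\cleq\mu$. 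Combined with Step 2, $\mu\in\cQ(\mu^{\rm b},\mu^{\rm a})$, so this set is a closed subset of a compact set and therefore compact.

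The main obstacle is Step 1, the uniform control of the tails. It is resolved by the convex-order domination of the call prices $(x-R/2)^+$ by those of $\mu^{\rm a}$. Everything else reduces to \cref{lemma:continuity} and linearity of the constraints.
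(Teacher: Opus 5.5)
Your proof is correct and follows essentially the same route as the paper. Both arguments get tightness of $\underline{\cQ}(\mu^{\rm a})$ from Markov's inequality and the fixed first moment, then use Prokhorov and the continuity lemma for weak closedness and first-moment convergence (hence $\cW_1$ convergence). Both then conclude by showing $\cQ(\mu^{\rm b},\mu^{\rm a})$ is a $\cW_1$-closed subset. The separate uniform-integrability estimate in your Step 1 is harmless but redundant, since the continuity lemma applied to $f(x)=x$ already gives the first-moment convergence you need.
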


\begin{proof}
Let us first analyze the set $\underline{\cQ}(\mu^{\rm a})$. Its convexity is immediate, while its weak closedness follows from \cref{lemma:continuity}.
Next, we show that any sequence $(\mu_n) \subset  \underline{\cQ}(\mu^{\rm a})$ contains 
a subsequence 
that converges in $\cW_1$ to some $\mu \in \underline{\cQ}(\mu^{\rm a})$. First, take $\psi(x) = |x|$, and  observe that $\sup_{\mu \in \underline{\cQ}(\mu^{\rm a})} \mu(\psi) = \mu^{\rm a}(\psi)$. 
Hence, applying Markov's inequality implies the tightness of $(\mu_n)$. By Prokhorov's theorem (see, e.g., \cite[Theorem~8.6.2]{Bogachev2007}) and by the weak closedness of $\underline{\cQ}(\mu^{\rm a})$, there exists a subsequence $(\mu_n)$, up to re-indexing, such that $\mu_n \overset{w}{\to} \mu \in \underline{\cQ}(\mu^{\rm a})$. Moreover,  the mapping $\underline{\cQ}(\mu^{\rm a}) \ni \mu  \to \mu(\psi) $ is weakly continuous by \cref{lemma:continuity}, implying that $\mu_n \to \mu$ with respect to  $\cW_1$.

Moving on to  the general case,  note that  $\cQ(\mu^{\rm b},\mu^{\rm a}) \subseteq \underline{\cQ}(\mu^{\rm a})$ for all $\mu^{\rm b} \cleq \mu^{\rm a}$. %
It is thus sufficient to show  the closedness of   $\cQ(\mu^{\rm b},\mu^{\rm a})$ under $\cW_1$-topology, which follows from the Kantorovich--Rubinstein theorem~\citep[Theorem~1.14]{villani2009optimal} and \eqref{eq:callCVX}.%
\end{proof}

\begin{proof}[Proof of \cref{prop:bidAskDistance}]\label{proof:bidAskProp}
        \textit{(i)} If $\mu \cleq \nu$, then clearly $\vec{d}(\mu,\nu) \le 0$. Moreover, choosing $\psi(x) = x$ yields
    $$0 \ge \vec{d}(\mu,\nu) \ge \E_{\mu}[X] - \E_{\nu}[X] = 0,$$ 
    recalling that measures in convex order have identical barycenters. %
    Conversely, suppose that $\vec{d}(\mu,\nu) = 0$. Then 
    $\E_\mu[X] = \E_\nu[X]$ and $\E_\mu[(K-X)^+] \leq \E_\nu[(K-X)^+]$  $\forall \ K\ge 0$, so we  conclude from \eqref{eq:callCVX} that $\mu \cleq \nu$. 

\textit{(ii)} The triangle inequality is clear. Also, $\vec{d}$ is nonnegative since $\psi\equiv 0$ is convex and $1$-Lipschitz.  Finally, suppose that $\vec{d}(\mu,\nu) = \vec{d}(\nu,\mu) = 0$. Then $\mu \cleq \nu$ and $\nu \cleq \mu$, which implies that $\mu = \nu$.

\textit{(iii)} 
Let $\vec{d}_C(\mu,\nu) := \sup \left\{\E_{\mu}[(X-K)^+]-\E_{\nu}[(X-K)^+] \, \mid \, K\ge 0 \right\}$. Let $\psi$ convex with $\textnormal{Lip}(\psi) \le 1$. From Carr--Madan's formula, $\psi$ can be decomposed as  
\begin{equation}\label{eq:carr-madan}
    \psi(x) = \psi(0) + \psi'(0) x + \int_{\R_+}(x-K)^+ \lambda^{\psi}(\rd K), 
\end{equation}
where $\psi'$ is the right derivative of $\psi$ and $\lambda^{\psi}$  the nonnegative Radon measure on $\R_+$ induced by $\psi'$. Note that since $\textnormal{Lip}(\psi) \leq 1$ and $\psi'$ is nondecreasing, we have $\lambda^\psi(\R_+) \leq 2$. In addition, $(\mu-\nu)(1) = (\mu-\nu)(x) =0$ as $\mu,\nu\in\cP_1$ with identical barycenters. Therefore, by~\eqref{eq:carr-madan}, we have
\begin{equation}\label{eq:carr-madan-2}
    \begin{aligned}
       (\mu-\nu)(\psi) & = \int \left(\int (x-K)^+ \lambda^\psi(\rd K)\right)(\mu -\nu)(\rd x).
    \end{aligned}
\end{equation}
Applying Fubini's theorem to~\eqref{eq:carr-madan-2} yields 
\begin{equation*}
\begin{aligned}
        (\mu-\nu)(\psi) &= \int_{\R_+} \big(\E_{\mu}[(X-K)^+] - \E_{\nu}[(X-K)^+] \big) \lambda^{\psi}(\rd K) \\
        & \leq \sup_{K\geq 0} \left(\E_{\mu}[(X-K)^+] - \E_{\nu}[(X-K)^+] \right) \lambda^{\psi}(\R_+) \\
        & \leq 2\vec{d}_C(\mu,\nu).
\end{aligned}
\end{equation*}
Next, we show that $2\vec{d}_C(\mu,\nu) \le \vec{d}(\mu,\nu)$. Fix $K\geq 0$, take $\psi(x) = |x-K|$. Then, using $|x-K| = 2(x-K)^+ + (K-x)$, we derive that
\begin{align*}
     \vec{d}(\mu,\nu) \geq (\mu-\nu)(\psi) = 2\left(\E_{\mu}[(X-K)^+] - \E_{\nu}[(X-K)^+] \right).
\end{align*}
Taking supremum over $K\geq 0$ on both sides, we conclude that $ \vec{d}(\mu,\nu) \geq 2\vec{d}_C(\mu,\nu)$, as desired.

\textit{(iv)} Recall  that $\cW_1(\mu,\nu) = \sup \left\{(\mu-\nu)(\varphi) \, \mid \, \lVert \varphi \rVert_{\text{Lip}} \le 1 \right\}$ by Kantorovich--Rubinstein theorem. Then  $\vec{d}(\mu,\nu) \le \cW_1(\mu,\nu) $ for all $\mu,\nu \in \cP_1$, hence $d(\mu,\nu) \le \cW_1(\mu,\nu)$. 
Let us move on to the second statement. For ease of presentation, we construct  sequences of discrete probability measures on $\R$ satisfying the desired property. Given their compact support, these measures can be shifted to $\mathbb{R}_+$, leading to  valid sequences in $\mathcal{P}_1$. Let $\xi = \frac{1}{2}(\delta_{-1} + \delta_1)$ be the Rademacher distribution. For each $n \ge 1$, introduce 
$$\mu_n = c_n \sum_{|m|\le n} \delta_{2m}, \; c_n = \frac{1}{2n+1}, \quad \nu_n = \mu_n * \xi := c_n \sum_{|m|\le n} \frac{\delta_{2m-1} + \delta_{2m+1}}{2}.$$
Observe that  $\nu_n$ is supported on the odd integers $-2n-1,-2n+1,\ldots,2n+1$, with probability $c_n$ except at the endpoints where the probability is halved. 
We first verify that $\cW_1(\mu_n,\nu_n) = 1$ irrespective of $n$. Indeed, consider the  butterfly spreads portfolio
$$\varphi(x) = \sum_{|m|\le n} B(x,2m), \quad B(x,K) = (x-K+1)^+ - 2(x-K)^+ + (x-K-1)^+.$$
Then $\varphi(2m) = 1$ for all $|m|\le n$, while $\varphi$ vanishes on the support of $\nu_n$. It is easy to check that $\varphi$ is $1$-Lipschitz, then we conclude from Kantorovich--Rubinstein theorem that 
$$\cW_1(\mu_n,\nu_n) \ge (\mu_n-\nu_n)(\varphi) = \mu_n(\varphi) = \frac{1}{2n+1}\sum_{|m|\le n} \varphi(2m) = 1. $$
Next, consider 
 the  coupling $\Q_n = \text{Law}(X_0,X_1)$, where $X_0\sim \mu_n$ and $X_1 = X_0 + Z$ with  $Z\sim \xi$  independent of $X_0$. Then $X_1\sim \nu_n$, and $\Q_n$ is a coupling between $\mu_n$ and $\nu_n$. Hence, 
 $$\cW_1(\mu_n,\nu_n) \le \E_{\Q_n}[|X_1 - X_0|] =\E_{\Q_n}[|Z|] = 1,$$
which shows that $\cW_1(\mu_n,\nu_n) = 1$. 
 
Moving to the bid--ask distance, observe that $\mu_n \cleq \nu_n$ as $\xi$ is centered. This can also be seen from the above martingale coupling and Strassen's theorem. Hence $\vec{d}(\mu_n,\nu_n) = 0$ by the second property. Flipping the roles of $\mu_n,\nu_n$, observe that $\vec{d}(\nu_n,\mu_n) = c_n = \frac{1}{2n+1}$. Indeed, %
for any $1$-Lipschitz convex function $\psi$,  we have  after rearranging that 
\begin{align*}
   \frac{1}{c_n} (\nu_n-\mu_n)(\psi) &= \sum_{m = -n +1}^{n} \psi(2m-1) -  \frac{1}{2}[\psi(2m-2)+\psi(2m)] \\[1em]
   &+ \frac{1}{2}[\psi(-2n-1) - \psi(-2n) + \psi(2n+1) - \psi(2n)].
\end{align*}
The first term is nonpositive using the (midpoint) convexity of $\psi$, while the second is bounded by 
$$ \frac{1}{2}\big[|\psi(-2n-1) - \psi(-2n)| + |\psi(2n+1) - \psi(2n)|\big]\le 1, $$
since $\psi$ is $1$-Lipschitz. As $\psi$ is arbitrary, we conclude that $\vec{d}(\nu_n,\mu_n) \le c_n $, where the upper bound is achieved, e.g., by  the strangle  $\psi(x) = (-2n-x)^+ + (x-2n)^+$; see \cref{fig:counterexample2}. As $\lim_{n\to \infty}c_n = 0$, the claim follows.

\textit{(v)} Recall from assertion (iv) that $\cW_1$ dominates the bid--ask distance. Hence, if  $(\mu_n) \subset \cP_1$ is a sequence such that $\lim_{n\to \infty}\cW_1(\mu_n,\mu) = 0$ for some limit $\mu\in \cP_1$, then $\lim_{n\to \infty}d(\mu_n,\mu) = 0$ as well. Conversely, suppose that $\lim_{n\to \infty}d(\mu_n,\mu) = 0$. First,  $\lim_{n\to \infty}  \E^{\mu_n}[|X|] = \E^{\mu}[|X|]$ using  the  test function $\psi(x) = |x|$. In particular, the sequence $(\mu_n)$ is tight. Moreover,  %
testing against call payoffs yield the convergence of the call functions $C_n(K) = \E^{\mu_n}[(X-K)^+]$ to $C(K) = \E^{\mu}[(X-K)^+]$ uniformly in $K$. Consequently, their second distributional derivatives $C_n''(K)$ converge in the vague topology to $C''(K)$. Together with the tightness of $(\mu_n)$, we gather that  $\mu_n\to \mu$ weakly.  The $\cW_1$ convergence then follows from  \cite[Theorem 6.9]{villani2009optimal}.  

\end{proof}

\begin{figure}
    \centering
    \caption{Discrete measures $\mu_n,\nu_n$, $n=1$,  and dual optimizers of $\cW_1(\mu_n,\nu_n)$ and $\vec{d}(\nu_n,\mu_n)$ ($\varphi$ and $\psi$, respectively) in the proof of \cref{prop:bidAskDistance}~(iv).}
    \includegraphics[width= 4in, height = 2in]{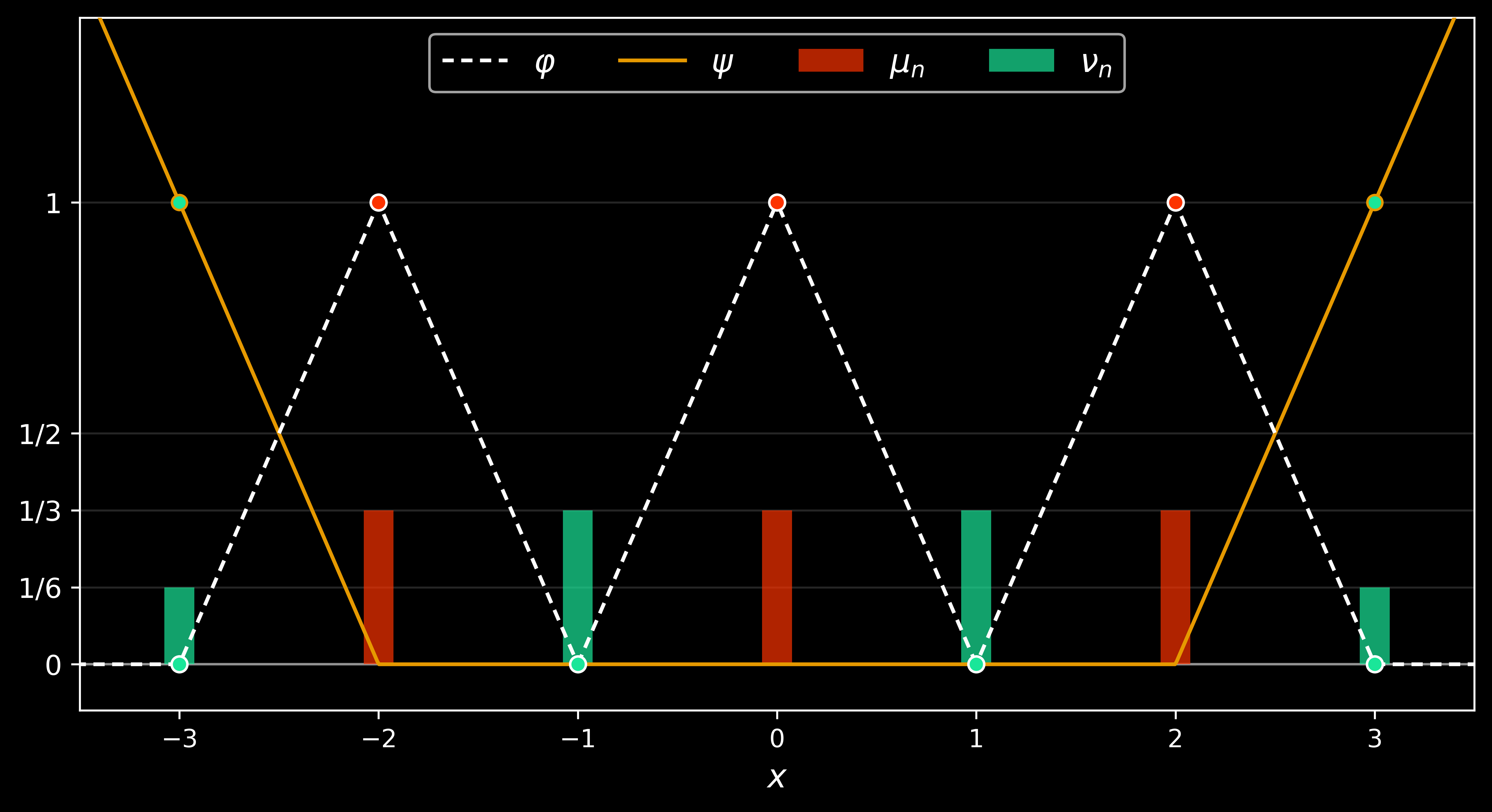}
    
    \label{fig:counterexample2}
\end{figure}

\begin{lemma}\label{lem:CVXEstimate} 
Let $\psi \in \textnormal{CVX}(\R) \cap \cC^{1}(\R)$ with $\psi'$ $M$-Lipschitz for some $M>0$. Then for any $\varphi \in \textnormal{CVX}(\R)$ satisfying
$$\sup_{x\in \R} |(\psi-\varphi)(x)| \le \varepsilon,$$
we have $$\sup_{x\in \R} \sup_{\gamma\in \partial \varphi(x)}|\gamma - \psi'(x)| \le 2\sqrt{M\varepsilon},$$
where $\partial \varphi(x)$ denotes the subdifferential of $\varphi$ at $x$.
\end{lemma}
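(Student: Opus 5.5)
The plan is a local, one-sided comparison argument at each point, with two supporting lines to pin the subgradient from both sides. Fix $x\in\R$ and $\gamma\in\partial\varphi(x)$, and set $\delta:=\gamma-\psi'(x)$. We want $|\delta|\le 2\sqrt{M\varepsilon}$.

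\emph{Upper bound on $\delta$.} Since $\gamma$ is a subgradient of the convex function $\varphi$, the supporting line inequality gives
\[
\varphi(x+t)\ge \varphi(x)+\gamma t \quad \text{for all } t\in\R .
\]
Because $\psi'$ is $M$-Lipschitz, the standard descent-lemma bound gives
\[
\psi(x+t)\le \psi(x)+\psi'(x)t+\tfrac{M}{2}t^2 .
\]
Subtracting the two inequalities and using $\sup|\psi-\varphi|\le\varepsilon$ at both $x$ and $x+t$ yields
\[
\gamma t-\psi'(x)t \le \varphi(x+t)-\varphi(x)-\psi(x+t)+\psi(x)+\tfrac M2 t^2\le 2\varepsilon+\tfrac M2 t^2 .
\]
In other words, $\delta t\le 2\varepsilon+\tfrac M2t^2$ for every $t\in\R$.

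\emph{Optimizing over $t$.} If $\delta>0$, take $t>0$, so that $\delta\le 2\varepsilon/t+Mt/2$. Choosing $t=2\sqrt{\varepsilon/M}$ balances the two terms and gives
\[
\delta\le \sqrt{M\varepsilon}+\sqrt{M\varepsilon}=2\sqrt{M\varepsilon}.
\]
If $\delta<0$, take $t=-2\sqrt{\varepsilon/M}$ instead; the same computation gives $-\delta\le 2\sqrt{M\varepsilon}$.

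So a single inequality, valid for all $t$ of both signs, controls $|\delta|$ directly; no separate lower-bound argument is needed.

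\emph{Degenerate case and conclusion.} If $\varepsilon=0$, letting $t\to0$ in $\delta t\le \tfrac M2t^2$ from both sides forces $\delta=0$. Taking the supremum over $\gamma\in\partial\varphi(x)$ and then over $x\in\R$ concludes the proof.

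The main step is recognising that the subgradient inequality for $\varphi$ should be paired with the quadratic \emph{upper} bound for $\psi$ coming from the Lipschitz derivative. Convexity of $\psi$ alone is not used; only $\varphi$'s convexity and $\psi$'s $C^{1,1}$ regularity enter the argument.
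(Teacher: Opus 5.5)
Your proof is correct and follows the paper's argument essentially line by line: you pair the subgradient inequality for $\varphi$ with the quadratic upper bound for $\psi$ coming from the $M$-Lipschitz derivative, obtain $(\gamma-\psi'(x))t \le 2\varepsilon + \frac{M}{2}t^2$, and optimize at $t=\pm 2\sqrt{\varepsilon/M}$. Stating one inequality valid for all $t\in\R$ and treating $\varepsilon=0$ separately makes your version slightly tidier than the paper's, which handles the negative direction by ``repeating the argument'' and leaves the degenerate case implicit.
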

\begin{proof}
Fix $x$,  let  $\gamma \in \partial \varphi(x)$. 
As $\psi'$ is $M$-Lipschitz, %
\cite[Theorem~9.22]{rockafellar1998variational}  implies that 
\begin{equation}\label{eq:quadEstimate}
     \psi'(x)\delta - \frac{M}{2}\delta^2 \le \psi(x + \delta) - \psi(x) \le  \psi'(x)\delta + \frac{M}{2}\delta^2,\quad \delta>0.
\end{equation}
On the other hand, from $\sup_{x\in \R} |(\psi-\varphi)(x)| \le \varepsilon$ and the convexity of $\varphi$, we derive that 
\[
\psi(x + \delta) + \varepsilon \ge \varphi(x + \delta)  \ge \varphi(x) + \gamma \delta \ge \psi(x) - \varepsilon + \gamma \delta.
\]
Subtracting $\psi(x)$ and applying the upper bound in \eqref{eq:quadEstimate},  we arrive at
\[
\psi'(x)\delta + \frac{M}{2}\delta^2 + \varepsilon \ge \gamma \delta - \varepsilon.
\]
Rearranging the terms gives 
\begin{equation}\label{eq:cvx-upper-bound}
     \gamma - \psi'(x) \le \frac{2\varepsilon}{\delta} + \frac{M}{2}\delta.
\end{equation}
Minimizing the right-hand side of~\eqref{eq:cvx-upper-bound} over $\delta>0$ yields the optimal $\delta^* = 2\sqrt{\varepsilon/M}$, which gives the upper bound 
\[\gamma - \psi'(x) \le 2\sqrt{M\varepsilon}.\] When $\delta<0$, repeating the argument for $\psi(x-\delta) - \psi(x)$ shows the lower bound 
\[
\gamma - \psi'(x) \geq -2\sqrt{M\varepsilon},
\]
as desired.
\end{proof}

\begin{lemma}\label{lem:BV-ibp}
Let $\nu$ be a finite signed measure on $\R_+$ with $\nu(\R_+)=0$ and let $h:\R_+\to\R$ have bounded variation $V(h)<\infty$. Set $h(0^-):=h(0)$ and let
$h^{\rm r}(K):=h(K^+)$ be the right-continuous modification of $h$. Denote by
$Dh^{\rm r}$ the signed measure induced by
$h^{\rm r}$ with $Dh^{\rm r}(\{0\})=0$, and define its nonatomic part by
\[
D^{\rm na}h^{\rm r}
:=
Dh^{\rm r}
-\sum_{K>0}\bigl(h(K^+)-h(K^-)\bigr)\delta_K.
\]
Then, set $G_\nu(K):=\nu([0,K])$ and $ G_\nu(K^-):=\nu([0,K))$, we have
\[
\begin{aligned}
\nu(h)
={}&-\int_{\R_+}G_\nu(K)\,D^{\rm na}h^{\rm r}(\rd K)
-\sum_{K\ge0}G_\nu(K^-)\bigl(h(K)-h(K^-)\bigr)\\
&+\sum_{K\ge0}G_\nu(K)\bigl(h(K)-h(K^+)\bigr).
\end{aligned}
\]
Consequently, it follows that
\[
|\nu(h)|
\le V(h)\,\sup_{K\ge0}|G_\nu(K)|.
\]
\end{lemma}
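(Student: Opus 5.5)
The plan is to prove the identity by a Lebesgue--Stieltjes integration by parts, splitting $h$ into its continuous part and its jumps. Because $h$ has bounded variation, it has left and right limits everywhere, with $h(K^-)=\lim_{L\uparrow K}h(L)$ and $h(K^+)=\lim_{L\downarrow K}h(L)$. Its jump set $J$ is countable, and $\sum_K\bigl(|h(K)-h(K^-)|+|h(K^+)-h(K)|\bigr)+|D^{\rm na}h^{\rm r}|(\R_+)\le V(h)$.

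\textbf{Step 1 (representation of $h$).} We write, for every $x\ge0$,
\[
h(x)=h(0)+\int_{(0,x]}D^{\rm na}h^{\rm r}(\rd K)+\sum_{0\le K\le x}\bigl(h(K)-h(K^-)\bigr)+\sum_{0\le K<x}\bigl(h(K^+)-h(K)\bigr).
\]
To verify this, we check that the right-hand side has the same left jumps, right jumps and diffuse part as $h$, and the same value at $0$. The convention $h(0^-)=h(0)$ kills the $K=0$ term of the first sum. The series converge absolutely by bounded variation.

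\textbf{Step 2 (integrate against $\nu$).} We integrate each term against $\nu$ and apply Fubini, which is justified because $|\nu|$ is finite and every term is dominated by $V(h)$. The constant term vanishes since $\nu(\R_+)=0$. For the diffuse term, $\nu\bigl(\{x\ge K\}\bigr)=-G_\nu(K^-)$. Since $D^{\rm na}h^{\rm r}$ has no atoms, and $G_\nu(K^-)\neq G_\nu(K)$ only on the countable set of atoms of $\nu$, we may replace $G_\nu(K^-)$ by $G_\nu(K)$ there. This gives $-\int G_\nu\,\rd D^{\rm na}h^{\rm r}$. For the left-jump sum, the indicator is $\mathds 1_{\{x\ge K\}}$, contributing $-G_\nu(K^-)\bigl(h(K)-h(K^-)\bigr)$. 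For the right-jump sum, the indicator is $\mathds 1_{\{x>K\}}$, with $\nu((K,\infty))=-G_\nu(K)$, contributing $-G_\nu(K)\bigl(h(K^+)-h(K)\bigr)=G_\nu(K)\bigl(h(K)-h(K^+)\bigr)$. Together these give exactly the stated formula.

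\textbf{Step 3 (estimate).} Bounding each term by its absolute value gives
\[
|\nu(h)|\le \sup_K\max\bigl(|G_\nu(K)|,|G_\nu(K^-)|\bigr)\,V(h).
\]
This uses that the three variation pieces add up to at most $V(h)$. Since $G_\nu(K^-)=\lim_{L\uparrow K}G_\nu(L)$, and at $K=0$ we have $G_\nu(0^-)=0$, the supremum of $|G_\nu(K^-)|$ is bounded by $\sup_K|G_\nu(K)|$. This yields the stated bound.

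The main obstacle is the bookkeeping in Step 1 at the jump points. Left jumps must be attached to the closed half-line $\{x\ge K\}$ and right jumps to the open half-line $\{x>K\}$, since otherwise the roles of $G_\nu(K)$ and $G_\nu(K^-)$ get swapped. One must also confirm that the variation splits additively into these three pieces. I would verify the representation first on step functions with finitely many jumps, and then extend it by absolute convergence.
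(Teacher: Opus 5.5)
Your proposal is correct and follows essentially the same route as the paper: you split $h$ into its diffuse part and its left and right jumps, pair these with $G_\nu(K^-)$ and $G_\nu(K)$ respectively, use that $D^{\rm na}h^{\rm r}$ is nonatomic to replace $G_\nu(K^-)$ by $G_\nu(K)$, and bound the result via the splitting of $V(h)$. The only difference is organizational: you derive the Stieltjes integration by parts directly, via Fubini, from an explicit half-line representation of $h$, whereas the paper quotes integration by parts for $h^{\rm r}$ and then adds the correction $\nu(h-h^{\rm r})=\sum_{K}(h(K)-h(K^+))\,\nu(\{K\})$.
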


\begin{proof}
Since $h$ has bounded variation, $h^{\rm r}$ has bounded variation as well and
$\sum_{K>0}|h(K^+)-h(K^-)|\le |Dh^{\rm r}|(\R_+)<\infty$.
Thus, $D^{\rm na}h^{\rm r}$ is well-defined. Applying integration by parts for the Lebesgue--Stieltjes integral gives
\[
\nu(h^{\rm r})
=-\int_{\R_+}G_\nu(K^-)\,Dh^{\rm r}(\rd K).
\]
The boundary terms vanish because $G_\nu(0^-)=0$, $\lim_{K\to\infty}G_\nu(K)=\nu(\R_+)=0$, and $h^{\rm r}$ is bounded.

Noticing that the function $h-h^{\rm r}$ is nonzero over at most countably many points, it holds that
\[
\nu(h-h^{\rm r})
=\sum_{K\ge0}\bigl(h(K)-h(K^+)\bigr)\nu(\{K\}).
\]
Using $\nu(\{K\})=G_\nu(K)-G_\nu(K^-)$ and the decomposition of $Dh^{\rm r}$, we obtain
\begin{equation}\label{eq:ibp-jumps}
\begin{aligned}
\nu(h)
={}&-\int_{\R_+}G_\nu(K^-)\,D^{\rm na}h^{\rm r}(\rd K)\\
&-\sum_{K>0}G_\nu(K^-)\bigl(h(K^+)-h(K^-)\bigr)\\
&+\sum_{K\ge0}\bigl(h(K)-h(K^+)\bigr)
\bigl(G_\nu(K)-G_\nu(K^-)\bigr).
\end{aligned}
\end{equation}
For each $K>0$, combining the jump terms in \eqref{eq:ibp-jumps} gives
$-G_\nu(K^-)(h(K)-h(K^-))+G_\nu(K)(h(K)-h(K^+))$. The same identity holds at $K=0$ since $G_\nu(0^-)=0$ and $h(0^-)=h(0)$. Moreover, $G_\nu(K)\ne G_\nu(K^-)$ only at the atoms of $\nu$, which form an at most countable set. Since $D^{\rm na}h^{\rm r}$ is nonatomic, the integrals of $G_\nu$ and $G_\nu(\cdot^-)$ against $D^{\rm na}h^{\rm r}$ coincide. This proves the first claim.

Finally, noting that $V(h)$ can be expressed as
\[
V(h)
=
|D^{\rm na}h^{\rm r}|(\R_+)
+\sum_{K\ge0}\left(
|h(K)-h(K^-)|+|h(K)-h(K^+)|
\right),
\]
then the claimed estimate now follows from the first claim, the triangle inequality, and the fact that $\sup_{K\ge0}|G_\nu(K^-)| \leq \sup_{K\ge0}|G_\nu(K)|$.
\end{proof}

\section{Discretized Problems} \label{app:LP}

Suppose the spot price of the underlying asset is $x_0$, and options at strikes $\{K_{m}\}_{m\in[M]}$ are available at time $T_i$, $i\in [N]$. By the same argument as in \cite{beiglbock2013model}, we could restrict the dual optimization to 
\[
\Psi^{\rm b,a}_{\cS}(h) =\left\{\psi^{\rm b,a} =(\psi^{\rm a},\psi^{\rm b}) \in (\cS)^{2N}\,\mid\,  \exists\,\, \Delta\,\,{\rm s.t.}\,\, {\rm P\&L}^h_{\psi^{\rm a}-\psi^{\rm b}, \Delta}\geq 0\right\}
\]
where
\[
\cS :=\left\{\psi:\R_+\to \R\,\mid\, \psi(x)=a+ bx + \sum_{m=1}^M c_m(x-K_m)^+,\,a, b \in \R,\, c_m\geq 0\right\}.
\]
Introducing the restricted superhedging problem as
\begin{equation*}\label{eq:dual-finite}
    D_{\rm fin} := \inf_{\psi^{\rm b, a}\in \Psi^{\rm b,a}_{\cS}(h)} \sum_{i=1}^N\mu_i^{\rm a}(\psi_i^{\rm a}) - \mu_i^{\rm b}(\psi_i^{\rm b}).
\end{equation*}
Write $c^{\rm s}(K_{m},T_i)$, $i\in[N]$, $m\in[M]$, for the call option price with strike $K_{m}$ and maturity $T_i$ under the marginal $\mu_i^{\rm s}$, ${\rm s}\in\{{\rm a},{\rm b}\}$. 
  As $\E_{\mu_i^{\rm b}}[X] = \E_{\mu_i^{\rm a}}[X] = x_0$ for all $i\in [N]$, the optimization becomes
\begin{align*}
   &\min a + \sum_{i=1}^N b_ix_0 +  \sum_{i=1}^N \sum_{m=1}^M c_{i,m}^{\rm a} c^{\rm a}(K_{m}, T_i) - \sum_{i=1}^N \sum_{m=1}^M c_{i,m}^{\rm b} c^{\rm b}(K_{m}, T_i)\\
   {\rm s.t.}\quad &
a  + \sum_{i=1}^N b_ix_i +\sum_{i=1}^N \sum_{m=1}^M \left(c_{i,m}^{\rm a} - c_{i,m}^{\rm b} \right)(x_i - K_{m})^+\\
&+ \sum_{i=1}^{N-1} \Delta(x_1,\ldots, x_i)(x_{i+1} -x_i) \geq h(x_1,\ldots, x_N),\quad  (x_1,\ldots, x_N) \in (\R_+)^N,\\
&a, b_i \in \R, c_{i,m}^{\rm a}, c_{i,m}^{\rm b} \geq 0.
\end{align*}
For numerical implementation, the state space is further discretized on a finite grid, yielding a finite-dimensional linear program. For completeness, we present the discretized version of the primal problem~\eqref{eq:primal1M} in the single-maturity case $N=1$, which is used to find a primal optimizer in \cref{subsec:digtial_spx}. To that end, we construct a finite grid $\{y_i\}_{i\in[I]}$ on the support of $\mu^{\rm a}$. Given the precomputed call option prices $\{c^{\rm s}(K_m)\}_{m\in[M],\,{\rm s}\in\{{\rm a},{\rm b}\}}$, the discretized problem reads:
\begin{align*}
    &\max_{p} \sum_{i=1}^I p_i\, h(y_i)\\
    \text{s.t.}\quad 
    & \sum_{i=1}^I p_i = 1, \qquad p_i \geq 0,\quad \sum_{i=1}^I p_i y_i = x_0,\\
    & \sum_{i=1}^I p_i (y_i - K_m)^+ - c^{\rm a}(K_m) \leq 0,\\
    & \sum_{i=1}^I p_i (y_i - K_m)^+ - c^{\rm b}(K_m) \geq 0, \qquad m \in [M].
\end{align*}

\linespread{1}
\bibliography{ref}

\end{document}